\newcommand{\R}{\mathbb{R}}
\newcommand{\I}{\mathbb{I}}
\newcommand{\Ss}{\mathcal{S}}
\newcommand{\so}{\mathfrak{so}}
\newcommand{\SO}{\mathrm{SO}}
\newcommand{\xx}{\bm{x}}
\newcommand{\rr}{\bm{\rho}}
\newcommand{\uu}{\bm{u}}
\newcommand{\VV}{\bm{V}}
\newcommand{\WW}{\bm{W}}
\newcommand{\MM}{\bm{M}}
\newcommand{\om}{\bm{\omega}}
\newcommand{\OM}{\bm{\Omega}}
\newcommand{\Aa}{\bm{\alpha}}
\newcommand{\Bb}{\bm{\beta}}
\newcommand{\Gg}{\bm{\gamma}}
\newcommand{\D}{\mathcal{D}}
\newcommand{\A}{\mathcal{A}}
\newtheorem{lemma}{Lemma}[section]
\newtheorem{proposition}[lemma]{Proposition}
\theoremstyle{definition}
\newtheorem{remark}[lemma]{Remark}
\numberwithin{equation}{section}
\title{Affine generalizations of the nonholonomic problem of a convex body rolling without slipping on the plane}
\author{M. Costa Villegas$^*$  and  L.~C.~Garc\'ia-Naranjo\footnote{Dipartimento di Matematica ``Tullio Levi-Civita", Universit\`a di Padova, Via Trieste 63, 35121 Padova, Italy.   \newline MCV: mariana.costavillegas@math.unipd.it \;\; LGN: luis.garcianaranjo@math.unipd.it}}
\begin{document}
\maketitle

\begin{abstract}
We introduce a class of examples which provide an affine generalization of the nonholonomic problem of a convex body rolling without slipping on the plane. We investigate dynamical aspects of the system such as existence of first integrals, smooth invariant measure and integrability, giving special attention to the cases in which the convex body is a dynamically balanced sphere or a body of revolution.
\end{abstract}

\vspace{0.5cm}
\noindent
{\em Keywords:} nonholonomic systems, rigid body dynamics, first integrals, invariant measure, integrability, chaotic behaviour. \\
{\em 2020 MSC}:  37J60, 70F25, 70E18, 70E40.

\section{Introduction}
\label{sect:intro}
The role of symmetries in the reduction \cite{Koiller, BKMM, Cantrijn, Koon, borisov2015-2}, existence of first integrals \cite{BKMM, Fasso2007, Fasso2008, Fasso2015, Balseiro}, invariant measures \cite{Blackall, Kozlov87, Stanchenko89, CaCoLeMa02, ZeBloch03, FedGNMa2015},
and integrability \cite{FedJov04,JovaChap,BatesHJ,FassoGNMontaldi} of nonholonomic systems with linear constraints in the velocities has been an active field of research in the last decades. Concrete examples have been very useful to illustrate, and often guide, such investigations. In this regard, the approach of Borisov, Mamaev et al \cite{borisov2002, borisov2002-2, BoMaBi13} has been very valuable.  In these papers, the authors consider general rolling problems and investigate dynamical aspects as a function of the parameters entering the shape and mass distribution of the bodies, reporting a hierarchy of behaviors ranging from integrable to chaotic.

The dynamics of nonholonomic systems whose constraints are affine, instead of linear, in the velocities is much less developed. A general mechanism, arising from symmetries, which leads to existence of an energy type integral, termed {\em moving energy},  was  only 
recently discovered in \cite{Fasso2016, Fasso2018} (see also \cite{borisov2015}). On the other hand, the existence of momentum type integrals is treated in \cite{Fasso2015} but more extensive investigations remain to be done. To the best of our knowledge, general existence conditions of an invariant measure for nonholonomic systems with affine constraints are unknown. 

In contrast with the linear case described above, there is no general class of examples to illustrate or guide such investigations
in the affine setting. The purpose of this paper is to attempt to fill this gap by introducing a general class of examples providing affine generalizations of the classical problem of a convex body rolling without slipping on the plane. Mathematically, the systems that we propose are obtained by taking as given a vector field $W$ on the surface of the body $\Ss$ and a vector field $V$ on the plane $\Pi$, which determine the velocity of the contact point as illustrated in Fig \ref{F:vw}. As we explain in section \ref{sect:description}, such system can be mechanically realized for specific vector fields $V$ and $W$ and for certain body shapes. In fact, our proposed system  provides a general framework for specific examples which had been
considered previously in the literature \cite{LewisMurray, tokieda, borisov2018, Bizyaev2019, kilin}.

\begin{figure}[h!]
    \centering
        \includegraphics[width=0.4\linewidth]{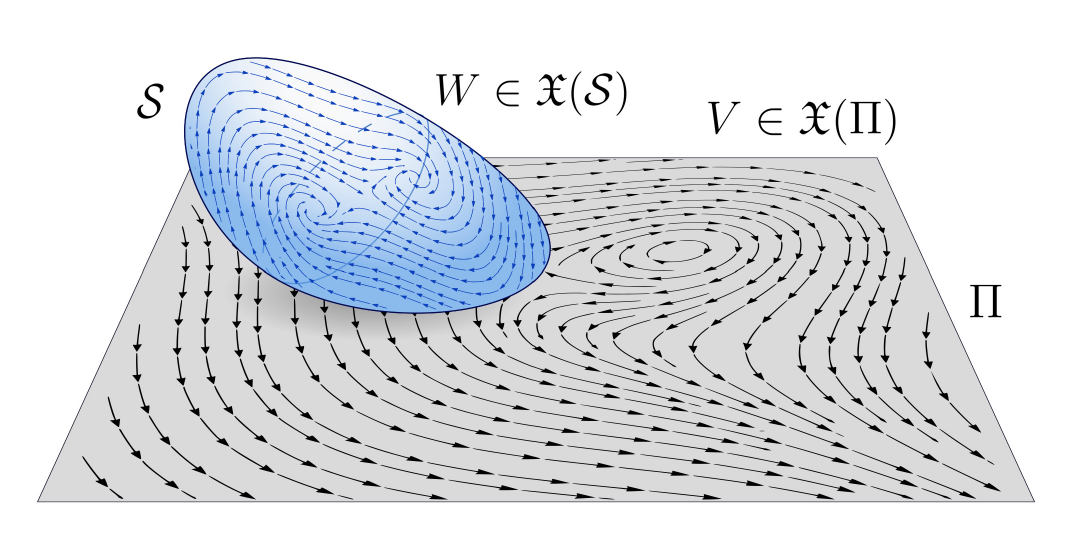}
        \caption{Graphic representation of the vector fields $V$ on the plane $\Pi$ and $W$ on the
        surface $\Ss$ of the convex body. The nonholonomic constraint enforces the 
        velocity of the contact point to be equal to the sum of both vectors
        at that point.}
        \label{F:vw}
\end{figure}

\subsection{Contributions and structure of the paper.} We begin by introducing the system in detail in Section \ref{sect:description},
describing its kinematics in subsection \ref{sect:kinematics} and deriving the equations of motion for general vector fields
$V,W$ in subsection \ref{sect:eqofmotion}. We also  indicate the corresponding $\rm{SE}(2)$-reduction in the case where
the vector field $V$ on the plane vanishes. We then proceed to identify some special cases of existence of a preserved moving energy in subsection
\ref{sect:movingenergy}. In section \ref{sect:Chaplyginshpere} we focus on the case in which the convex body is a dynamically balanced sphere
(i.e. a {\em Chaplygin sphere}) and we extend some results of \cite{borisov2018, Bizyaev2019, kilin} giving several dynamical contributions. Section \ref{sect:bodyofrev} focuses on the case in which the convex body is a solid of revolution and we show that the system is integrable for $V=0$ and a specific choice of $W$ (consistent with the symmetry). Finally, in section \ref{sect:homsphere} we treat the case in which the convex body is a homogeneous sphere, we prove a general result on existence of an invariant measure and analyze the
dynamics in detail for specific choices of $V$ and $W$.

We finally mention that the results of our paper are part of the Ph.D. thesis of the first author \cite{thesis} and some overlaps may be present.

\section{Description of the system}
\label{sect:description}

We consider the motion of a convex rigid body, with smooth surface  $\Ss$, on the infinite
horizontal plane  $\Pi:=\R^2\times \{0\}\subset \R^3$ subject to the following constraints:
\begin{enumerate}
\item[C1.] The body surface $\Ss$ and the plane $\Pi$ are in contact at  a unique point  at all time. 
\item[C2.]  The velocity of the material point of the body in contact with the plane equals the sum $V_{\xx}+W_{\rr}$, 
where  $V_{\xx}, W_{\rr}$,  are prescribed horizontal vectors 
(i.e.  tangent to $\Pi$) which 
respectively depend on the specific position $\xx\in \Pi$ of the contact point,
and on the specific material point $\rr\in \Ss$
which is in contact with the plane.  
\end{enumerate}
The first condition imposes  a standard holonomic constraint on the system. The second condition
is a generalization of the nonholonomic constraint of rolling without slipping,  which is illustrated 
in Fig \ref{F:vw}, and is convenient to restate  as:
 \begin{enumerate}
\item[C2$'$.]  We assume that there are two given vector fields, $V\in \mathfrak{X}(\Pi)$
and $W\in \mathfrak{X}(\Ss)$, which determine    the `slipping' velocity of the contact point
via the sum of their evaluations at the specific spatial point of contact $\xx\in \Pi$ and
the specific material point of contact $\rr\in \Ss$. 
\end{enumerate}

If both vector fields $V$ and $W$ vanish, we recover the classical problem of rolling
without slipping on the plane. On the other hand, we have the following two particular cases that are worth pointing out, can be physically realized, and will be analyzed in detail at several points
of the paper.
\paragraph{1. The uniformly rotating plane.}
 If $W=0$ and
\begin{equation*}
V(\xx)=\eta \, \xx \times \bm{e}_3,
\end{equation*}
we recover the model for  the rolling of a convex body on a plane that rotates with constant
angular velocity $\eta$ (see Fig \ref{Fig:rotating-plane}). 
Here  $\xx\in \Pi \subset \R^3$ is expressed  with respect to a fixed
spatial frame,  the vector $\bm{e}_3$ is normal to $\Pi$ and `$\times$' denotes the vector product in 
$\R^3$.
This problem has received great attention when the convex 
body is a homogeneous sphere \cite{Earnshaw, Pars,NeFu,LewisMurray,BKMM,Fasso2016}, but also in more generality  \cite{Fasso2018,borisov2018}.

\paragraph{2. The cat's toy mechanism.}
To the best of our knowledge, the case $W\neq 0$ has received very little attention. Assuming $V=0$ for simplicity, a
 mechanical realization, considered  recently by Bizyaev, Borisov and Mamaev \cite{Bizyaev2019},  
 is obtained as follows: suppose that an arbitrary rigid body is fastened inside
a spherical shell with its center of mass $C$ located at the geometric center of the shell, and suppose  that
the body    is set and kept in motion about
 an axis passing through $C$ with constant angular speed $\sigma$, by means of some device, see Fig \ref{F:Shell}. If the 
moments of inertia tensor of the
spherical shell are negligible  compared to the  rigid body's,  and the shell is put to roll without
slipping on the plane, the resulting system is modelled by our framework. Indeed, in this case 
the body surface $\Ss$ is a sphere and the
vector field $W\in \mathfrak{X}(\Ss)$ is
\begin{equation*}
W(\rr)= \sigma \rr \times \bm{E}_3.
\end{equation*}
 Here  $\rr\in \Ss\subset \R^3$ are coordinates on the surface of the sphere with respect to a frame
 centered at $C$ and fixed in the body (so $\| \rr \|=r$, where $r>0$ is the radius of the
 shell) and $\bm{E}_3$ is the unit vector in the direction of the axis of rotation, see Fig \ref{F:Shell}.

\begin{figure}[h!]
    \centering
        \includegraphics[width=0.4\linewidth]{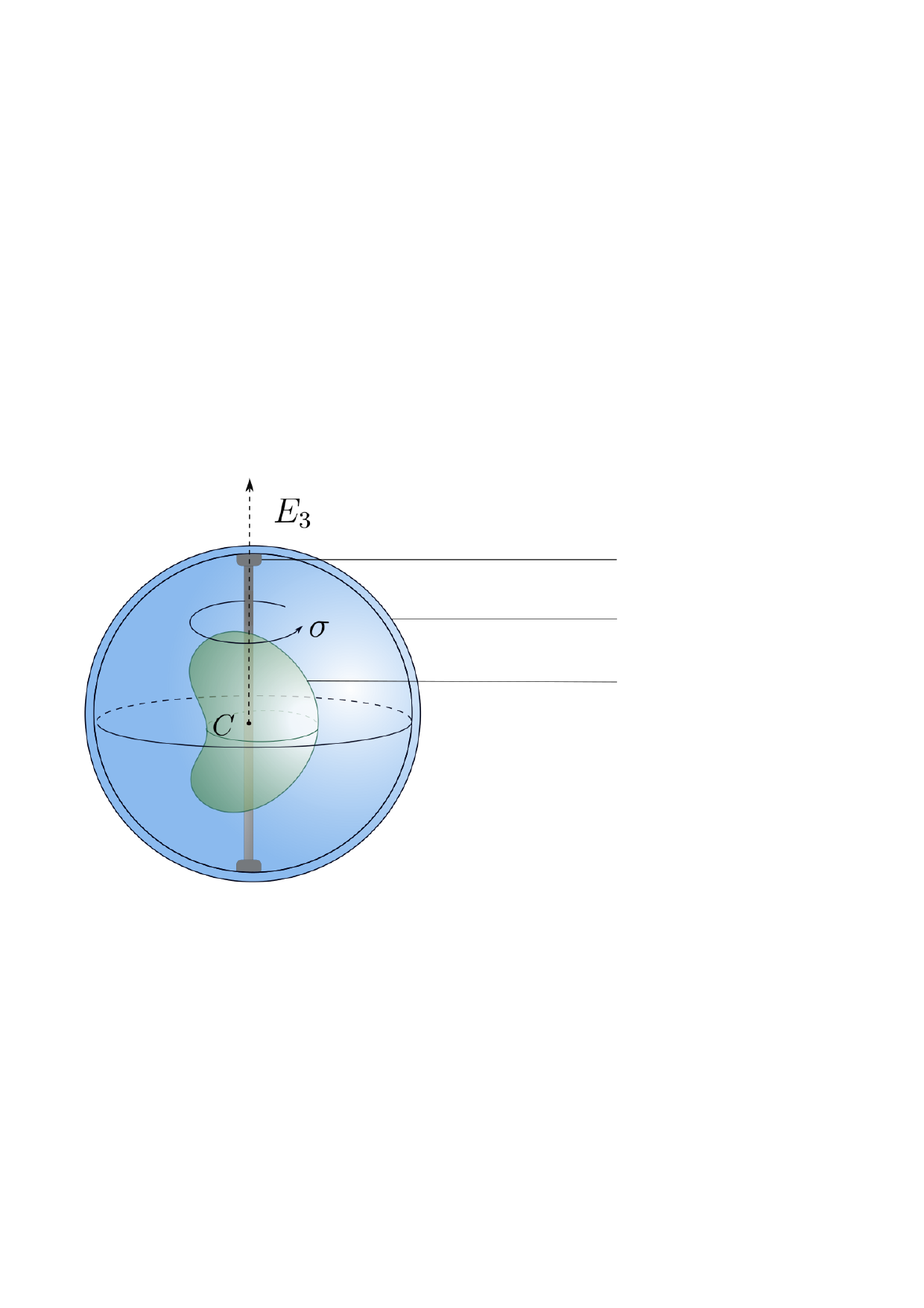}
        \put (-33,85) {\footnotesize{rotating device}}
         \put (-33,71) {\footnotesize{shell}}
         \put (-33,56) {\footnotesize{rigid body}}
        \caption{Graphic representation of the realization of the cat's toy mechanism. The center of mass $C$ of the rigid body coincides with the geometric center of the spherical shell.}
        \label{F:Shell}
\end{figure}

Several mechanical devices, similar to the one described above, are available in the market 
as toys for pets, especially  cats. The idea is that the cat would amuse itself chasing the unevenly rolling 
spherical shell around the living room. 
Inspired by this, we shall refer to the system described above as a sphere with a {\em cat's toy mechanism}. 

A natural generalization, easily accounted for in our setup, is to assume that shell 
is axially-symmetrical instead of spherical. To better align the presentation with our framework, 
it is convenient to think that the rigid body is steadily fastened to the shell, and 
it is the shell, instead of the rigid body,
 which is kept rotating with constant  angular speed $\sigma$ 
 about its symmetry axis by means of some device, see Fig.  \ref{Fig:rotating-shell}. We will 
 also use the terminology  ``cat's toy mechanism" to refer to this case.

\begin{figure}[h!]
    \centering
    \subfloat[][Convex body rolling on a rotating plane.\label{Fig:rotating-plane}]{\includegraphics[width=0.45\linewidth]{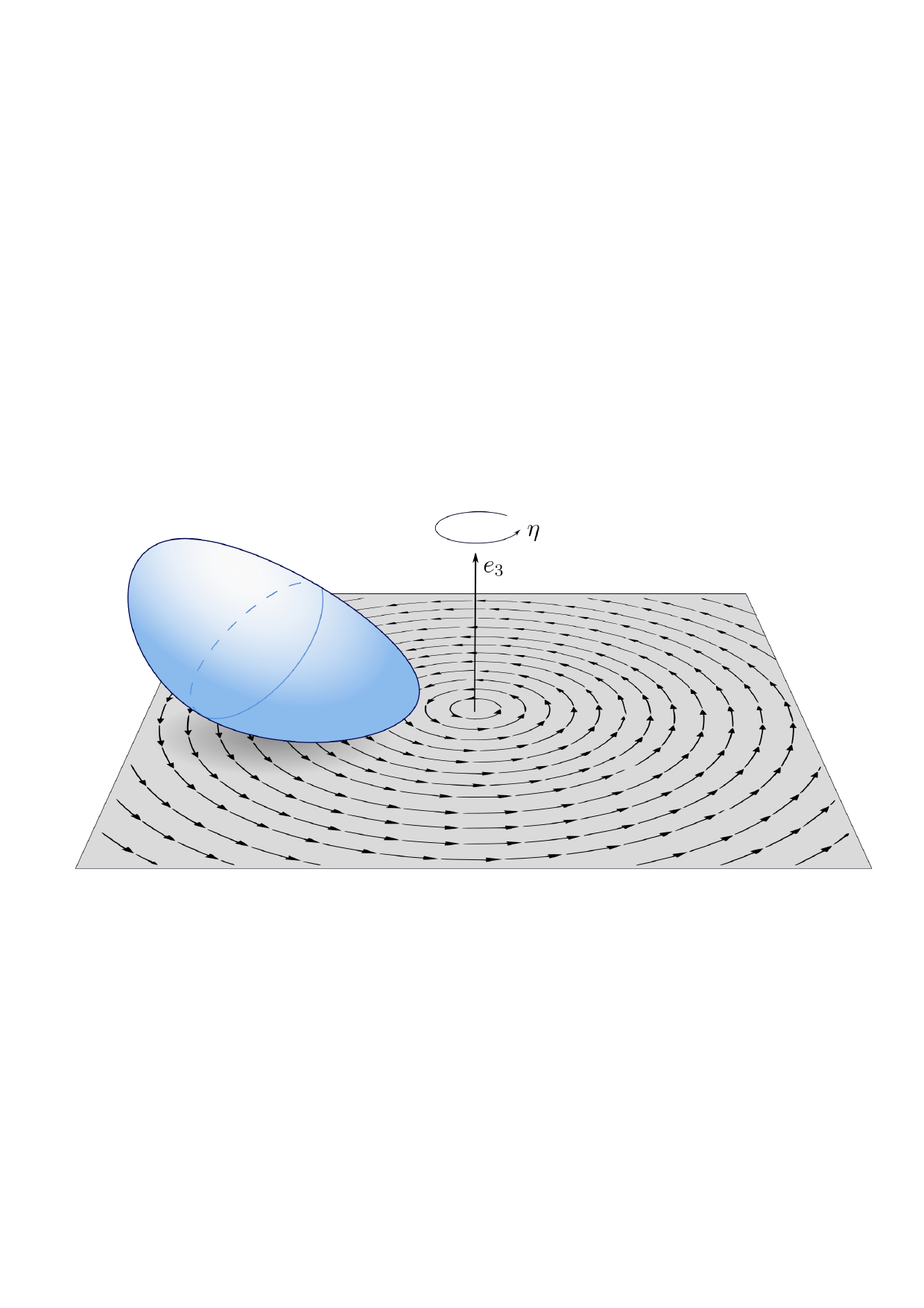}}  
    \qquad
    \subfloat[][The cat's toy mechanism. \label{Fig:rotating-shell}]{ \includegraphics[width=0.4\linewidth]{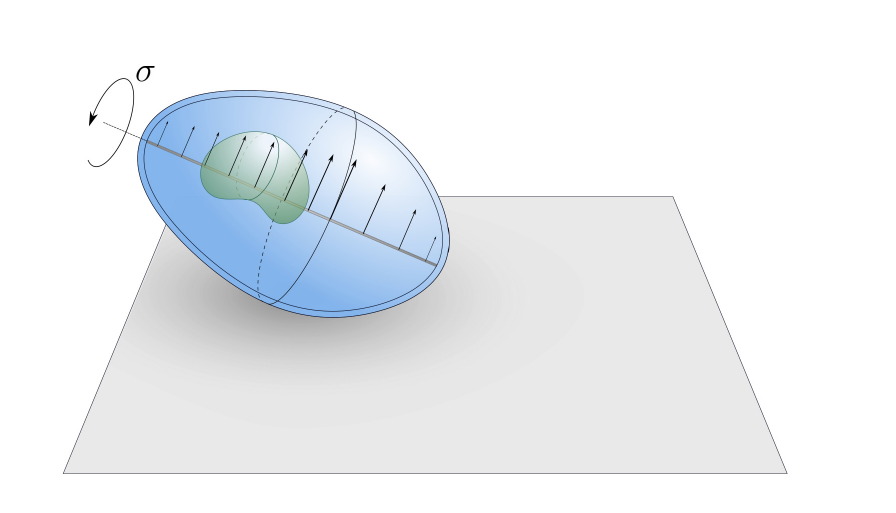}}    
    \caption{Particular instances of our framework (see text for details).}
\end{figure}

\bigskip

Our 
motivation to consider the problem in its full generality (i.e. for arbitrary convex body and arbitrary vector fields $V$ and $W$) is to illustrate 
dynamical phenomena that could guide the development of the theory for   
existence of invariant measures, existence of first integrals, integrability and chaotic behavior of
  mechanical  systems with affine nonholonomic constraints
which have received far less attention than their linear counterpart. 

We mention that general possibilities for the vector field $V$ are suggested in \cite{anais, tokieda} 
when the body is a homogeneous sphere.
We also mention \cite{kilin} where the authors consider the motion of a dynamically balanced
ball on a vibrating plane corresponding to a non-autonomous vector field $V$. However,
the systematic treatment of the problem that we present appears to be new.

\subsection{Kinematics}\label{sect:kinematics}

We fix a spatial frame $\Sigma_s=\{O; \bm{e}_1,\bm{e}_2,\bm{e}_3\}$  such that the horizontal plane $\Pi$ contains the origin $O$ and
is normal to $\bm{e}_3$.
We also fix a body frame $\Sigma_b=\{C; \bm{E}_1,\bm{E}_2,\bm{E}_3\}$ whose origin is  the center of mass $C$ of the convex body.
Unless otherwise specified, we will assume that the  
vectors $\bm{E}_i$ are   aligned with the body's principal axes of inertia.

The configuration of the body is specified by a pair  $(B,\xx)\in \SO(3)\times\mathbb{R}^3$ where  
$\xx\in \R^3$ are the coordinates of the vector $\overrightarrow{OP}$ from the origin $O$  to the
contact point $P$  (see Fig \ref{F:defvectors}) with respect to the spatial frame $\Sigma_s$, and the attitude matrix  $B\in \SO(3)$
determines the orientation of the body (i.e. it is the change of basis matrix between the bases $\{ \bm{e}_i\}$ and
$\{\bm{E}_i\}$ of $\R^3$).

The constraint C1 that the  body surface $\Ss$ and the plane $\Pi$ are in contact at all time at a unique point
leads to the holonomic constraint 
 \begin{equation}
 \label{eq:x30}
x_3=0,
\end{equation}
so for the rest of the paper we write
\begin{equation*}
\xx=(x_1,x_2,0)\in \Pi\subset \R^3.
\end{equation*}

It will  be convenient to think of the vector field $V\in \mathfrak{X}(\Pi)$ in constraint C2$'$ as the restriction to $\Pi\subset \R^3$ of a vector field on
$\R^3$ which is tangent to $\Pi$. For this reason, for each $\xx\in \Pi$, we will write
 \begin{equation}
 \label{eq:defV}
\VV_s(\xx)=(V_1(\xx),V_2(\xx),0) \in \R^3,
\end{equation}
as the coordinate expression of the vector field $V$ with respect to the spatial frame $\Sigma_s$. In particular, for the 
rotating plane with constant angular velocity $\eta$ about the origin $O$ illustrated in Fig \ref{Fig:rotating-plane}, we have
\begin{equation}
\label{eq:Vrot}
\VV_s(\xx)=\eta\bm{e}_3\times\xx.
\end{equation}

 Similarly, it will  be convenient to think of the vector field $W\in \mathfrak{X}(\Ss)$  as the restriction to $\Ss\subset \R^3$ of a vector field on
$\R^3$ tangent to $\Ss$. The coordinate expression for this vector field with respect to the body frame  $\Sigma_b$
is then given by 
 \begin{equation}
  \label{eq:defW}
\WW_b(\rr)=(W_1(\rr),W_2(\rr),W_3(\rr)) \in \R^3,
\end{equation}
where the tangency condition
\begin{equation}
\label{eq:tang}
\langle \WW_b(\rr),  \bm{n}_b(\rr) \rangle =0, 
\end{equation}
  holds for all $\rr\in  \Ss\subset \R^3$. In the above expressions,  $\rr \in \R^3$ are the coordinates of the vector $\overrightarrow{CP}$, connecting the center of mass and the contact point, with respect to the
  body frame $\Sigma_b$ (see Fig \ref{F:defvectors}), 
   $ \bm{n}_b(\rr)$ is the outward unitary normal vector to $\Ss$ at $\rr\in \Ss$
expressed in the body frame  $\Sigma_b$, and 
  $\langle \cdot , \cdot \rangle$ is the Euclidean scalar product in $\R^3$. 
 In particular, 
  for the cat's toy mechanism described in section \ref{sect:description}  and depicted in Fig \ref{Fig:rotating-shell}, we have 
  \begin{equation}\label{eq:Wrot}
\WW_b(\rr)=\sigma\rr\times\bm{E}_3,
\end{equation}
 where the third axis of the body frame $\Sigma_b$ is chosen along the direction of   the shell's axis of symmetry\footnote{Note
 that, in general, this choice of third axis may be incompatible with the assumption that $\{\bm{E}_i \}$ are aligned with the 
 principal axes of inertia.}.
  
 We emphasize that the coordinate expressions for the vector fields 
 $V$ and $W$ in \eqref{eq:defV} and \eqref{eq:defW} are given in distinct reference frames.  $V$ is
 naturally written the space
 frame $\Sigma_s$ whereas $W$ is naturally written in the body frame $\Sigma_b$.
  
 We now define a collection of  vectors which will be useful to describe the system and write the equations of motion ahead.
 This list may provide a convenient reference for the reader to come  back to when needed, so we 
 include  the definition of the vectors $\xx$ and $\rr$ given above. Some of the vectors are 
  illustrated in Fig \ref{F:defvectors}.
 \begin{enumerate}
  \item[$\bullet$]  $\xx \in \R^3$ are the coordinates of the vector $\overrightarrow{OP}$, connecting the origin of the spatial 
  frame and the contact point, with respect to the
  space frame $\Sigma_s$.
 \item[$\bullet$]  $\rr \in \R^3$ are the coordinates of the vector $\overrightarrow{CP}$, connecting the center of mass and the contact point, with respect to the
  body frame $\Sigma_b$.
\item[$\bullet$] $\Aa, \Bb, \Gg\in \R^3$ are the {\em Poisson vectors}, whose components  are the coordinates of the vectors   $\bm{e}_1,\bm{e}_2,\bm{e}_3$
with respect to the body frame $\Sigma_b$. They are pairwise orthogonal unit vectors forming the rows of the attitude matrix $B$ and given by
\begin{equation}
\label{eq:defPoissonvecs}
\Aa=B^{-1}\bm{e}_1, \qquad \Bb=B^{-1}\bm{e}_2,  \qquad \Gg=B^{-1}\bm{e}_3.
\end{equation}
  \item[$\bullet$] $\uu =(u_1,u_2,u_3) \in \R^3$ are the  coordinates  of the vector $\overrightarrow{OC}$, connecting the origin of the spatial frame and 
  the center of mass, with respect to the spatial frame $\Sigma_s$.
  
   \item[$\bullet$] $\bm{U} =(U_1,U_2,U_3) \in \R^3$ are the  coordinates  of the vector $\overrightarrow{OC}$ with respect to the body
    frame $\Sigma_b$ (so $\bm{U}=B^{-1}\uu$).
  
   \item[$\bullet$] $\om \in \R^3$ are the  coordinates  of the angular velocity vector  with respect to the spatial frame $\Sigma_s$.

  \item[$\bullet$] $\OM \in \R^3$ are the  coordinates  of the angular velocity vector  with respect to the body frame $\Sigma_b$ (so $\OM =B^{-1}\om$).
\end{enumerate}

 We recall (see e.g. \cite{MarsdenRatiuBook}) that the space and body coordinate representations of the angular velocity are defined by the left 
 and right trivializations:
 \begin{equation*}
B^{-1}\dot B= \hat \OM, \qquad \dot B B^{-1}= \hat \om,
\end{equation*}
where, for $\bm{a}\in \R^3$, the notation $\hat{\bm{a}}$ stands for the unique $3\times 3$ skew-symmetric real matrix
such that $\hat{\bm{a}} \bm{b}= \bm{a} \times \bm{b}$ for all  $\bm{b}\in \R^3$.
 It is well-known that the mapping $\hat{}:(\R^3,\times)\to \so(3)$ is a Lie algebra isomorphism.
The first of the above identities is in fact
equivalent to the following well-known evolution equations for the Poisson vectors
 \begin{equation}
 \label{eq:evolutionPoisson}
\dot \Aa=\Aa\times \OM, \qquad \dot \Bb=\Bb\times \OM,  \qquad \dot \Gg=\Gg\times \OM.
\end{equation}

\begin{figure}[h!]
    \centering
        \includegraphics[width=0.5\linewidth]{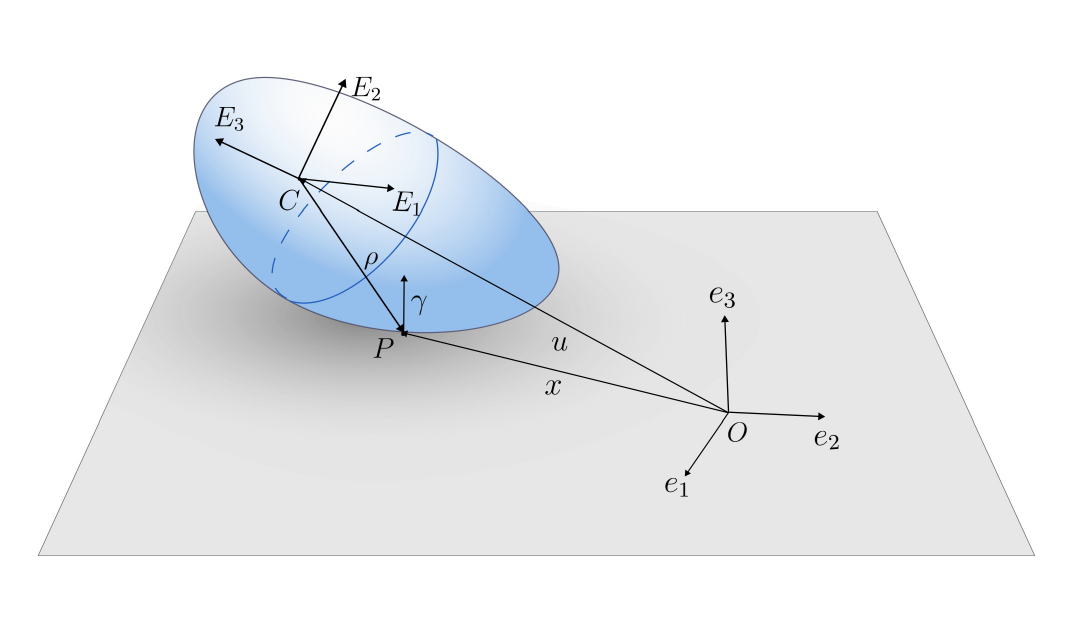}
        \caption{Graphic representation of the vectors $\rr, \Gg \in \R^3$ (which are 
        written with respect to the body frame $\Sigma_b=\{C; \bm{E}_1,\bm{E}_2,\bm{E}_3\}$) 
        and  $\xx, \uu \in \R^3$  (which are written with respect to the spatial frame $\Sigma_s=\{O; \bm{e}_1,\bm{e}_2,\bm{e}_3\}$).}
        \label{F:defvectors}
\end{figure}

 The relation
  \begin{equation}
  \label{eq:x-u}
 \xx=\uu+B\rr,
\end{equation}
follows  from the definitions of the vectors $\xx$, $\uu$ and $\rr$. Taking scalar product with $\bm{e}_3$ on both sides, shows 
 that the holonomic constraint $x_3=0$ may be rewritten as
\begin{equation}
\label{eq:holcu}
u_3=-\langle \rr, \Gg\rangle.
\end{equation}

Following the approach 
of previous references \cite{Cushman, borisov2002}, throughout this paper, we use the Gauss map $\bm{n}_b: \Ss\to S^2\subset \R^3$ of the surface of the body to obtain a functional
relation between $\rr$ and $\Gg$:
 \begin{equation}
 \label{eq:GaussMap}
\bm{n}_b(\rr)=-\Gg, \qquad  \rr=\bm{n}_b^{-1}(-\Gg).
\end{equation}
The validity of these relations follows from our assumption that the surface $\Ss$ of the body is smooth and convex, since it guarantees
that the Gauss map $\bm{n}_b$ is a diffeomorphism. Note that the tangency condition \eqref{eq:tang} implies
 \begin{equation*}
\langle \WW_b(\rr), \Gg \rangle =0.
\end{equation*}

Now, the velocity of the material point in contact with the plane, written in the space frame $\Sigma_s$, is given by $\dot \uu+B ( \OM \times \rr )$. 
Therefore, imposing  C2$'$ leads to the 
nonholonomic constraint:
\begin{equation}
\label{eq:const}
\dot \uu=B ( \rr \times \OM ) + \VV_s(\xx)+B\WW_b(\rr),
\end{equation}
where $\xx$ is expressed in terms of $\uu$, $B$ and $\rr$ by \eqref{eq:x-u}. 

Using the kinematic condition\footnote{Here and in what follows, $\dot \rr$ is shorthand for
$-D\bm{n}_b^{-1}(-\Gg)(\Gg \times \OM)$, which follows from \eqref{eq:GaussMap} and 
\eqref{eq:evolutionPoisson}.}  $\langle \dot \rr, \Gg \rangle =0$, and the properties 
of $\VV_s$ and $\WW_b$ mentioned above, it is an exercise to show that the third 
component of \eqref{eq:const} is  the time derivative of \eqref{eq:holcu}. 
Therefore, 
\eqref{eq:const} defines two independent nonholonomic constraints.

We now specify in more detail the geometry of the constraints. It is convenient to embed  the 
configuration space $Q$  of our problem  in $\R^3\times \SO(3)$ as the 5-dimensional 
submanifold
\begin{equation*}
Q=\{ (\uu,B) \in \R^3\times \SO(3) \, :\, \mbox{ equation \eqref{eq:holcu} holds} \, \}.
\end{equation*}
In the above definition of $Q$,  and in what follows, the vectors $\Gg$  and  $\rr$ 
should be understood as functions of the attitude matrix $B$
via the relations \eqref{eq:defPoissonvecs} and \eqref{eq:GaussMap}. The nonholonomic
constraints \eqref{eq:const} determine a rank 3 affine distribution $\A\subset TQ$ which is the phase space of our system and is convenient
to embed inside $T(\R^3\times \SO(3))=T\R^3\times T \SO(3)=\R^3\times \R^3 \times  \SO(3)\times \R^3$, where
the identification $  T \SO(3)=\SO(3)\times \R^3$ is done using the left trivialization. Specifically we have
\begin{equation*}
TQ=\{ (\uu,\dot \uu, B, \OM) \in \R^3\times \R^3 \times  \SO(3)\times \R^3 \, :\, \mbox{  \eqref{eq:holcu} and
the third component of \eqref{eq:const} hold} \, \},
\end{equation*}
and
\begin{equation*}
\A=\{ (\uu,\dot \uu, B, \OM) \in \R^3\times \R^3 \times  \SO(3)\times \R^3 \, :\, \mbox{ equations \eqref{eq:holcu} and
\eqref{eq:const} hold} \, \}.
\end{equation*}
As a manifold, the affine distribution $\A$ has dimension $8$. It will be convenient to express $\A=\D+Z$ where
$\D\subset TQ$ is the model linear distribution and $Z\in \mathfrak{X}(Q)$ is a vector field. These can be taken 
as
\begin{subequations}
\begin{align}
&\D=\{ (\uu,\dot \uu, B, \OM) \in \R^3\times \R^3 \times  \SO(3)\times \R^3 \, :\,  \dot \uu=B ( \rr \times \OM ) \mbox{
and  \eqref{eq:holcu} holds} \, \},\label{eq:D} \\
&Z (\uu, B)=(\VV_s(\xx)+B\WW_b(\rr), \bm{0}),\label{eq:Z}
\end{align}
\end{subequations}
where, as usual, $\xx$ is expressed in terms of $\uu$, $B$ and $\rr$ by \eqref{eq:x-u}.

\subsection{Equations of motion}
\label{sect:eqofmotion}

The Lagrangian $L:TQ\to \R$  is the sum of the kinetic energies of rotation and
translation minus the gravitational potential energy. Working with the conventions
of the previous section, we have
\begin{equation}\label{eq:Lagrangian}
L(\uu,B,\dot \uu, \OM)=\frac{1}{2}\langle \mathbb{I}\OM, \OM\rangle +\frac{m}{2}\Vert \dot{\uu}\Vert ^2+mg\langle\rr,\Gg\rangle,
\end{equation}
where $\mathbb{I}=\operatorname{diag}(I_1,I_2,I_3)$ is the  inertia tensor of the body, $m>0$ is its total mass,
 and $g>0$ is the gravitational constant. 

 We introduce the following vector $\MM\in \R^3$, which is written in the body frame $\Sigma_b$, and is 
 a generalization of the angular momentum
 of the body about its contact point:
 \begin{equation}\label{eq:defM}
\MM=\mathbb{I}\OM+m\rr\times(\OM \times\rr-B^{-1}\VV_s(\xx) -\WW_b(\rr)),
\end{equation}
where, according to \eqref{eq:x-u}, we have  $\xx=\uu+B\rr$. The dependence of  $\MM$ on the angular
velocity $\OM$ is affine linear, depending parametrically on $\uu$ and $B$, and may be inverted to obtain
\begin{equation}
\label{eq:OmfcnM}
\OM(\MM,\uu, B)=A(\Gg)\left(\MM+\bm{\zeta}(B,u)+\frac{m\langle\MM+\bm{\zeta}(B,u), A(\Gg)\rr\rangle}{1-m\langle A(\Gg)\rr,\rr\rangle}\rr\right),
\end{equation}
where the $3\times 3$ matrix $A(\Gg)$ and the vector $\bm{\zeta}(B,u)\in \R^3$ are given by 
\begin{equation}
\label{eq:A(gamma)}
A(\Gg)=(\mathbb{I}+m\Vert\rr\Vert^2\mathrm{id})^{-1}\qquad\text{and}\qquad\bm{\zeta}(B,\uu)=m \rr\times(B^{-1}\VV_s(\xx)+\WW_b(\rr)),
\end{equation}
where $\mathrm{id}$ denotes the $3\times3$ identity matrix. To make sense of the matrix $A$ as a function of $\Gg$ recall that
$\rr$ is expressed as a function of 
$\Gg$ by \eqref{eq:GaussMap}. On the other hand, we think of the vector $\bm{\zeta}$ as a function of $(B,\uu)$ since
$\xx$ may be expressed as 
 a function of $B$ and $\uu$ by \eqref{eq:x-u} (and $\rr$ is a function of $B$ through its dependence on $\Gg=B^{-1}\bm{e}_3$).
Considering that $\OM$ in \eqref{eq:OmfcnM} is written as a function of $(\MM,\uu, B)$ it would have been 
slightly more appropriate to write $A=A(B)$ in \eqref{eq:OmfcnM} but the notation $A=A(\Gg)$ is useful in the analysis of the equations below. 

The above expression for $\OM$ allows us to give the following alternative parametrization of  the affine distribution $\mathcal{A}$: 
$$\A=\{(\uu,\dot{\uu},B,\MM)\in\R^3\times\R^3\times \SO(3)\times\R^3 \, :\,\mbox{\eqref{eq:holcu} and
\eqref{eq:const} hold with } \OM=\OM(\MM,\uu,B) \, \}.$$

 \begin{proposition}\label{prop:eqofmotion}
        The equations of motion of the problem are the restriction of 
        \begin{subequations}   \label{eq:eqofmotion}  
            \begin{align}
                \dot{\MM}&=\MM\times\OM +m\dot{\rr}\times(\OM\times\rr)+mg\rr\times\Gg+
                m(B^{-1}\VV_s(\xx) +\WW_b(\rr))\times(\dot\rr+\OM\times\rr), \label{eq:eqofmotion_M}\\
                 \dot B&=B\hat\OM , \label{eq:eqofmotion_B}\\
                \dot \uu&=B ( \rr \times \OM ) + \VV_s(\xx)+B\WW_b(\rr),\label{eq:eqofmotion_gamma}
                  \end{align}      
                  \end{subequations}                  
to the invariant set defined by \eqref{eq:holcu} where $\OM=\OM(\MM,\uu, B)$ as in \eqref{eq:OmfcnM}, and, in accordance with  \eqref{eq:x-u}, 
we have  $\xx=\uu+B\rr$.
\end{proposition}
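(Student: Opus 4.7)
I would approach this via the Lagrange--d'Alembert principle for nonholonomic systems with affine constraints: apply Hamilton's principle to the Lagrangian \eqref{eq:Lagrangian}, restricting actual motions to $\A=\D+Z$ but taking admissible variations only in the linear model distribution $\D$ from \eqref{eq:D}. For a rigid body with a single point of contact this is equivalent to the Newton--Euler system
\begin{equation*}
m\ddot\uu=\bm R-mg\bm e_3,\qquad \tfrac{d}{dt}(B\mathbb{I}\OM)=B\rr\times\bm R,
\end{equation*}
where $\bm R$ is the reaction at the contact point, whose horizontal part plays the role of the Lagrange multiplier enforcing \eqref{eq:const} and whose vertical part enforces \eqref{eq:holcu}.

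Equations \eqref{eq:eqofmotion_B} and \eqref{eq:eqofmotion_gamma} require no real work: the first is the definition $\dot B=B\hat\OM$ of body angular velocity, the second is a verbatim restatement of the affine constraint \eqref{eq:const}. The content of the proposition therefore lies in \eqref{eq:eqofmotion_M}, and the motivation for the specific form of \eqref{eq:defM} is that $\MM$ is precisely the angular momentum of the body about the (moving) contact point, written in the body frame. Indeed $B^{-1}\bm L_P=\mathbb{I}\OM-m\rr\times B^{-1}\dot\uu$, and substituting the affine constraint \eqref{eq:const} for $\dot\uu$ reproduces \eqref{eq:defM}. Shifting the moment reference from the center of mass to the contact point is exactly what eliminates $\bm R$ from the Euler equation, leaving only the gravitational moment $(-B\rr)\times(-mg\bm e_3)$, which in the body frame becomes $mg\,\rr\times\Gg$ and explains the conservative term in \eqref{eq:eqofmotion_M}.

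Concretely, I would differentiate \eqref{eq:defM} term by term, using the identity $\tfrac{d}{dt}(B^{-1}\bm a)=B^{-1}\dot{\bm a}-\OM\times B^{-1}\bm a$ to bring derivatives into the body frame, substitute the Newton--Euler equation for $\tfrac{d}{dt}(\mathbb{I}\OM)$, and use \eqref{eq:const} to eliminate $\dot\uu$ wherever it appears; the gyroscopic term $\MM\times\OM$ on the right-hand side of \eqref{eq:eqofmotion_M} is produced by this passage to the body frame. As a consistency check for the dynamics on $\A$, the third component of \eqref{eq:eqofmotion_gamma} must coincide with $\tfrac{d}{dt}$ of the holonomic constraint \eqref{eq:holcu}, and this is guaranteed by the kinematic identity $\langle\dot\rr,\Gg\rangle=0$ together with the tangency relations $\langle\VV_s,\bm e_3\rangle=0$ and $\langle\WW_b,\Gg\rangle=0$.

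The main obstacle is the bookkeeping of the affine contributions. Differentiating the piece $m\rr\times(\OM\times\rr-B^{-1}\VV_s(\xx)-\WW_b(\rr))$ in \eqref{eq:defM} generates, besides the standard rolling terms, a contribution $-\OM\times(B^{-1}\VV_s(\xx)+\WW_b(\rr))$ from the body-frame derivative, chain-rule terms of the form $D\VV_s(\xx)\dot\xx$ and $D\WW_b(\rr)\dot\rr$, and further pieces coming from $\dot\rr$ via \eqref{eq:GaussMap} and \eqref{eq:evolutionPoisson}. Packaging all of these into the compact right-hand side of \eqref{eq:eqofmotion_M}, and in particular isolating the factor $m(B^{-1}\VV_s(\xx)+\WW_b(\rr))\times(\dot\rr+\OM\times\rr)$, is the computationally heaviest step and relies on repeated use of the Jacobi identity together with the tangency relations above.
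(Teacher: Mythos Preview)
Your proposal is correct and follows essentially the same route as the paper: Newton--Euler balance with an unknown reaction at the contact point, Lagrange--d'Alembert with variations restricted to the linear distribution $\D$ to obtain the body-frame torque $\rr\times B^{-1}\bm R$, differentiation of the affine constraint \eqref{eq:const} to express $\ddot\uu$, and then elimination of the reaction. The only organizational difference is that the paper first isolates $\mathbb{I}\dot\OM$ and then remarks that this is equivalent to \eqref{eq:eqofmotion_M}, whereas you propose to differentiate $\MM$ directly; in either order the Jacobian terms $D\VV_s(\xx)\dot\xx$ and $D\WW_b(\rr)\dot\rr$ cancel between the contribution from $\mathbb{I}\dot\OM$ and the contribution from the time derivative of the affine part of $\MM$, and the remaining cross-product rearrangements are indeed handled by the Jacobi identity as you indicate.
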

Note that \eqref{eq:eqofmotion_M} is a momentum balance equation and instead \eqref{eq:eqofmotion_B} 
and \eqref{eq:eqofmotion_gamma} are kinematic relations that follow from the considerations in section \ref{sect:kinematics}.
 
\begin{proof}
As mentioned above, \eqref{eq:eqofmotion_B} and \eqref{eq:eqofmotion_gamma} are given by the definition 
of $\OM$ and the constraints \eqref{eq:const}. In order to obtain \eqref{eq:eqofmotion_M}, we begin by writing the equations of motion as         
            \begin{equation}\label{eqofmotwithR1R2}
                m \ddot{\uu}= -mg \bm{e}_3+\bm{R}_1,\qquad \mathbb{I}\dot{\OM}=\mathbb{I}\OM\times\OM+\bm{R}_2,
            \end{equation}            
        where $\bm{R}_1,\bm{R}_2$ are the nonholonomic reaction forces. According to the Lagrange-d'Alembert principle, 
        $$\langle \bm{R}_1,\dot \uu\rangle+\langle \bm{R}_2,\OM\rangle=0$$
        for all $\dot \uu$ and  $\OM$ satisfying the linear nonholonomic constraint specified by $\D$ in 
        \eqref{eq:D}, namely, $\dot{\uu}=B(\rr\times\OM)$. This implies
        $$\langle \bm{R}_1 ,B(\rr\times\OM)\rangle+ \langle \bm{R}_2, \OM\rangle=0 \qquad \text{ for all $\OM$.}$$
        So we get        
            \begin{equation}\label{R2}
                \bm{R}_2=\rr\times(B^{-1}\bm{R}_1).
            \end{equation}            
        On the other hand, differentiating the constraints \eqref{eq:const} gives
        $$\ddot{\uu}=\dot{B}(\rr\times\OM+\WW_b(\rr))+B(\dot{\rr}\times\OM+\rr\times\dot{\OM}+\WW'_b(\rr)\dot{\rr})+\VV'_s(\xx)\dot{\xx}.$$
        And from equation (\ref{eqofmotwithR1R2}), we have $B^{-1}\bm{R}_1=mB^{-1}\ddot{\uu}+mg\Gg$, so 
        $$B^{-1}\bm{R}_1=m\OM\times(\rr\times\OM+\WW(\rr))+m(\dot{\rr}\times\OM+\rr\times\dot{\OM}+\WW'_b(\rr)\dot{\rr})+B^{-1}(\VV'_s(\xx)\dot{\xx})+mg\Gg.$$
        Using this expression and \eqref{R2} to express $\bm{R}_2$ and then substituting in equation (\ref{eqofmotwithR1R2}) gives        
            \begin{align*}
                \mathbb{I}\dot{\OM}&=\mathbb{I}\OM\times\OM+m\rr\times(\OM\times(\rr\times\OM))
                +m\rr\times(\dot{\rr}\times\OM)+m\rr\times(\rr\times\dot{\OM})+mg\rr\times\Gg\\
                &\qquad+m\rr\times(\OM\times \WW_b(\rr))+m\rr\times(\WW'_b(\rr)\dot{\rr})+m\rr\times(B^{-1}\VV'_s(\xx)\dot{\xx}).
            \end{align*}            
        Starting with  the definition \eqref{eq:defM} of $\MM$,  some elementary calculations
        show that the above equation is equivalent to \eqref{eq:eqofmotion_M}. 
\end{proof}

\subsubsection{The case $V=0$}
\label{sect:V=0}

If $V=0$, the system \eqref{eq:eqofmotion} has an $\rm{SE}(2)$-symmetry corresponding to translations and rotations of the plane
$\Pi$.
Denoting elements in $\rm{SE}(2)$ as $(R_{\theta},\bm{a})$ with
$$R_{\theta}=\begin{pmatrix}
\cos\theta & -\sin\theta & 0 \\
\sin\theta & \cos\theta & 0\\
0 & 0 & 1
\end{pmatrix}, \qquad \bm{a}=(a_1,a_2,0)^T,$$
and group operation
$$(R_{\theta}, \bm{a})(R_{\tilde \theta}, \tilde{\bm{a}})=(R_{\theta+\tilde \theta},R_{\theta}\tilde{\bm{a}}+\bm{a}),$$
then the action of $\rm{SE}(2)$ on $Q$ is the restriction to $Q$ of the following action of $\rm{SE}(2)$ on $\R^3\times\rm{SO}(3)$
\begin{equation}
\label{eq:actionSE2}
(R_{\theta},\bm{a})\cdot(\uu, B)=(R_{\theta}\uu+\bm{a},R_{\theta}B).
\end{equation}
It is immediate to check that $u_3, \Gg$ and $\rr$ are invariant under this action so, in view of \eqref{eq:holcu}, the action indeed restricts 
from $\R^3\times\rm{SO}(3)$ to $Q$. 
The lifted action of $\rm{SE}(2)$ on $TQ$ is given by
$$(R_{\theta},\bm{a})\cdot(\uu, B, \dot{\uu},\OM)=(R_{\theta}\uu+\bm{a},R_{\theta}B,R_{\theta}\dot{\uu},\OM).$$
It is not difficult to see that the Lagrangian $L$, given by \eqref{eq:Lagrangian}, and the linear distribution $\mathcal{D}$, given by \eqref{eq:D}, 
are invariant under this lifted action. If $V=0$ then also $\A$ is invariant and the equations \eqref{eq:eqofmotion} may be reduced by this symmetry. 
The reduced phase space $\A/ \rm{SE}(2)$ is diffeomorphic to $\R^3\times S^2$ and may be parametrized by $\MM\in\R^3$ and the Poisson vector $\Gg\in S^2$. 
To obtain the reduced equations, note that the constraints \eqref{eq:const} simplify to 
\begin{equation*}
\dot \uu=B ( \rr \times \OM )+B\WW_b(\rr),
\end{equation*}
whose right hand side is independent of $\uu$. Also the expression \eqref{eq:OmfcnM} for $\OM$ is independent of $\uu$. 
Moreover, since the dependence of $\rr$ on $B$ is only through the Poisson vector $\Gg$, we may write
\begin{equation}
\label{eq:OMV=0}
\OM(\MM,\Gg)=A(\Gg)\left(\MM+\frac{m\langle\MM+m \rr\times \WW_b(\rr), A(\Gg)\rr\rangle}{1-m\langle A(\Gg)\rr,\rr\rangle}\rr-m \rr\times \WW_b(\rr)\right),
\end{equation}
which leads to a decoupled system for $(\MM, \Gg)\in \R^3\times S^2$.
We give the reduced equations on $\A /\rm{SE}(2)$ as the following.
\begin{proposition}
\label{prop:redeqofmotion}
The reduced equations on $\A/\rm{SE}(2)$ are the restriction of 
\begin{subequations}
\label{eq:redeqofmot}
\begin{align}
                \dot{\MM}&=\MM\times\OM +m\dot{\rr}\times(\OM\times\rr)+mg\rr\times\Gg+m\WW_b(\rr)\times(\dot\rr+\OM\times\rr),\label{eq:redeqofmotion_M}\\
                 \dot \Gg&=\Gg\times\OM \label{eq:redeqofmotion_gamma},
\end{align}   
\end{subequations}       
  to the invariant set    $\Vert\Gg\Vert ^2=1$   where   $\OM=\OM(\MM, \Gg)$ 
  is given by \eqref{eq:OMV=0}.
\end{proposition}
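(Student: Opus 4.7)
The plan is to start from the full equations of motion in Proposition \ref{prop:eqofmotion}, specialize to $V=0$, and verify that every term on the right-hand side of \eqref{eq:eqofmotion_M} and \eqref{eq:eqofmotion_B} is $\rm{SE}(2)$-invariant, so the system descends to the quotient $\A/\rm{SE}(2)\cong\R^3\times S^2$. The first step is to check invariance of the basic building blocks under the lifted action $(\uu,B,\dot\uu,\OM)\mapsto (R_\theta\uu+\bm a,R_\theta B,R_\theta\dot\uu,\OM)$: the Poisson vector $\Gg=B^{-1}\bm e_3$ is invariant because $R_\theta\bm e_3=\bm e_3$, which via the Gauss map \eqref{eq:GaussMap} makes $\rr$ invariant; the angular velocity $\OM$ is invariant by the definition of the lifted action; and $\MM$, defined by \eqref{eq:defM} with $V=0$, is then invariant since the remaining pieces involve only $\OM$, $\rr$ and $\WW_b(\rr)$.

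Next I would examine the explicit formula \eqref{eq:OmfcnM} for $\OM$ in terms of $(\MM,\uu,B)$: with $V=0$ the vector $\bm\zeta(B,\uu)$ in \eqref{eq:A(gamma)} simplifies to $m\rr\times\WW_b(\rr)$ and hence becomes a function of $\Gg$ alone. Substitution into \eqref{eq:OmfcnM} then yields precisely \eqref{eq:OMV=0}, showing that $\OM$ descends to a well-defined function $\OM(\MM,\Gg)$ on the quotient.

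Finally, I would carry out the projection. The equation \eqref{eq:eqofmotion_gamma} for $\dot\uu$ is discarded as it only records the motion along the group orbit. The equation \eqref{eq:eqofmotion_B} for $\dot B$ is projected via the Poisson-vector relation \eqref{eq:evolutionPoisson} to obtain $\dot\Gg=\Gg\times\OM$, which is \eqref{eq:redeqofmotion_gamma}. Setting $V=0$ in \eqref{eq:eqofmotion_M} reproduces \eqref{eq:redeqofmotion_M} term by term. The holonomic condition \eqref{eq:holcu}, which involves $\uu$, is lost in the projection and is replaced by the invariant set $\|\Gg\|^2=1$, which is automatically preserved by $\dot\Gg=\Gg\times\OM$.

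The argument is essentially bookkeeping once invariance has been verified; the only genuine content is confirming that $\bm\zeta$ loses its $\uu$-dependence precisely because $V=0$, which is exactly the point at which a nonzero $V$ would break the translational $\rm{SE}(2)$-symmetry and obstruct the reduction.
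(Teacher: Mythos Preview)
Your proposal is correct and follows essentially the same route as the paper: the paper embeds the argument in the discussion preceding the proposition, noting that $\Gg$, $\rr$ (and hence $\MM$) are $\mathrm{SE}(2)$-invariant, that with $V=0$ the constraint and the formula for $\OM$ lose their $\uu$-dependence so that $\OM$ becomes a function of $(\MM,\Gg)$ alone, and that the full equations therefore decouple to the system on $\R^3\times S^2$. Your write-up simply makes the invariance checks and the projection of \eqref{eq:eqofmotion_B} to $\dot\Gg=\Gg\times\OM$ a bit more explicit.
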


\subsection{Moving energy}
\label{sect:movingenergy}
It is well-known that nonholonomic systems with affine constraints do not in general preserve the energy. 
However, as first noticed in \cite{Fasso2016} (see also \cite{borisov2015} and \cite{Fasso2018}), 
if the affine terms correspond to the infinitesimal generator of a continuous symmetry of the Lagrangian, 
then a modification of the energy, which we term \textit{moving energy} in accordance with \cite{Fasso2016, Fasso2018}, arises as a first integral.
 Below we discuss some instances of existence of a preserved moving energy in our problem.

\subsubsection{The case $W=0$}
As mentioned above, for a general convex body, the Lagrangian $L$ is invariant under the lifted $\rm{SE}(2)$ action on $TQ$ 
given by \eqref{eq:actionSE2}. If $W=0$ and $V\in\mathfrak{X}(\Pi)$ coincides with  the infinitesimal generator of the $\rm{SE}(2)$ action 
on $Q$, given by \eqref{eq:actionSE2}, then the system possesses a conserved moving energy. There are two possibilities for such infinitesimal generator. 
The first one is a steady rotation with angular frequency $\eta\in\R$ about a fixed point on the plane $\Pi$ that can be taken as our origin $O$, namely
\begin{equation*}
\VV_s(\xx)=\eta\bm{e}_3\times\xx,
\end{equation*}
which is precisely the form of $\VV_s$ given in \eqref{eq:Vrot}  for the uniformly rotating plane.
In this case, the conserved moving energy $E_{mov}:\A\rightarrow \R$ was found in \cite{Fasso2018} and is given by
\begin{equation*}
E_{mov}=\frac{1}{2}\langle \mathbb{I}\OM,\OM\rangle+\frac{m}{2}\Vert \rr\times\OM\Vert^2-mg\langle\rr,\Gg\rangle+
\eta\langle\mathbb{I}\OM-m\rr\times(\OM\times\rr),\Gg\rangle+\frac{1}{2}m\eta^2(\Vert\rr\Vert^2-\Vert \uu\Vert^2).
\end{equation*}
The second possibility is that of a steady linear translation; namely
\begin{equation}\label{eq:Vct}
V_s(\xx)=\bm{v}=(v_1,v_2,0),
\end{equation}
for constant $v_1,v_2\in\R$. In this case, following the prescription in \cite{ borisov2015, Fasso2016, Fasso2018}, 
one computes the conserved moving energy to be
\begin{equation*}
E_{mov}=\frac{1}{2}\langle \mathbb{I}\OM,\OM\rangle+\frac{m}{2}\Vert \rr\times\OM\Vert^2-mg\langle\rr,\Gg\rangle.
\end{equation*}

\subsubsection{The case of an axially symmetric rigid body}
\label{sect:axiallysymmetric}
A further symmetry of the Lagrangian arises when the body possesses an axial symmetry, and is hence a body of revolution. 
Assuming that the symmetry axis is aligned with the third axis $\bm{E}_3$ of the moving frame $\Sigma_b$, then we consider 
the $\rm{SO}(2)$ action on $Q$ given by
\begin{equation}
\label{eq:actionSO2}
R_\phi\cdot(\uu, B)=(\uu,BR_{\phi}^{-1}),
\end{equation}
where 
$$R_{\phi}=\begin{pmatrix}
\cos\phi &-\sin\phi &0\\
\sin\phi &\cos\phi &0\\
0 &0& 1
\end{pmatrix}.$$
It is immediate to check that under this action $\Gg$ transforms to $R_\phi\Gg$. Moreover, for an axisymmetric body, 
the Gauss map is equivariant and $\rr$ transforms to $R_\phi\rr$. It follows from \eqref{eq:const} that \eqref{eq:actionSO2} 
determines a well-defined $\rm{SO}(2)$ action on $Q$. The associated lifted action to $TQ$ is 
$$R_\phi\cdot(\uu,B,\dot{\uu},\OM)=(\uu,BR_{\phi}^{-1},\dot{\uu},R_\phi\OM).$$
Our assumption that the body is axisymmetric implies $I_1=I_2$ and it can be checked that the Lagrangian $L$ is invariant. 

Assume for simplicity that $V=0$. If the vector field $W\in\mathfrak{X}(\mathcal{S})$ is chosen as an infinitesimal 
generator of the action \eqref{eq:actionSE2}, namely, if
\begin{equation*}
\WW_b(\rr)=\sigma\rr\times\bm{E}_3,
\end{equation*}
for $\sigma\in\R$, then  $\WW_b$ coincides with the expression \eqref{eq:Wrot} for a cat's toy mechanism. So
 the system under consideration  
  corresponds to the one depicted in Fig \ref{Fig:rotating-shell} with the additional
assumption that the internal rigid body has the same axial symmetry as the shell. This system will be studied in more detail in section \ref{sect:bodyofrev} ahead.
Following the prescription in \cite{borisov2015, Fasso2016, Fasso2018}, 
one finds a conserved moving energy given by
        \begin{equation}\label{eq:movenergybodyofrev}
            E_{mov}=\frac{1}{2}\langle\mathbb{I}(\OM+\sigma \bm{E}_3),\OM+\sigma \bm{E}_3\rangle
            +\frac{m}{2}\Vert \rr\times(\OM+\sigma \bm{E}_3)\Vert ^2-mg\langle\rr,\Gg\rangle.
        \end{equation}
       This  moving energy \eqref{eq:movenergybodyofrev} is actually also a first integral of the system when $\VV_s$ is  a
       nonzero constant vector field (given by \eqref{eq:Vct}).  

Finally, we indicate that, when the axi-symmetric body with a cat's toy mechanism rolls on a uniformly rotating plane (i.e.$ \VV_s$ is given by \eqref{eq:Vrot} and $\WW_b$ by \eqref{eq:Wrot}), one may combine the $\rm{SE}(2)$ and $\rm{SO}(2)$ symmetries to derive the conserved moving energy:
\begin{equation}
\label{eq:moven-general}
\begin{split}
E_{mov}=&\frac{1}{2}\langle \mathbb{I}\OM, \OM\rangle+\langle\mathbb{I}\OM,-\eta\Gg+\sigma\bm{E}_3\rangle
+\frac{m}{2}\Vert\rr\times(\OM+\sigma\bm{E}_3)\Vert^2-m\eta\langle\rr\times(\OM+\sigma\bm{
E}_3),\rr\times\Gg\rangle\\
&+\frac{m\eta^2}{2}(\Vert \rr\Vert^2-\Vert \uu\Vert^2)-mg\langle\rr,\Gg\rangle.
\end{split}
\end{equation} 
        
\section{A dynamically balanced sphere}
\label{sect:Chaplyginshpere}
Throughout this section we consider the special case in which the surface of the convex body is spherical, with radius $r>0$, 
and the center of mass coincides with the geometric
center. If both $V$ and $W$ vanish, we recover the classical Chaplygin ball problem \cite{Chaplygin}. 
Other cases considered previously for non-vanishing $V,W$ are found in \cite{Bizyaev2019, kilin, borisov2018}. 
Here we consider the general case. 

The relation \eqref{eq:GaussMap} between $\rr$ and $\Gg$ is  
\begin{equation}\label{eq:rhochapsphere}
\rr=-r\Gg,
\end{equation}
and \eqref{eq:x-u} becomes 
\begin{equation}\label{eq:x-uchapsphere}
 \xx=\uu-r\bm{e}_3.
\end{equation}
In view of \eqref{eq:rhochapsphere}, we have $\Gg\times\rr=0$ and $\dot{\rr}=\rr\times\OM$, so equation \eqref{eq:eqofmotion_M} simplifies to
 \begin{align}\label{eq:eqofmotchapsphere_M}
                \dot{\MM}&=\MM\times\OM, 
 \end{align}            
where in this case $\MM=\mathbb{I}\OM+mr^2\Gg\times(\OM \times\Gg)+mr\Gg\times(B^{-1}\VV_s(\xx) +\WW_b(\rr))$. 
This remarkable simplification implies that the vector $\MM$, as seen in the spatial frame $\Sigma_s$ is constant. As a consequence, we have. 
\begin{proposition}\label{prop:firstintechapshpere}
For any $V\in \mathfrak{X}(\Pi)$ and $W\in \mathfrak{X}(\Ss)$, the system has first integrals
 $$\langle \MM,\Aa\rangle,\qquad\langle \MM, \Bb\rangle\qquad\text{and}\qquad\langle \MM,\Gg\rangle,$$
where $\Aa, \Bb$, and $\Gg$ are given by \eqref{eq:defPoissonvecs}.
\end{proposition}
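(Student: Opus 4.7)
The plan is to exploit the remarkable simplification of the momentum balance equation noted just before the statement, namely \eqref{eq:eqofmotchapsphere_M}: $\dot{\MM} = \MM \times \OM$. This equation, together with the evolution equations \eqref{eq:evolutionPoisson} for the Poisson vectors, should reduce the claim to a direct computation. Intuitively, the identity $\dot\MM = \MM \times \OM$ says precisely that the spatial representation $B\MM$ of $\MM$ is constant in time, and the three scalars $\langle \MM, \Aa\rangle$, $\langle \MM, \Bb\rangle$, $\langle \MM, \Gg\rangle$ are exactly the components of $B\MM$ in the spatial basis $\{\bm{e}_1,\bm{e}_2,\bm{e}_3\}$, via \eqref{eq:defPoissonvecs}.

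Concretely, I would compute the time derivative of, say, $\langle\MM,\Aa\rangle$. Using \eqref{eq:eqofmotchapsphere_M} and \eqref{eq:evolutionPoisson},
\begin{equation*}
\frac{d}{dt}\langle \MM, \Aa\rangle = \langle \MM\times\OM, \Aa\rangle + \langle \MM, \Aa\times\OM\rangle.
\end{equation*}
Rewriting both terms as determinants gives $\det(\MM,\OM,\Aa) + \det(\MM,\Aa,\OM)$, which vanishes since the two scalar triple products differ by a transposition of their last two arguments. The same calculation, mutatis mutandis, applies to $\langle \MM,\Bb\rangle$ and $\langle \MM,\Gg\rangle$, since the evolution equations for $\Bb$ and $\Gg$ have the same structure as that for $\Aa$.

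Alternatively, one could phrase the argument invariantly: for any vector $\bm{e}\in\R^3$ fixed in space, set $\bm{E}(t):=B(t)^{-1}\bm{e}$, which satisfies $\dot{\bm{E}}=\bm{E}\times\OM$. Then $\frac{d}{dt}\langle\MM,\bm{E}\rangle=\langle \MM\times\OM,\bm{E}\rangle+\langle\MM,\bm{E}\times\OM\rangle=0$ by the same triple-product cancellation. Taking $\bm{e}=\bm{e}_1,\bm{e}_2,\bm{e}_3$ yields the three integrals. This viewpoint also makes transparent the underlying geometric fact that $B\MM$ is the spatially-represented momentum vector, which is conserved.

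I do not foresee a real obstacle: the substantive work has already been done in establishing \eqref{eq:eqofmotchapsphere_M}, which is where the dynamical balance of the sphere (so that $\rr=-r\Gg$ and in particular $\rr\times\Gg=0$) makes the gravitational, $V$-, and $W$-dependent source terms on the right-hand side of \eqref{eq:eqofmotion_M} collapse. Once that simplification is in hand, Proposition \ref{prop:firstintechapshpere} is essentially a one-line identity.
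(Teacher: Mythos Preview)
Your proposal is correct and matches the paper's approach: the paper simply states that the proof is an immediate consequence of \eqref{eq:eqofmotchapsphere_M} and \eqref{eq:defPoissonvecs}, which is exactly the argument you spell out (either via the triple-product cancellation using \eqref{eq:evolutionPoisson}, or equivalently by noting that $B\MM$ is constant in the spatial frame).
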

The proof is an immediate consequence of \eqref{eq:eqofmotchapsphere_M} and \eqref{eq:defPoissonvecs}. 
The existence of these first integrals for some particular vector fields
$V\in \mathfrak{X}(\Pi)$ and $W\in \mathfrak{X}(\Ss)$ had been indicated
in previous references \cite{borisov2018, Bizyaev2019}. Their existence
for general vector fields  is actually an instance of a result which we develop on appendix \ref{appendix}.
As may be verified, the linear distribution $\D$ and the Lagrangian $L$ simplify (up to the addition of a constant term in 
the Lagrangian that may be discarded)  to
\begin{equation*}
\begin{split}
&\D=\{ (\uu,\dot \uu, B, \OM) \in \R^3\times \R^3 \times  \SO(3)\times \R^3 \, :\,  \dot \uu=- r ( \bm{e}_3 \times \om ) \mbox{
and  \eqref{eq:holcu} holds} \, \}, \\
&L(\uu,B,\dot \uu, \OM)=\frac{1}{2}\langle \mathbb{I}\OM, \OM\rangle +\frac{m}{2}\Vert \dot{\uu}\Vert ^2.
\end{split}
\end{equation*}
The above  expressions for $\D$ and $L$  do not explicitly depend on $\uu$ and $B$.  This independence is due to a very 
special type of symmetry: if we interpret our configuration space $Q$ as a Lie group
(isomorphic to the direct product $\R^2\times \SO(3)$), then the distribution $\D$ is right invariant, and the 
Lagrangian $L$ is left invariant. Therefore, the underlying linear problem is an LR system \cite{Veselov}. Proposition \ref{prop:AppLR}
in the  appendix is  a robust result on the existence of first integrals of affine generalizations of LR systems which  provides an
explanation of the  mechanism
responsible of the validity of Proposition \ref{prop:firstintechapshpere}.

Below we consider additional aspects of the dynamics for particular choices of  $V$ and $W$. 
    
\subsection{The case $V=0$}
\label{sect:ChaplyginsphereV=0}
As stated in section \ref{sect:V=0}, when $V=0$ the system has an $\rm{SE}(2)$-symmetry and we 
can consider the reduced system. The reduced equations of motion are
\begin{equation}
\label{eq:eqofmotionchapsphereV=0}
                \dot{\MM}=\MM\times\OM, \qquad
                 \dot \Gg=\Gg\times\OM,
\end{equation}          
with $\MM=\mathbb{I}\OM+mr^2\Gg\times(\OM \times\Gg)+mr\Gg\times\WW_b(\rr)$. 
As a consequence of Proposition \ref{prop:firstintechapshpere},  the reduced system \eqref{eq:eqofmotionchapsphereV=0} has first integrals 
\begin{equation}
\label{eq:firstintchapsphereV=0}
\Vert \MM \Vert^2, \qquad \langle \MM, \Gg\rangle \qquad \text{and}\qquad \Vert\Gg\Vert^2=1.
\end{equation}
These first integrals are insufficient to conclude integrability of \eqref{eq:eqofmotionchapsphereV=0}, for instance, 
using the Jacobi last multiplier theorem \cite{Arnold} 
(which would require existence of an additional independent first integral and a smooth invariant measure).

Below we only consider the simplest non-zero choice of $W\in\mathfrak{X}(\mathcal{S})$, corresponding to a
cat's toy mechanism 
(described in section \ref{sect:description}). Moreover, we will assume that the axis of rotation
of the mechanism is aligned with the third principal axis of the sphere (see Fig \ref{F:chapsphererotshell}). The corresponding form of $\WW_b$ is given by \eqref{eq:Wrot} which in view of \eqref{eq:rhochapsphere} becomes
 $$\WW_b(\rr)=-r\sigma\Gg\times\bm{E}_3.$$
 For future reference we note that in the case under consideration, we may use \eqref{eq:OmfcnM} to write $\OM=\OM(\MM,\Gg)$ as
 \begin{equation}
 \label{eq:Omsphere}
\OM(\MM,\Gg)=A\left(\MM+\bm{\zeta}(\Gg)+\frac{mr^2\langle\MM+\bm{\zeta}(\Gg), A\Gg\rangle}{1-mr^2\langle A\Gg,\Gg\rangle}\Gg\right),
\end{equation}
where the matrix $A$ is constant
\begin{equation}
\label{eq:Asphere}
A=(\mathbb{I}+mr^2\mathrm{id})^{-1},
\end{equation}
 and the vector $\bm{\zeta}$ only depends on $\Gg$ by
  $$\bm{\zeta}(\Gg)=mr^2 \sigma\Gg\times(\Gg\times\bm{E}_3).$$
 
 \begin{figure}[h!]
    \centering
        \includegraphics[width=0.4\linewidth]{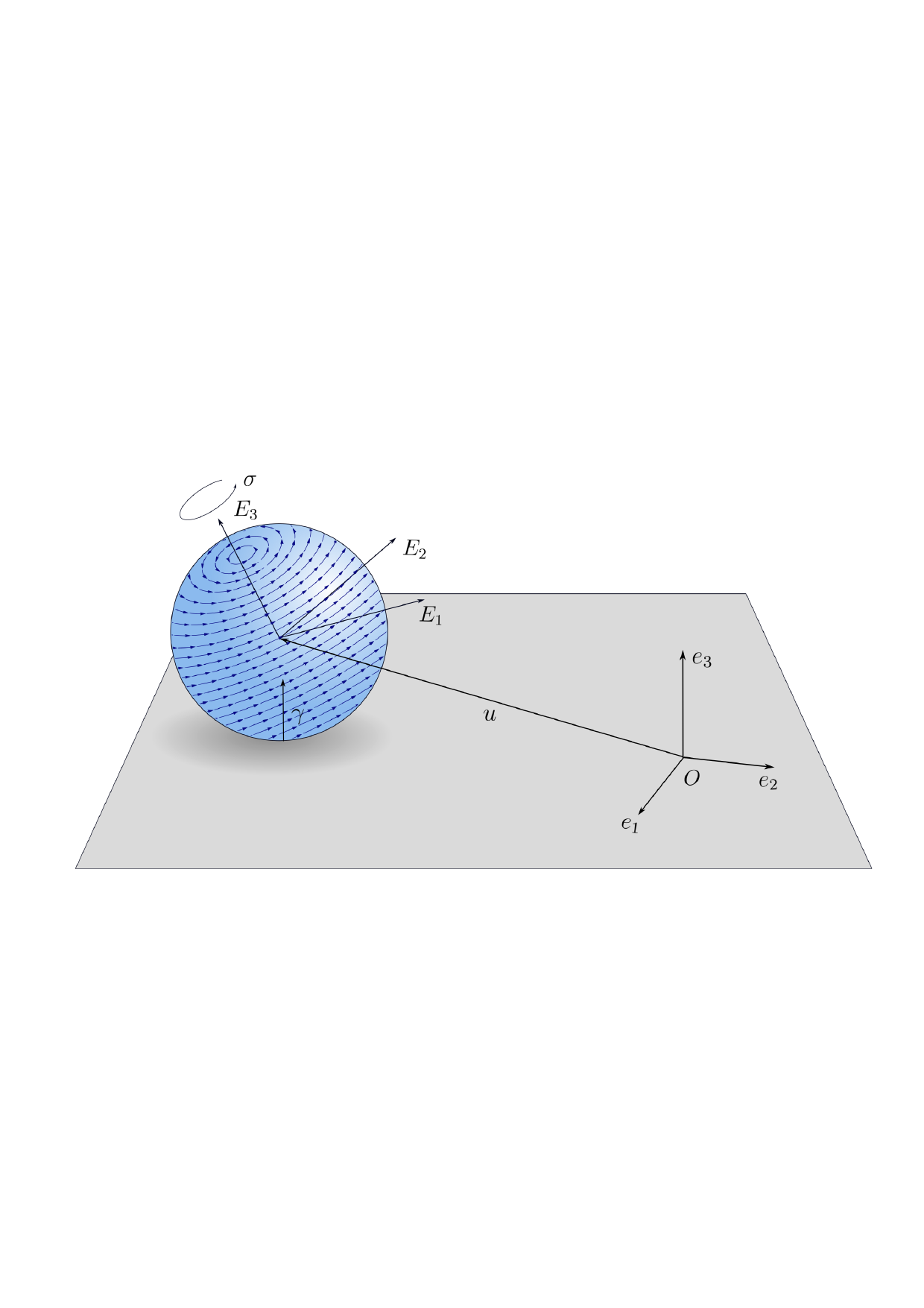}
        \caption{Dynamically balanced sphere with a cat's toy mechanism. It is assumed that the axis
of rotation of the shell is a principal axis of inertia of the sphere.}
        \label{F:chapsphererotshell}
\end{figure}

The analysis that we present below treats separately the case in which $\MM$ and $\Gg$ are parallel. Interestingly, in this special case the reduced dynamics is integrable (actually periodic), whereas in the general case it appears to be chaotic.

\subsubsection{The case  $M$ parallel to  $\gamma$}
\label{sect:M||gamma}
Since both $\MM$ and $\Gg$ are body representations of vectors that are fixed in space, 
if they are initially parallel they will remain parallel for all time. 
As we prove below, the dynamics restricted to these initial conditions is integrable and in fact periodic.

It is not hard to see that those $(\MM, \Gg)\in\R^3\times S^2$ for which $\MM$ and $\Gg$ are parallel are critical points of the first integrals
 \eqref{eq:firstintchapsphereV=0}. The connected components of their joint level sets are diffeomorphic to $S^2$ and 
 may be parametrized by $\Gg$ by putting 
 \begin{equation}
\label{eq:MparGamma}
\MM=\pm \Vert \MM\Vert\Gg.
\end{equation}
Writing $\lambda=\pm\Vert\MM\Vert$, we may use \eqref{eq:Omsphere} to write $\OM$ as a function of $\Gg$ depending parametrically on $\lambda$,
$$\OM(\Gg; \lambda)=A\left(\lambda\Gg+\bm{\zeta}(\Gg)+\frac{mr^2\langle\lambda\Gg+\bm{\zeta}(\Gg), A\Gg\rangle}{1-mr^2\langle A\Gg,\Gg\rangle}\Gg\right),$$
with $ \bm{\zeta}(\Gg)=mr^2 \sigma\Gg\times(\Gg\times\bm{E}_3).$
The restriction of  \eqref{eq:eqofmotionchapsphereV=0}  to the 2-dimensional invariant submanifold 
determined by the condition $\MM=\lambda \Gg$ is described by the equation 
\begin{equation}
\label{eq:eqofmotionM||gamma}
\dot{\Gg}=\Gg\times\OM(\Gg;\lambda).
\end{equation}
Below we exhibit a smooth first integral and an invariant measure depending on the value of $\lambda\in\R$. 
It follows that all  non-equilibrium solutions $\Gg(t)$ of \eqref{eq:eqofmotionM||gamma} are periodic. Therefore,
in view of \eqref{eq:MparGamma}, we also conclude  that the generic solutions of 
\eqref{eq:eqofmotionchapsphereV=0}  with the initial conditions under consideration are periodic.

Let $\varepsilon$ be the non-dimensional number
\begin{equation}
\label{eq:defepsilon}
\varepsilon:=\frac{\Vert \MM\Vert}{mr^2|\sigma|}.
\end{equation}

If 
\begin{equation}
\label{eq:epsilon<}
\varepsilon>\frac{I_3}{I_3+mr^2},
\end{equation}
then the quantity $\lambda( I_3+mr^2)+mr^2\sigma I_3\gamma_3$ is nonzero for all $\gamma_3\in[-1,1]$, and
\begin{equation}
\label{eq:firstintM||gamma}
f(\Gg)=\frac{|\lambda( I_3+mr^2)+mr^2\sigma I_3\gamma_3|^{-\frac{mr^2}{I_3}}}{\sqrt{1-mr^2\langle\Gg,A\Gg\rangle}},
\end{equation}
with $A$ given by \eqref{eq:Asphere}, is a smooth function of $\Gg\in S^2$ which can be checked to be a first integral of
\eqref{eq:eqofmotionM||gamma}. Furthermore, also under the assumption \eqref{eq:epsilon<},
 one can directly check that $\mu(\Gg)d\Gg$ with
$$\mu(\Gg)=|\lambda( I_3+mr^2)+mr^2\sigma I_3\gamma_3|^{-1},$$
is an invariant measure (with smooth positive density).

 If the complementary inequality of \eqref{eq:epsilon<} holds, namely if 
$$\varepsilon\leq\frac{I_3}{I_3+mr^2},$$
then $f$ as defined by \eqref{eq:firstintM||gamma} is no longer a smooth function on $S^2$ since the expression 
inside the absolute value vanishes along the parallel of $S^2$ given by
\begin{equation}
\label{eq:gamma3}
\gamma_3=-\frac{\lambda}{mr^2\sigma}\left(\frac{I_3+mr^2}{I_3}\right)\in[-1,1].
\end{equation}
Using \eqref{eq:eqofmotionM||gamma} it is easy to show that this parallel is invariant. Actually, its internal dynamics is given by
$$\dot{\gamma}_1=-\kappa\gamma_2\qquad \dot{\gamma}_2=\kappa\gamma_1,$$
with $\kappa=\frac{mr^2\sigma}{I_3+mr^2}$. In this case, we may use $f$ to construct a smooth first integral $g:S^2\rightarrow \R$ by
\begin{equation*}
g(\Gg)=\begin{cases}
\exp(-f(\gamma)) &\text{if } \gamma_3\neq -\frac{\lambda}{mr^2\sigma}\left(\frac{I_3+mr^2}{I_3}\right),\\
        0 &\text{if } \gamma_3=  -\frac{\lambda}{mr^2\sigma}\left(\frac{I_3+mr^2}{I_3}\right).
\end{cases}
\end{equation*}
By construction, the invariant parallel \eqref{eq:gamma3} is the zero level set of $g$. A smooth invariant measure in this case is given by $\nu(\Gg)d\Gg$ where 
\begin{equation*}
\nu(\Gg)=\begin{cases}
		g(\Gg)\mu(\Gg) &\text{if } \gamma_3\neq -\frac{\lambda}{mr^2\sigma}\left(\frac{I_3+mr^2}{I_3}\right),\\
        0 &\text{if } \gamma_3=  -\frac{\lambda}{mr^2\sigma}\left(\frac{I_3+mr^2}{I_3}\right).
\end{cases}
\end{equation*}
We notice that the density $\nu$ is smooth and non-negative on $S^2$ but vanishes along the invariant parallel \eqref{eq:gamma3} which has measure zero. The relevance of this kind of invariant measures in nonholonomic mechanics was recently indicated in \cite{LGN2024}.

\subsubsection{The general case ($M$ and $\gamma$ not parallel)}
In this case, the first integrals \eqref{eq:firstintM||gamma} are independent and their level sets are 3-dimensional submanifolds of the phase space $\R^3\times S^2$. The dynamics can be numerically investigated using a 2-dimensional Poincaré map. Below we present some numerical experiments assuming $\langle \MM, \Gg\rangle=0$ which lead us to conjecture that the dynamics is chaotic.

\paragraph{Poincar\'e map\\}
We borrow techniques from \cite{borisov2018, borisov2002} to construct our Poincar\'e section.
We begin by restricting the system to the four-dimensional level manifold $\mathcal{M}_4$ 
of the first integrals $\langle \MM,\Gg\rangle$ and $\Vert \Gg\Vert^2$, 
$$\mathcal{M}_4=\{(\MM,\Gg)\in\R^3\times\R^3  \, :\,   \, \langle\MM, \Gg\rangle=0	\, \text{and}\, \Vert\Gg\Vert^2=1 \, \}.$$
In this way, we obtain a four-dimensional system with first integral $\Vert \MM\Vert ^2=G^2$. To parametrize $\mathcal{M}_4$, we use the Andoyer-Deprit variables $(L,G,l,g)$ defined by
  \begin{equation*}
  \begin{split}
&M_1=\sqrt{G^2-L^2}\sin l,  \qquad \qquad M_2=\sqrt{G^2-L^2}\cos l,  \;\; \qquad \qquad M_3=L \\
 &\gamma_1=\frac{L}{G}\cos g\sin l+\sin g\cos l, \quad  \gamma_2=\frac{L}{G}\cos g\cos l-\sin g\sin l,  
  \quad \gamma_3=-\sqrt{1-\frac{L^2}{G^2}},
   \end{split}
 \end{equation*}  
where $l,g\in[0,2\pi)$ and $L,G$ satisfy the inequality $-1\leq \frac{L}{G}\leq 1$. The system determines a three-dimensional flow on the fixed level set of the first integral  $\Vert \MM\Vert^2=G^2$. We take the set $g=0$ as a section of this flow to obtain a two-dimensional Poincar\'e map, which we parametrize by the variables $\left(l,\frac{L}{G}\right)$. 

The Poincar\'e map, shown in Fig $\ref{fig:poincare1}$ for different values of $\varepsilon$  (defined by \eqref{eq:defepsilon}), 
resembles the Poincar\'e map of a non-integrable Hamiltonian system; we observe coexistence of chaotic regions and stability 
islands typical of KAM theory. These numerical experiments suggest that the system is non-integrable at the level $\langle\MM, \Gg\rangle=0$. 
We note that the experiments seem compatible with the existence of a smooth invariant measure, but we
were unable to find it.

    \begin{figure}[]
        \centering
        \subfloat[][$\varepsilon=12$]
           {\includegraphics[width=.3\textwidth]{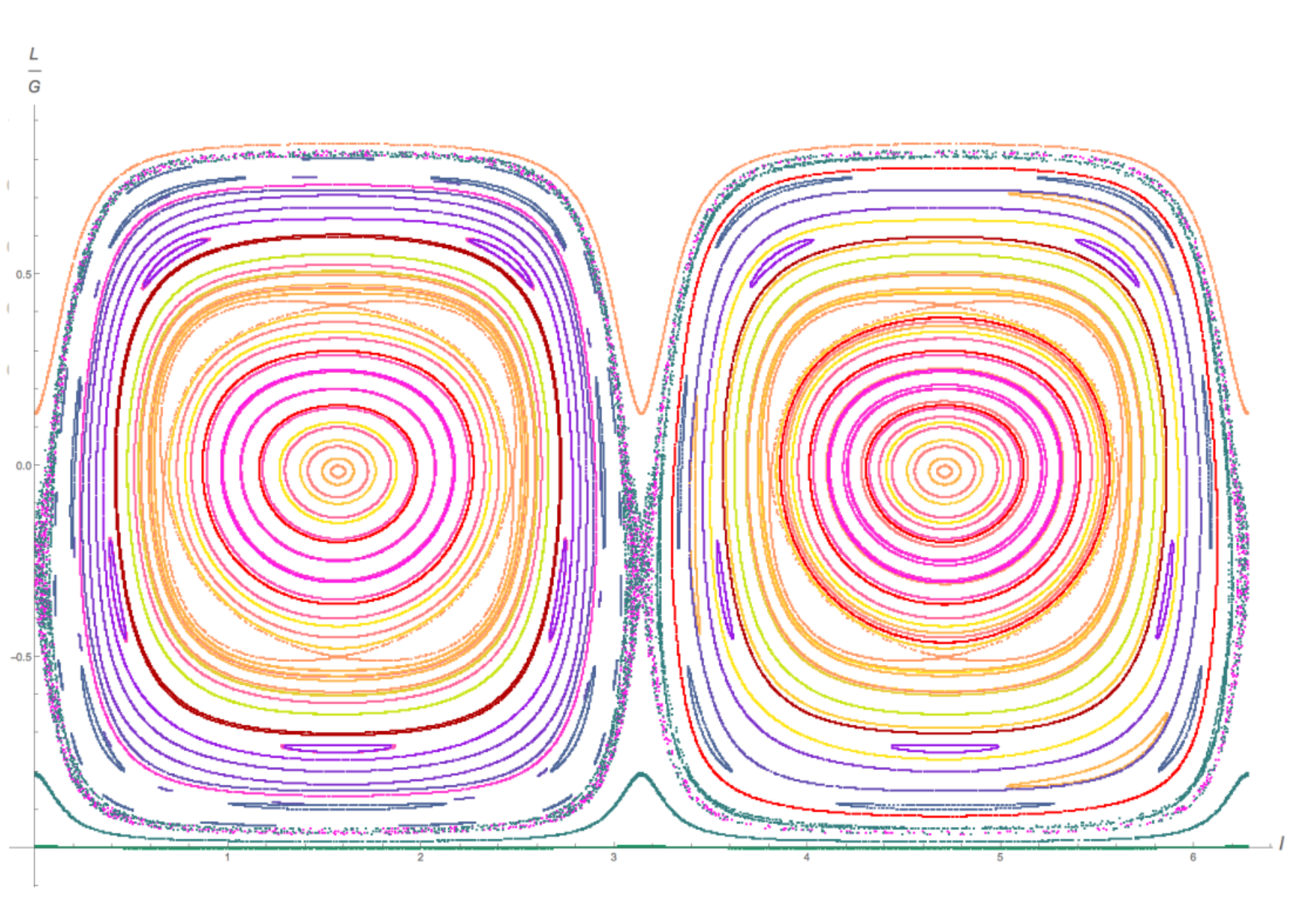}} \qquad
        \subfloat[][$\varepsilon=4$]
           {\includegraphics[width=.3\textwidth]{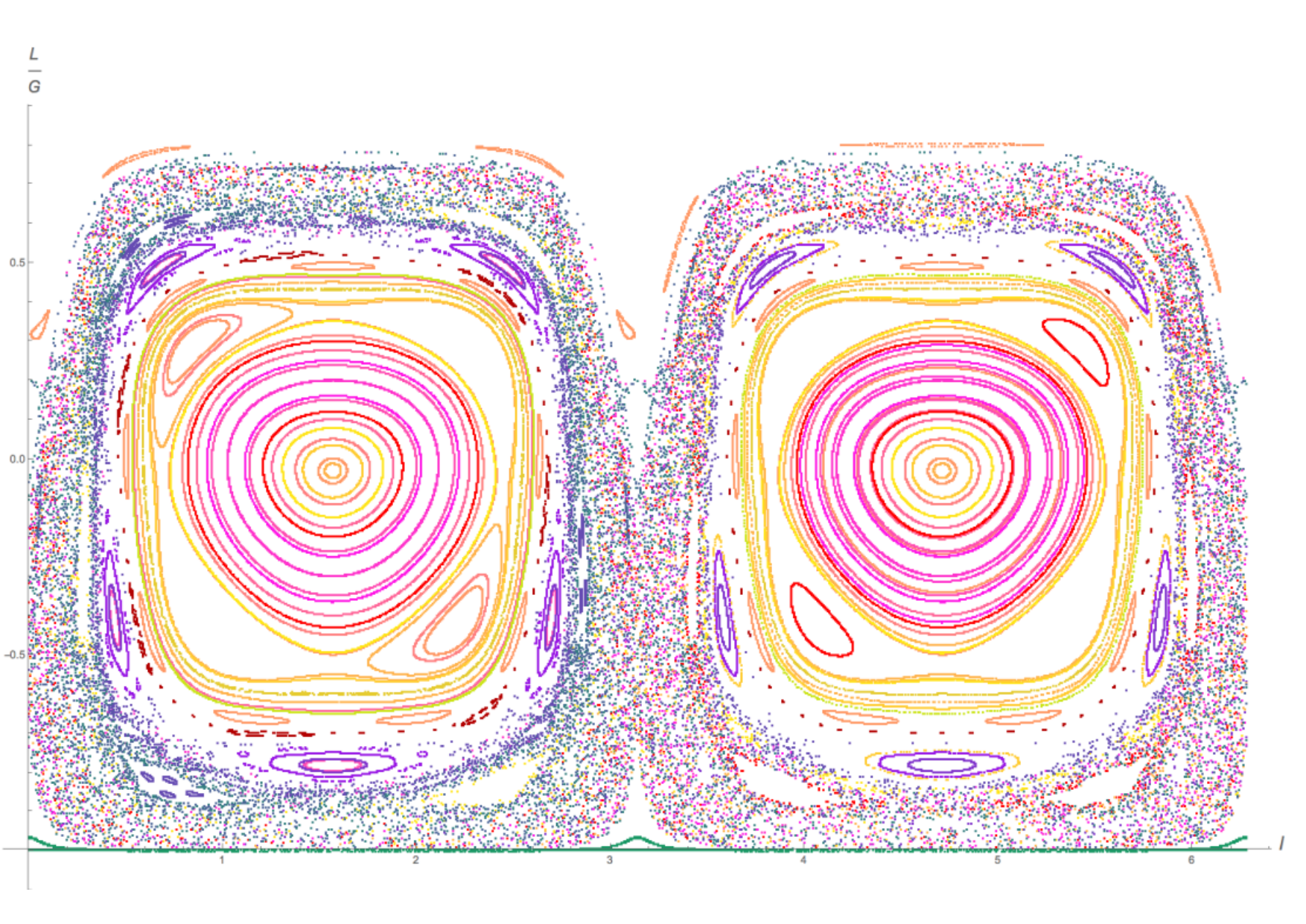}} \qquad
        \subfloat[][$\varepsilon=2$]
           {\includegraphics[width=.3\textwidth]{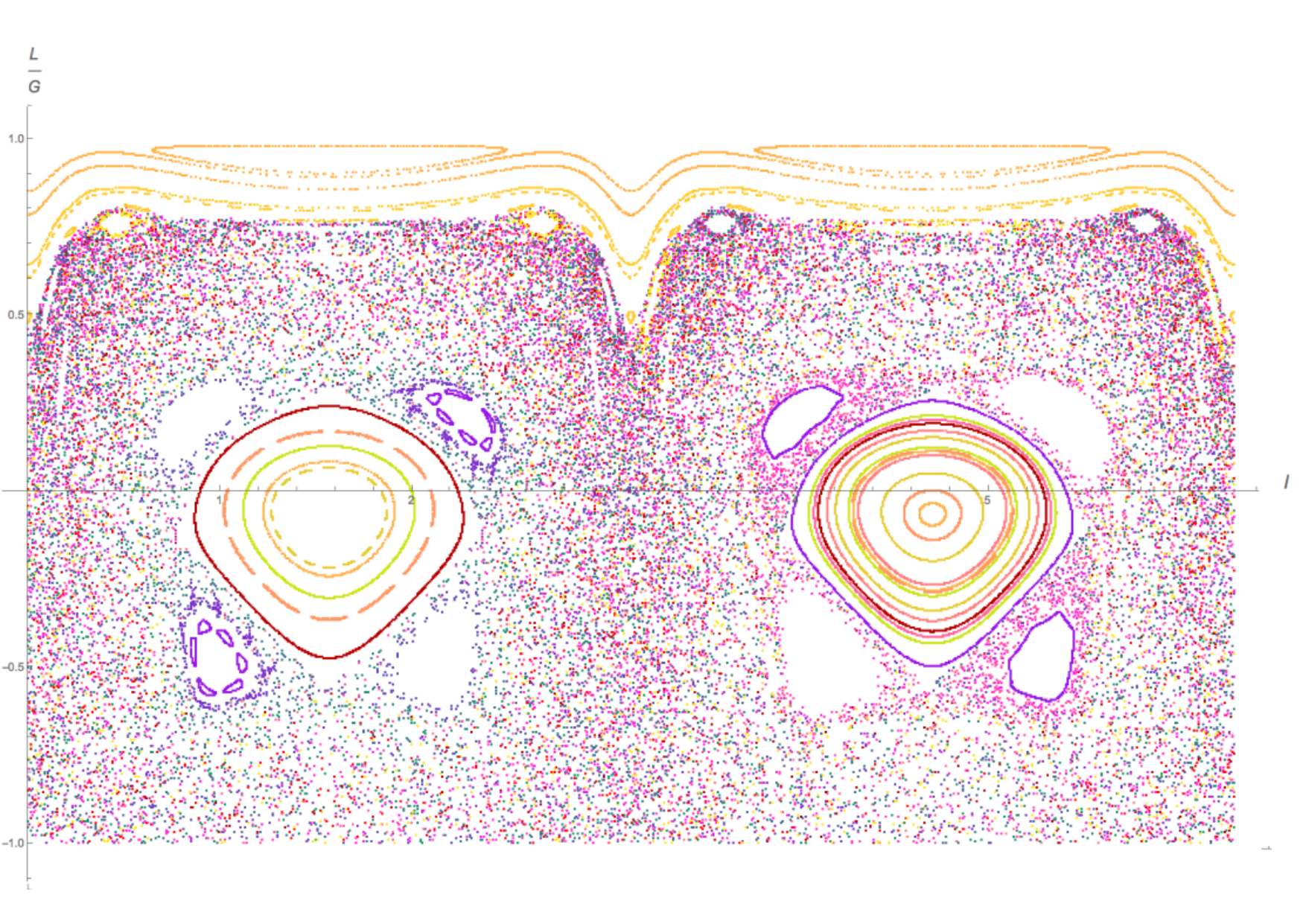}}
           \put (-300,5) {\footnotesize{$l$}}
              \put (-423,76) {\footnotesize{$L/G$}}
              \put (-150,5) {\footnotesize{$l$}}
              \put (-273,76) {\footnotesize{$L/G$}} 
               \put (-3,40) {\footnotesize{$l$}}
              \put (-123,76) {\footnotesize{$L/G$}} \\
             \subfloat[][$\varepsilon=0.4$]
           {\includegraphics[width=.3\textwidth]{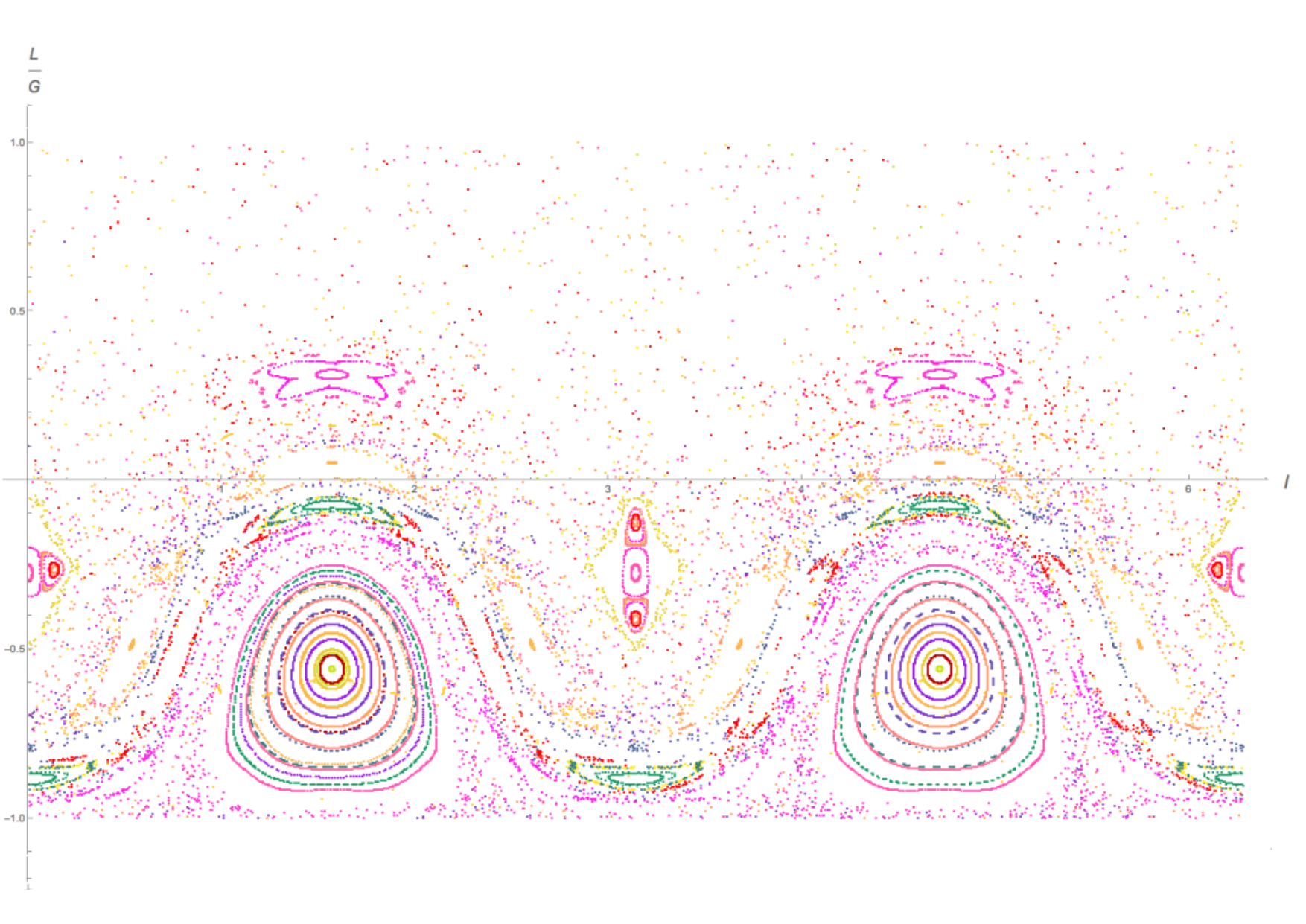}}\qquad
        \subfloat[][$\varepsilon=0.2$]
           {\includegraphics[width=.3\textwidth]{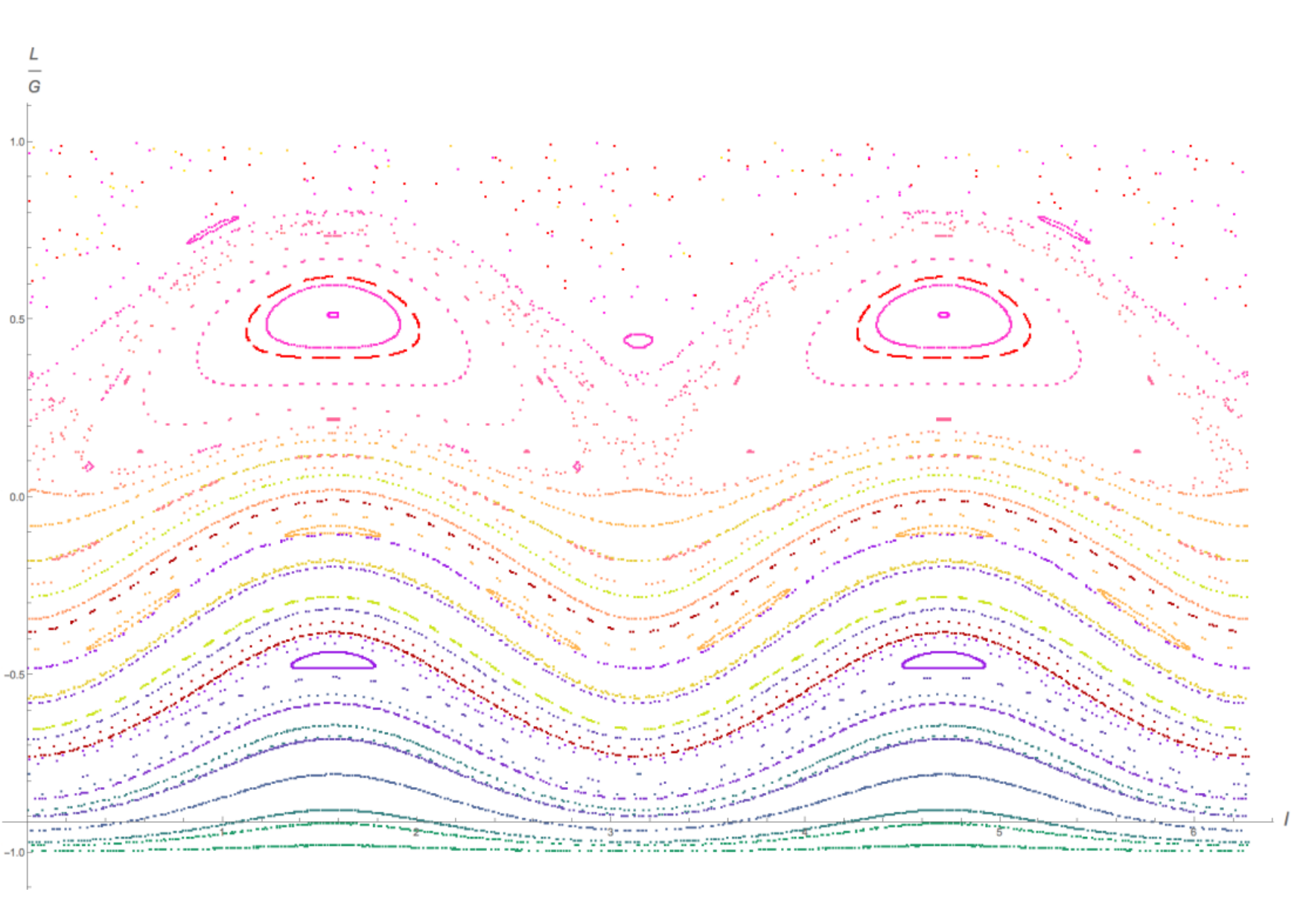}} \qquad
        \subfloat[][$\varepsilon=0.04$]
           {\includegraphics[width=.3\textwidth]{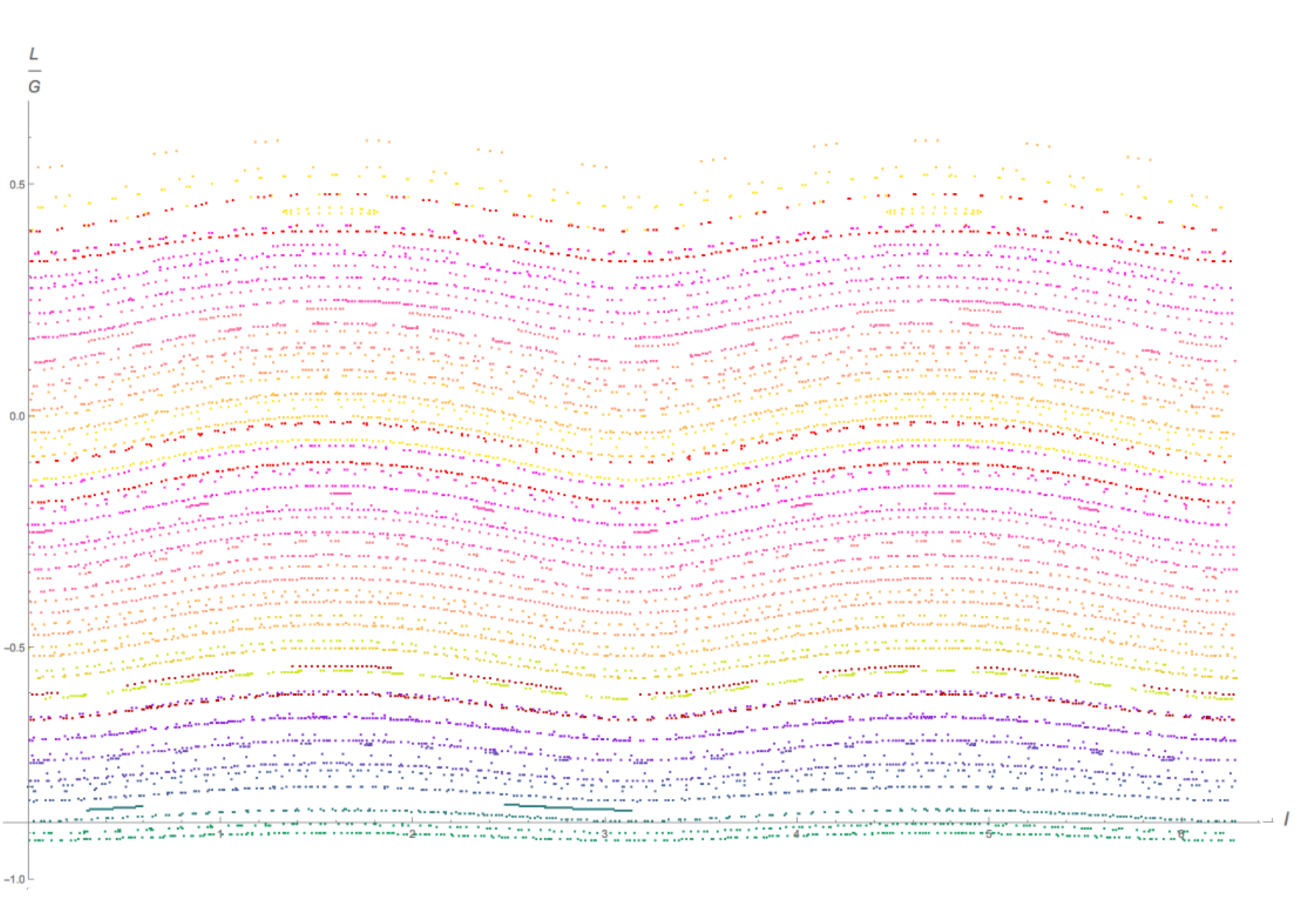}}
             \put (-300,37) {\footnotesize{$l$}}
              \put (-423,76) {\footnotesize{$L/G$}}
              \put (-153,5) {\footnotesize{$l$}}
              \put (-273,76) {\footnotesize{$L/G$}} 
               \put (-3,10) {\footnotesize{$l$}}
              \put (-123,76) {\footnotesize{$L/G$}} \\
        \caption{Poincar\'e map for the dynamically balanced sphere with a cat's toy mechanism
        for different  values of $\varepsilon$ given by \eqref{eq:defepsilon}.
        The system parameters were taken as
        $I_1=0.5,\; I_2=2.5,\; I_3=3,\; m=1,\; r=5,\sigma=10,\;$ and the first integral
        $\langle \MM,\Gg\rangle=0$. }
        \label{fig:poincare1}
    \end{figure}

\paragraph{Limit cases of the dynamics\\}
The numerical experiments in Fig \ref{fig:poincare1} suggest that the dynamics is approximately integrable when the non-dimensional parameter $\varepsilon$ is taken sufficiently large or small. Below we give an explanation of this phenomenon.
We begin by writing 
\begin{equation}\label{eq:OmaOml}
\OM=\OM_l+\OM_a,
\end{equation}
where $\OM_a$ is the contribution to $\OM$ due to the presence of the cat's toy mechanism (i.e. if $\sigma=0$ then $\OM_a=0$ and $\OM=\OM_l$ ). Explicitly we have
\begin{equation}\label{eq:defOml}
\OM_l=A\left(\MM+\frac{mr^2\langle\MM, A\Gg\rangle}{1-mr^2\langle A\Gg,\Gg\rangle}\Gg\right),
\end{equation}
and
\begin{equation}\label{eq:defOma}
\OM_a=mr^2 \sigma A\left(\Gg\times(\Gg\times\bm{E}_3)+\frac{mr^2\langle \Gg\times(\Gg\times\bm{E}_3), A\Gg\rangle}{1-mr^2\langle A\Gg,\Gg\rangle}\Gg\right).
\end{equation}
Introducing the non-dimensional time parameter $\tau=\sigma t$ the equations \eqref{eq:eqofmotionchapsphereV=0} may be written as
\begin{equation*}
       \MM '=\varepsilon \left(\frac{mr^2}{\Vert \MM\Vert} \MM\times \OM_l\right)+\MM\times \tilde \OM_a, \qquad
        \Gg'=\varepsilon \left(\frac{mr^2}{\Vert \MM\Vert} \gamma\times \OM_l\right)+\Gg\times \tilde \OM_a,
 \end{equation*}
 where $\tilde{\OM}_a:= \frac{1}{\sigma}\OM_a$ and $'=\frac{d}{d\tau}$.

On the one hand, if $\varepsilon\gg 1$, then, neglecting the term with $\tilde{\OM}_a$, which encodes the effect of the cat's toy mechanism,
we recover the vector field of the classical, integrable, Chaplygin sphere problem  \cite{Chaplygin}
  multiplied  by the overall factor $\frac{\varepsilon mr^2}{\Vert \MM\Vert}$ which is constant along the flow.

On the other hand, if $\varepsilon\ll 1$, then, neglecting the term of order $\varepsilon$ we obtain the equations 
\begin{equation}
\label{chaplyginwithrotshell}
\MM'=\MM\times\tilde\OM_a\qquad \Gg'=\Gg\times\tilde{\OM}_a.
\end{equation}
In addition to the first integrals \eqref{eq:firstintchapsphereV=0}, we now show that this system possesses an additional smooth first integral and a smooth invariant measure and is therefore integrable in virtue of Jacobi's last multiplier theorem \cite{Arnold}. To give the explicit form of these invariants we proceed in analogy with the analysis in section \ref{sect:M||gamma}. We first observe that the set of points $(\MM, \Gg)\in \R^3\times S^2$ such that $\gamma_3=0$ is invariant. Actually the dynamics along this set is simply harmonic. This follows from the observation that $\tilde{\OM}_a$ equals $-\kappa\bm{e}_3$, with $\kappa=\frac{mr^2\sigma}{I_3+mr^2}$, when $\gamma_3=0$ (which can be deduced from the expression for $\OM_a$ in \eqref{eq:defOma}). The additional smooth first integral of \eqref{chaplyginwithrotshell} only depends on $\Gg$ and is given by:
 \begin{equation*}
k(\Gg)=\begin{cases}
\exp\left(\frac{|\gamma_3|^{-\frac{mr^2}{I_3}}}{\sqrt{1-mr^2\langle\Gg, A\Gg\rangle}}\right) &\text{if } \gamma_3\neq 0,\\
        0 &\text{if } \gamma_3=0.
\end{cases}
\end{equation*}
The smooth invariant measure is $\chi(\Gg)d\MM d\Gg$ with
 \begin{equation*}
\chi(\Gg)=\begin{cases}
k(\Gg)|\gamma_3|^{-1} &\text{if } \gamma_3\neq 0,\\
        0 &\text{if } \gamma_3=0.
\end{cases}
\end{equation*}
The density of this invariant measure is nonnegative and only vanishes along a set of measure zero and therefore also falls within the class of measures considered in \cite{LGN2024}.

\subsection{The case $W=0$}
\label{sect:ChaplyginsphereW=0}

The equations of motion are 
\begin{align*}
\dot{\MM}&=\MM\times\OM, \qquad \dot{\Aa}=\Aa\times\OM, \qquad \dot{\Bb}=\Bb\times\OM, \qquad \dot{\Gg}=\Gg\times\OM,\\
\dot{\uu}&=-rB(\Gg	\times\OM)+\VV_s(\xx),
\end{align*}
with $\xx$  given by \eqref{eq:x-uchapsphere} and 
\begin{equation}
\OM(\MM, B, \uu)=A\left(\MM - mr\Gg\times B^{-1}\VV_s(\xx)+\frac{mr^2\langle\MM- mr\Gg\times B^{-1}\VV_s(\xx), A\Gg\rangle}{1-mr^2\langle A \Gg,\Gg\rangle}\Gg\right).
\end{equation}  

 Under the assumption that the vector field $\VV_s$ is divergence free, the system possesses an invariant measure.  We state this as the following proposition whose proof is a direct calculation using equations \eqref{eq:eqofmotion_gamma}, \eqref{eq:eqofmotchapsphere_M} and \eqref{eq:const}. 
    \begin{proposition}
    Suppose $\mathrm{div}_{\R^2} \VV_s=0$. Then
        $$\frac{1}{\sqrt{1-mr^2\langle\Gg,A\Gg\rangle}}d\MM \, d\uu \, d\Aa \,  d\Bb\,  d\Gg$$
        is an invariant measure. 
    \end{proposition}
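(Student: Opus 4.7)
The plan is to verify invariance by showing that $L_X\bigl(\mu\,\Omega_0\bigr)=0$, where $X$ is the dynamical vector field on the extended phase space coordinatized by $(\MM,\uu,\Aa,\Bb,\Gg)$, $\Omega_0 = d\MM\,d\uu\,d\Aa\,d\Bb\,d\Gg$ is the background Lebesgue--Haar measure, and $\mu(\Gg) = (1-mr^2\langle\Gg,A\Gg\rangle)^{-1/2}$. Using the standard formula, this reduces to checking the scalar identity
\begin{equation*}
X(\mu) + \mu\,\mathrm{div}_{\Omega_0}(X) = 0.
\end{equation*}
I would then split the divergence into its $\MM$-, $\uu$- and $\SO(3)$-contributions and treat them separately.

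For the $\MM$-block, the only non-trivial term is $\sum_i\partial_{M_i}(\MM\times\OM)_i = \epsilon_{ijk}M_j\,\partial_{M_i}\Omega_k$. From the explicit form of $\OM(\MM,B,\uu)$ given in the section, the Jacobian $\partial_{M_i}\Omega_k = A_{ki} + c(\Gg)(A\Gg)_i(A\Gg)_k$ is manifestly symmetric in $(i,k)$, so it contracts to zero against the antisymmetric $\epsilon_{ijk}$. For the $\uu$-block, noting that $u_3 = r$ is constant and $\xx = \uu - r\bm{e}_3$, the divergence splits into the direct $\VV_s$ contribution $\partial_{x_1}V^1_s + \partial_{x_2}V^2_s = \mathrm{div}_{\R^2}\VV_s$, which vanishes by hypothesis, plus a cross term arising because $\OM$ depends on $\uu$ through $B^{-1}\VV_s(\xx)$.

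For the $\SO(3)$-block, I would use that the vector field $\dot B = B\hat\OM$ is a $B$-dependent left-invariant vector field on $\SO(3)$, so its divergence with respect to the bi-invariant Haar measure equals the body-frame divergence of $\OM$ viewed as a function on the extended configuration space. When $V=0$, this contribution combined with $X(\mu) = \nabla_\Gg\mu\cdot(\Gg\times\OM)$ reproduces exactly the classical identity that certifies $\mu$ as the Chaplygin-sphere invariant-measure density on $\R^3\times S^2$, so the $V=0$ part of the balance vanishes. The remaining $V$-linear corrections to $\OM$ produce a piece of the body-frame divergence which I would pair with the leftover $\uu$-block cross term identified above, and verify that their sum equals a multiple of $\mathrm{div}_{\R^2}\VV_s$, again zero by hypothesis.

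The main obstacle is precisely this last book-keeping step: the $V$-linear terms enter both through the $\uu$-derivatives of $-rB(\Gg\times\OM)$ and through the body-frame derivatives in the $\SO(3)$-divergence, and one must show that the contracted Jacobian $\partial V^i_s/\partial x^j$ appearing in each piece assembles into $\mathrm{div}_{\R^2}\VV_s$ (with all pieces that are not pure trace canceling via the skew-symmetry of the cross product and the symmetry of $A$). Once this cancellation is verified, the Chaplygin-sphere identity handles the remainder and the proof is complete.
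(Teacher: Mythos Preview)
Your proposal is correct and takes the same approach as the paper, which records only that the proof ``is a direct calculation using equations \eqref{eq:eqofmotion_gamma}, \eqref{eq:eqofmotchapsphere_M} and \eqref{eq:const}''. You have simply spelled out the structure of that divergence computation---the symmetry of $\partial_{M_i}\Omega_k$ killing the $\MM$-block, the appeal to the classical Chaplygin-sphere identity for the $V=0$ part, and the assembly of the residual $V$-dependent cross terms into $\mathrm{div}_{\R^2}\VV_s$---in more detail than the paper itself provides.
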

  The existence of this invariant
  measure     was already known  in some particular cases.
  In  \cite{borisov2018} it was found for $\VV_s$ corresponding to the uniformly 
    rotating plane  (i.e. given by \eqref{eq:Vrot}) and in \cite{kilin} for the  non-autonomous vector
    field $\VV_s$ corresponding to a vibrating plane. 

Assuming distinct moments of inertia, $I_j$, and non-zero $\VV_s$, we do not expect existence of 
additional first integrals and we believe that the dynamics is chaotic.
 In fact the papers  \cite{borisov2018} and \cite{kilin} perform numerical explorations
 for the particular vector fields  $\VV_s$ mentioned above and reach this 
 conclusion.

\section{A body of revolution with a cat's toy mechanism}\label{sect:bodyofrev}
This section considers the cat's toy mechanism described in section \ref{sect:description} 
and illustrated in Fig \ref{Fig:rotating-shell} under the 
additional assumption that the fastened rigid body possesses an axial symmetry along the axis of rotation of the shell. This situation puts us in the framework of section \ref{sect:axiallysymmetric}. Therefore, assuming that $V=0$ and that the axis $\bm{E}_3$ of the body frame $\Sigma_b$ is aligned with the aforementioned symmetry axis, we have
$$\WW_b(\rr)=\sigma\rr\times\bm{E}_3,$$
as in \eqref{eq:Wrot}. In particular, the system possesses the moving energy integral \eqref{eq:movenergybodyofrev}. If $\sigma=0$, one recovers the classical problem of a solid of revolution rolling on the plane. This problem is well-known to be integrable in virtue of the existence of two first integrals $J_1,J_2$ and an invariant measure found by Chaplygin \cite{Chaplygin} (see \cite{borisov2002} for historical details).

In section \ref{sect:invmeasurebodyofrev} below we indicate that for any  $\sigma\in\R$ the system possesses an invariant measure whose form is identical to the one found by Chaplygin in the case $\sigma=0$. Furthermore, in proposition \ref{prop:firstintbodyofrev} we show that a suitable modification of $J_1$ and $J_2$ are first integrals of the system for any $\sigma\in\R$. The existence of these integrals, the invariant measure and the moving energy allow us to conclude that the system is integrable by Jacobi's last multiplier theorem \cite{Arnold}.

This situation is reminiscent of the (integrable) problem of a homogeneous sphere rolling without slipping on a surface of revolution. If the surface rotates about its axis of symmetry at constant, but arbitrary angular speed, modifications of the first integrals and the invariant measure persist and the problem remains integrable \cite{borisov2015, Fasso2016}.

\subsection{Preliminaries}

Given that the shell $\Ss$ is a body is of revolution, the relation \eqref{eq:GaussMap} between $\rr$ and $\Gg$ given by the Gauss map, may be described  by (see e.g.  \cite{borisov2002, Cushman}):
\begin{equation}
\label{eq:rhof1f2}
\rr(\Gg)=-\bm{n}_b^{-1}(\Gg)=(f_1(\gamma_3)\gamma_1,f_1(\gamma_3)\gamma_2,f_2(\gamma_3)),
\end{equation}
    where $f_1,f_2$ are real functions determining the shape of $\Ss$, which satisfy the differential equation
     $$f_2'(\gamma_3)\gamma_3=f_1(\gamma_3)\gamma_3-(1-\gamma_3^2)f_1'(\gamma_3).$$
 The function $f_1$ is strictly positive, its value being  equal to a principal radius of curvature
 of $\Ss$ (see \cite{LGNMontaldi}). On the other hand, the symmetric distribution of mass of the body 
 implies that the first two moments of inertia are equal so 
\begin{equation*}
\I=\mbox{diag}(I_1,I_1,I_3).
\end{equation*}

From proposition \ref{prop:redeqofmotion}, we have that the $\rm{SE}(2)$-reduced equations of motion (\ref{eq:redeqofmot}) are the restriction of
         \begin{align}
         \label{eqofmotionrotshell}
            \dot{\MM}&=\MM\times\OM+m\dot{\rr}\times(\OM\times\rr)+mg\rr\times\Gg
            +m\sigma(\rr\times\bm{E}_3)\times(\dot\rr+\OM\times\rr),\\
            \dot{\Gg}&=\Gg\times\OM,\nonumber
	\end{align}
to the invariant set  $\Vert\Gg\Vert^2=1$, where
\begin{equation}
\label{eq:OMV=0sp}
\OM(\MM,\Gg)=A(\Gg)\left(\MM+\frac{m\langle\MM+m\sigma \rr\times (\rr \times \bm{E}_3) , A(\Gg)\rr\rangle}{1-m\langle A(\Gg)\rr,\rr\rangle}\rr-m \sigma \rr\times (\rr \times \bm{E}_3)\right),
\end{equation}
and $A(\Gg)$ is given by \eqref{eq:A(gamma)}.
	Equations \eqref{eqofmotionrotshell} have an extra $\SO(2)$-symmetry corresponding to the rotations about the axis of symmetry of the body. This corresponds to the transformation 
\begin{equation}
\label{eq:SO(2)symmetry}
\MM\mapsto R_\phi \MM,\qquad \Gg\mapsto R_\phi\Gg\qquad\text{with}\qquad R_\phi=\begin{pmatrix}
\cos\phi & -\sin\phi & 0\\
\sin\phi & \cos\phi & 0\\
0 & 0 & 1
\end{pmatrix}.
\end{equation}
It can be checked from \eqref{eq:rhof1f2} and \eqref{eq:OMV=0sp} that $\rr$ and $\OM$ accordingly transform as $\rr\mapsto R_{\phi}\rr$, $\OM\mapsto R_{\phi}\OM$ and it is immediate to see that equations \eqref{eqofmotionrotshell} are invariant.
 
\subsection{Existence of an invariant measure}\label{sect:invmeasurebodyofrev} 
When $\sigma=0$, the system possesses the following invariant measure found by Chaplygin \cite{Chaplygin} (see also \cite{borisov2002}), 
\begin{equation}
\label{eq:fullinvmeasurebodyofrev}
\frac{1}{\mu(\gamma_3)}d\MM d\Gg,
\end{equation}
where
\begin{equation}
\label{eq:invmeasurebodyofrev}
\begin{split}
\mu(\gamma_3)&=\sqrt{I_1 I_3+m\langle \rr, \mathbb{I}\rr\rangle} \\
&=\sqrt{I_1I_3 +mI_1f_1(\gamma_3)^2(1-\gamma_3^2)+mI_3 f_2(\gamma_3)^2}.
\end{split}
\end{equation}
One can check that the term proportional to $\sigma$ in \eqref{eqofmotionrotshell} has zero divergence (with respect to $\MM$) and that the terms in $\OM(\MM,\Gg)$ in \eqref{eq:OMV=0sp} 
proportional to $\sigma$ vanish when taking the divergence with respect to $\MM,\Gg$. As a consequence we have the following.
\begin{proposition}
The measure \eqref{eq:fullinvmeasurebodyofrev} is invariant by the system \eqref{eqofmotionrotshell}
 for any value of $\sigma\in\R$.
 \end{proposition}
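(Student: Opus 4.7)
The plan is a linearization argument around Chaplygin's classical case $\sigma = 0$. For $\sigma = 0$ the system reduces to that of a body of revolution rolling without slipping, for which Chaplygin's theorem \cite{Chaplygin, borisov2002} gives invariance of $\mu^{-1}d\MM d\Gg$. Denote this case by $X_0$ and the full vector field by $X_\sigma$. Since $\mu^{-1}$ depends only on $\gamma_3$, the measure-invariance condition is equivalent to the single scalar identity
\begin{equation*}
\mu(\gamma_3)\,\mathrm{div}_{\MM,\Gg}(X_\sigma) \;=\; \mu'(\gamma_3)\,(\Gg\times\OM)_3.
\end{equation*}
This holds at $\sigma = 0$, so it is enough to show that the $\sigma$-dependent pieces on both sides agree.

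First I would exploit the decomposition $\OM = \OM_0 + \sigma\OM_1(\Gg)$ coming from \eqref{eq:OMV=0sp}, noting that $\OM_1$ depends only on $\Gg$ because the $\sigma$-terms in \eqref{eq:OMV=0sp} do not involve $\MM$. Substituting into \eqref{eqofmotionrotshell} expresses $X_\sigma - X_0$ as a sum of (i) a term $\MM\times(\sigma\OM_1)$, linear in $\MM$ with $\MM$-independent coefficient, and (ii) several $\MM$-affine terms coming from $m\dot\rr\times(\OM\times\rr)$ together with the explicit correction $m\sigma(\rr\times\bm{E}_3)\times(\dot\rr+\OM\times\rr)$. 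The $\MM$-divergence of each such piece vanishes: for (i) by the general identity $\mathrm{div}_\MM(\MM\times A(\Gg)) = 0$; for (ii) by observing that the $\MM$-dependence enters only through $\OM$, which is affine in $\MM$, and is contracted into cross-product expressions of the same skew-symmetric form. This handles the $\MM$-part of the divergence.

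It then remains to match the $\Gg$-divergence of the $\sigma$-corrections in $\dot\MM$ and $\dot\Gg$ against $\mu'(\gamma_3)$ times the $\sigma$-contribution $(\Gg\times\sigma\OM_1)_3$ on the right-hand side. The hard part will be carrying out these $\Gg$-derivatives explicitly, using the parametrization \eqref{eq:rhof1f2} of $\rr(\Gg)$ together with the shape identity $f_2'(\gamma_3)\gamma_3 = f_1(\gamma_3)\gamma_3 - (1-\gamma_3^2)f_1'(\gamma_3)$. I expect the $\mathrm{SO}(2)$ symmetry \eqref{eq:SO(2)symmetry} of the reduced system to be the crucial simplification: it forces all the relevant scalar contractions to be functions only of $\gamma_3$, $M_3$ and $\langle\MM,\Gg\rangle$, reducing the required identity to the same first-order ODE satisfied by $\mu(\gamma_3)$ that governs Chaplygin's original $\sigma = 0$ case, now applied with $\OM_1$ in place of $\OM_0$.
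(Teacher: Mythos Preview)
Your strategy coincides with the paper's: both linearize around Chaplygin's $\sigma=0$ case and verify that the $\sigma$-dependent corrections to the vector field contribute nothing to the invariance condition for the same density $\mu(\gamma_3)^{-1}$. The paper's argument is the single sentence preceding the proposition: the explicit $\sigma$-term in $\dot\MM$ has zero $\MM$-divergence, and the $\sigma$-part of $\OM$ contributes nothing to the $(\MM,\Gg)$-divergence.

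You overcomplicate the final step, however. The $\mathrm{SO}(2)$-symmetry you invoke delivers something stronger than you use: since $\rr$, $A(\Gg)\rr$ and $\rr\times(\rr\times\bm E_3)$ all have the form $\big(c(\gamma_3)\gamma_1,\, c(\gamma_3)\gamma_2,\, d(\gamma_3)\big)$, so does $\OM_1$, and a two-line check then gives $(\Gg\times\OM_1)_3=0$ and $\mathrm{div}_\Gg(\Gg\times\OM_1)=0$. Hence the $\sigma$-contribution to your right-hand side $\mu'(\gamma_3)\dot\gamma_3$ and the corresponding $\Gg$-divergence both vanish outright --- there is no residual first-order ODE for $\mu$ to be matched; both sides of the $\sigma$-correction identity are simply zero. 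Also note that the divergence of $(F_\MM,F_\Gg)$ on $\R^3_\MM\times\R^3_\Gg$ is $\mathrm{div}_\MM F_\MM+\mathrm{div}_\Gg F_\Gg$, so the $\dot\MM$-component never contributes to any ``$\Gg$-divergence''.
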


\subsection{First integrals}
\label{sectfirstintbodyofrev}
A convenient approach to investigate the reduced dynamics by the $\mathrm{SO}(2)$ symmetry
defined by \eqref{eq:SO(2)symmetry} is working with coordinates on $\R^3\times S^2 \ni(\MM,\Gg)$
that are invariant under the action. 
Following the approach of Borisov and Mamaev \cite{borisov2002} for the case $\sigma=0$, we consider the evolution of the quantities 
    \begin{equation*}
        K_1(\MM,\Gg)=\frac{\langle \MM,\rr\rangle}{f_1(\gamma_3)},\qquad
        K_2(\MM,\Gg)=\mu(\gamma_3)\Omega_3(\MM,\Gg),
    \end{equation*}
where $\mu(\gamma_3)$ is defined by \eqref{eq:invmeasurebodyofrev} and $\Omega_3(\MM,\Gg)$ is the third component of $\OM(\MM,\Gg)$ given by \eqref{eq:OMV=0}. One can easily check that $K_1$, $K_2$ are $\SO(2)$ invariant and a calculation shows that they satisfy the following 
equations
    \begin{equation}
    \label{eq:K1K2}
        \begin{pmatrix}
            \dot{K_1}\\
            \dot{K_2}
        \end{pmatrix}=\dot{\gamma_3}\left(G(\gamma_3)
        \begin{pmatrix}
            K_1\\
            K_2
        \end{pmatrix}+\sigma \:\bm{b}(\gamma_3)\right),
    \end{equation}
  where the $2\times 2$ matrix $G(\gamma_3)$ and the vector $\bm{b}({\gamma_3})\in\R^2$ are given by    
    \begin{equation*}
        G(\gamma_3)=-\frac{1}{\mu} 
        \begin{pmatrix}
            0 &  I_3\left(1-\left(\frac{f_2}{f_1}\right)'\right)\\
            m f_1 (f_1-f_2') & 0
        \end{pmatrix},
        \qquad
       \bm{b}(\gamma_3)=-\frac{1}{\mu} 
        \begin{pmatrix}
            0\\
             -m f_1 I_1(f_1\gamma_3-(1-\gamma_3^2)f_1')
        \end{pmatrix},
    \end{equation*}
 where the dependence of $f_1$, $f_2$, $f_1'$, $f_2'$ and $\mu$  on $\gamma_3$ has been omitted.

The structure of the system \eqref{eq:K1K2} allows us to apply the approach followed by Dalla Via, Fassò and Sansonetto in \cite[section 3.1]{fasso2022} to prove the existence of first integrals.
Specifically,  let $Y(\gamma_3)\in \rm{GL}(2)$ be the solution of the (non-autonomous, linear, homogeneous) $2\times2$ matrix differential equation 
    $$\frac{dY}{d\gamma_3}=G(\gamma_3) Y, \qquad Y(0)=\rm{id}_2,$$
    and $\bm{y}(\gamma_3)\in\mathbb{R}^2$ the solution of the (non-autonomous, linear, inhomogeneous) differential equation 
    $$\frac{d\bm{y}}{d\gamma_3}=G(\gamma_3)\bm{y}+\bm{b}(\gamma_3),\qquad \bm{y}(0)=0.$$
    In analogy with proposition 2 in \cite{fasso2022} (its second statement), we have.
    \begin{proposition}
    \label{prop:firstintbodyofrev}
        The two components $J_1,J_2$ of the map $J:\R^3\times S^2\rightarrow\mathbb{R}^2$ given by 
        $$J(\MM,\Gg)=Y^{-1}(\gamma_3)\left(\begin{pmatrix}
            K_1(\MM,\Gg)\\
            K_2(\MM, \Gg)
        \end{pmatrix} - \sigma \bm{y}(\gamma_3)\right)$$
        are first integrals of (\ref{eqofmotionrotshell}).
    \end{proposition}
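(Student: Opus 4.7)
The plan is to take equation \eqref{eq:K1K2} at face value (it is the calculation the paper refers to but does not display) and recognise it as a non-autonomous linear inhomogeneous ODE in the auxiliary ``time'' variable $\gamma_3$. Along any portion of a trajectory where $\dot\gamma_3\neq 0$, dividing by $\dot\gamma_3$ yields
\begin{equation*}
\frac{d}{d\gamma_3}\begin{pmatrix} K_1 \\ K_2\end{pmatrix}=G(\gamma_3)\begin{pmatrix} K_1\\ K_2\end{pmatrix}+\sigma\,\bm{b}(\gamma_3),
\end{equation*}
whose general solution by variation of parameters is $(K_1,K_2)^T=Y(\gamma_3)\bm{c}+\sigma\,\bm{y}(\gamma_3)$ with $\bm{c}\in\R^2$ an integration constant. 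Solving for $\bm{c}$ reproduces exactly the proposed $J(\MM,\Gg)$; this already heuristically explains why it should be conserved.

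To turn this into a proof valid on all of $\R^3\times S^2$ (without any hypothesis on $\dot\gamma_3$) I would simply differentiate $J$ directly along \eqref{eqofmotionrotshell} by the chain rule. Using the identity $\frac{dY^{-1}}{d\gamma_3}=-Y^{-1}G(\gamma_3)$, which follows from differentiating $Y^{-1}Y=\mathrm{id}_2$ together with $\frac{dY}{d\gamma_3}=GY$, one finds
\begin{equation*}
\dot J=-\dot\gamma_3\,Y^{-1}G\!\left(\begin{pmatrix} K_1\\ K_2\end{pmatrix}-\sigma\bm{y}\right)+Y^{-1}\!\left(\begin{pmatrix}\dot K_1\\ \dot K_2\end{pmatrix}-\sigma\dot\gamma_3\,\frac{d\bm{y}}{d\gamma_3}\right).
\end{equation*}
Substituting \eqref{eq:K1K2} for $(\dot K_1,\dot K_2)^T$ and the defining ODE $\frac{d\bm{y}}{d\gamma_3}=G\bm{y}+\bm{b}$ for the inhomogeneous solution produces six terms that group into three cancelling pairs — one carrying $G(K_1,K_2)^T$, one carrying $\sigma G\bm{y}$, and one carrying $\sigma \bm{b}$ — leaving $\dot J=0$. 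Since $J$ is a smooth function of $(\MM,\Gg)$ whose time derivative vanishes along every solution, it is a first integral.

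The main obstacle is not the verification above, which is pure linear algebra once \eqref{eq:K1K2} is granted, but rather the derivation of \eqref{eq:K1K2} itself from \eqref{eqofmotionrotshell} and \eqref{eq:OMV=0sp}. That step requires a careful computation exploiting the axial symmetry, the form of $\rr(\Gg)$ in \eqref{eq:rhof1f2}, and the ODE relating $f_1,f_2$ and their derivatives; the paper simply states this as ``a calculation shows'' and I would do likewise. A minor secondary concern is ensuring that $Y(\gamma_3)\in\mathrm{GL}(2)$ for every $\gamma_3\in[-1,1]$ so that $J$ is globally defined: this is automatic since the entries of $G(\gamma_3)$ are continuous on this compact interval ($\mu(\gamma_3)>0$ by \eqref{eq:invmeasurebodyofrev} and $f_1>0$), so standard linear-ODE theory yields global existence and invertibility of the fundamental matrix $Y$.
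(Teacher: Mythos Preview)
Your proposal is correct and follows precisely the route the paper indicates: the paper states only that ``the proof is a direct calculation relying on the definitions of $Y(\gamma_3)$, $\bm{y}(\gamma_3)$ and \eqref{eq:K1K2}'', which is exactly the chain-rule computation you spell out using $\frac{dY^{-1}}{d\gamma_3}=-Y^{-1}G$. Your additional remarks on the heuristic via variation of parameters and on the global invertibility of $Y$ go beyond what the paper records but are correct and helpful.
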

The proof is a direct calculation relying on the definitions of $Y(\gamma_3)$, $\bm{y}(\gamma_3)$ and \eqref{eq:K1K2}. These integrals can be expressed in explicit form if the body of revolution has spherical shape (Routh's sphere) and may be found in \cite{thesis}.
\begin{remark}
Equations \eqref{eq:K1K2} and our observations about the invariant measure made in \ref{sect:invmeasurebodyofrev} resemble some aspects of the discussion in Borisov and Mamaev \cite{borisov2002} about the gyrostatic generalization of the problem of a solid of revolution rolling without slipping on the plane. This may suggest the possibility of conjugating such problem with the one treated here via a (time-dependent) change of coordinates.
\end{remark}

\section{A homogeneous sphere}
\label{sect:homsphere}
We now assume that our convex body is a homogeneous sphere which puts us in the framework of section  \ref{sect:Chaplyginshpere} with the additional hypothesis of equal moments of inertia
$$I:=I_1=I_2=I_3.$$
The equations of motion \eqref{eq:eqofmotion} may be rewritten as  
\begin{align}
\label{eq:eqofmotionhomsphere}
\dot{\MM}&=\MM\times\OM,\qquad
\dot{\bm{\alpha}}=\bm{\alpha}\times\OM, \qquad 
\dot{\bm{\beta}}=\bm{\beta}\times\OM, \qquad
\dot{\Gg}=\Gg\times\OM, \\
\dot{\uu}&=-rB(\Gg\times\OM)+\VV_s(u)+B\WW_b(\Gg),\nonumber
\end{align}
where the Poisson vectors $\bm{\alpha}$, $\bm{\beta}$, $\Gg$ are the rows of the attitude matrix $B\in\SO(3)$ and we have used equations \eqref{eq:rhochapsphere} and \eqref{eq:x-uchapsphere} to write $\VV_s$ and $\WW_b$ as functions of $\uu$ and $\Gg$. The expression \eqref{eq:OmfcnM} for the angular velocity $\OM$ simplifies to 
\begin{equation}
\label{eq:Omhomsphrotshellrotplane}
\OM(\MM,\uu, B)=\frac{1}{I+mr^2}\left(\MM-mr\Gg\times(B^{-1}\VV_s(\uu)+\WW_b(\Gg))+\frac{mr^2}{I}\langle\MM,\Gg\rangle\Gg\right).
\end{equation}

The following proposition gives sufficient conditions for $\VV_s$ and $\WW_b$ to guarantee the existence of an invariant measure whose form coincides with the one of the linear system (obtained when both $\VV_s$ and $\WW_b$ vanish). In the statement $\mathrm{div}_{\R^2}$ and $\mathrm{div}_{S^2}$ denote the standard divergence of vector fields with respect to the euclidean distance in $\R^2$ and the induced distance on $S^2$ from the ambient euclidean metric on $\R^3$.

\begin{proposition}
\label{propinvmeasure}
Suppose that $\mathrm{div}_{\mathbb{R}^2} \VV_s(\xx)$ and $\mathrm{div}_{S^2}\WW(\Gg)$ identically vanish, then the system \eqref{eq:eqofmotionhomsphere} possesses the invariant measure $d\MM d\uu d\Aa d\Bb d\Gg$.
\end{proposition}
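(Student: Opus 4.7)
The plan is to verify invariance of the proposed measure directly: since the flow defined by \eqref{eq:eqofmotionhomsphere} is tangent to the constraint manifold (orthonormality of the Poisson vectors and $u_3=-\langle\rr,\Gg\rangle$), it is enough to compute the divergence of the right-hand side in the ambient coordinates $(\MM,\uu,\Aa,\Bb,\Gg)$ and show it vanishes along that manifold. The essential simplification is that, because the body is a homogeneous sphere, the expression \eqref{eq:Omhomsphrotshellrotplane} for $\OM$ splits cleanly as $\OM=\OM_0+\OM_V+\OM_W$, where
\begin{equation*}
\OM_0=\tfrac{1}{I+mr^2}\bigl(\MM+\tfrac{mr^2}{I}\langle\MM,\Gg\rangle\Gg\bigr),\quad
\OM_V=-\tfrac{mr}{I+mr^2}\Gg\times(V_1\Aa+V_2\Bb),\quad
\OM_W=-\tfrac{mr}{I+mr^2}\Gg\times\WW_b(\Gg),
\end{equation*}
after using $B^{-1}\VV_s(\uu)=V_1(\uu)\Aa+V_2(\uu)\Bb$ (valid since $\VV_s$ is horizontal). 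This decomposition isolates the dependence of $\OM$ on each block of variables and allows the divergence to be computed block by block.

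I would then carry out five short calculations, in this order. First, $\mathrm{div}_\MM(\MM\times\OM)=0$: the Jacobian $\partial\OM_k/\partial M_i=\tfrac{1}{I+mr^2}(\delta_{ik}+\tfrac{mr^2}{I}\gamma_i\gamma_k)$ is symmetric in $(i,k)$, so its contraction with $\epsilon_{ijk}M_j$ vanishes. Second, $\mathrm{div}_\Aa(\Aa\times\OM)=0$ and $\mathrm{div}_\Bb(\Bb\times\OM)=0$: only $\OM_V$ depends on $\Aa$ (resp.\ $\Bb$), and an $\epsilon$-tensor contraction (using $\sum_{i,k}\epsilon_{ijk}\epsilon_{kli}=-2\delta_{jl}$) produces terms proportional to $\langle\Aa,\Gg\rangle$ and $\langle\Bb,\Gg\rangle$, which vanish on the constraint manifold. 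Third, $\mathrm{div}_\uu(\dot\uu)=0$: using $\Gg\times(\Gg\times\Aa)=-\Aa$ on $\|\Gg\|=1$, $\langle\Aa,\Gg\rangle=0$, the $\uu$-dependent part of $-rB(\Gg\times\OM_V)+\VV_s(\uu)$ collapses to $\frac{I}{I+mr^2}\VV_s(\uu)$, whose divergence is a multiple of $\mathrm{div}_{\R^2}\VV_s=0$.

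The main obstacle is the remaining calculation of $\mathrm{div}_\Gg(\Gg\times\OM)$, where $\OM$ depends on $\Gg$ through all three pieces $\OM_0,\OM_V,\OM_W$. The $\OM_0$-contribution vanishes by the same antisymmetry as in step (a), together with $\Gg\times\Gg=0$. The $\OM_V$-contribution, because $V_1,V_2$ do not depend on $\Gg$, reduces to the same $\epsilon$-identity as the $\Aa,\Bb$ steps and again produces terms proportional to $\langle\Aa,\Gg\rangle,\langle\Bb,\Gg\rangle$. The delicate piece is $\OM_W$: the contribution involves both the explicit $\Gg$ and the Jacobian of $\WW_b(\Gg)$. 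Expanding the double $\epsilon$-contraction gives a term proportional to $\langle\Gg,\WW_b\rangle$ (which vanishes by tangency \eqref{eq:tang}) and a term proportional to $\langle\Gg,D_\Gg\WW_b\rangle-\|\Gg\|^2\mathrm{div}_{\R^3}\WW_b$. The key observation is the standard identity
\begin{equation*}
\mathrm{div}_{S^2}\WW=\mathrm{div}_{\R^3}\WW-\langle\Gg,D_\Gg\WW\rangle
\end{equation*}
for a vector field on $\R^3$ tangent to $S^2$ (evaluated on $\|\Gg\|=1$); with it, the $\OM_W$-contribution reduces to a multiple of $\mathrm{div}_{S^2}\WW_b=0$ by hypothesis.

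Summing the five contributions, the ambient divergence of the vector field vanishes on the constraint manifold, which gives invariance of the measure $d\MM\,d\uu\,d\Aa\,d\Bb\,d\Gg$. The proof is essentially a careful bookkeeping exercise; the only input beyond linear algebra and $\epsilon$-identities is the submanifold divergence formula recalled above and the two hypotheses on $\VV_s$ and $\WW_b$.
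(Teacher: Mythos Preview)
Your proposal is correct and follows essentially the same route as the paper, which simply states that the proof is a direct divergence computation relying on the identity $\mathrm{div}_{S^2}\WW(\Gg)=\operatorname{Tr}(\WW'(\Gg))-\Gg^{T}\WW'(\Gg)\Gg$ (this is exactly your submanifold divergence formula, with $D_\Gg\WW=\WW'(\Gg)\Gg$). Your block-by-block organization via the splitting $\OM=\OM_0+\OM_V+\OM_W$ is a clean way to carry out that direct computation, and the places where you evaluate on the constraint manifold (orthogonality of $\Aa,\Bb,\Gg$, tangency $\langle\WW_b,\Gg\rangle=0$) are all done after differentiating in the appropriate variable, so the argument is sound.
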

The proof follows from a direct computation and relies on the following observation: 
 \begin{equation*}
            \mathrm{div}_{S^2}\WW(\Gg)=\operatorname{Tr}(\WW'(\Gg))-\Gg^{T}\WW'(\Gg)\Gg, \qquad \Gg\in S^2.
        \end{equation*}
On the left hand side of the above relation $\WW$ is a vector field on the unit sphere $S^2$, whereas on the right it should be interpreted
as  a smooth
extension of $\WW$ to $\R^3$. The formula is valid independently of the extension and may be verified using, for example, spherical coordinates on $S^2$.

\subsection{A homogeneous sphere with a cat's toy mechanism  rolling on a uniformly rotating plane}
For the rest of the section we 
consider the problem of a homogeneous sphere with a cat's toy mechanism of
angular speed $\sigma \in\mathbb{R}$  rolling on a uniformly rotating plane at angular velocity $\eta\in\mathbb{R}$
as depicted in Fig \ref{F:rotplanrotshell}. 
The corresponding   expressions for $\VV_s$  and $\WW_b$ are given by   \eqref{eq:Vrot} and  \eqref{eq:Wrot}. Considering
 that for a spherical body $\xx=\uu-r \bm{e}_3$ and $\rr=-r\Gg$, we may write
 \begin{equation*}
\VV_s(\uu)=-\eta \uu \times \bm{e}_3\qquad\text{and}\qquad \WW_b(\Gg)=-r\sigma \Gg\times \bm{E}_3.
\end{equation*}
Hence, equations     \eqref{eq:eqofmotionhomsphere} take the form
\begin{align}
\label{homsphereeta1eta2}
\dot{\MM}&=\MM\times\OM,\qquad
\dot{\bm{\alpha}}=\bm{\alpha}\times\OM, \qquad 
\dot{\bm{\beta}}=\bm{\beta}\times\OM, \qquad
\dot{\Gg}=\Gg\times\OM, \\
\dot{\uu}&=-rB(\Gg\times\OM)-r\sigma B(\Gg\times\bm{ E}_3)-\eta \uu\times \bm{e}_3,\nonumber
\end{align}
with $$\MM=I\OM+mr^2\Gg \times (\OM \times \Gg) -mr \Gg \times ( r\sigma \Gg \times \bm{E}_3 + \eta B^{-1} (\uu\times \bm{e}_3)).$$
 The expression \eqref{eq:Omhomsphrotshellrotplane} for $\OM$ takes the form
\begin{equation}
\label{eq:Omhomsphrotshellrotplane2}
\OM(\MM,\uu, B)=\frac{1}{I+mr^2}\left(\MM+\frac{mr^2}{I}\langle \MM, \Gg\rangle \Gg
+mr \eta \,\Gg\times ((B^{-1}\uu)\times \Gg)
+mr^2\sigma \Gg \times (\Gg \times \bm{E}_3) \right ).
\end{equation}
\begin{figure}[h!]
    \centering
 \includegraphics[width=0.4\linewidth]{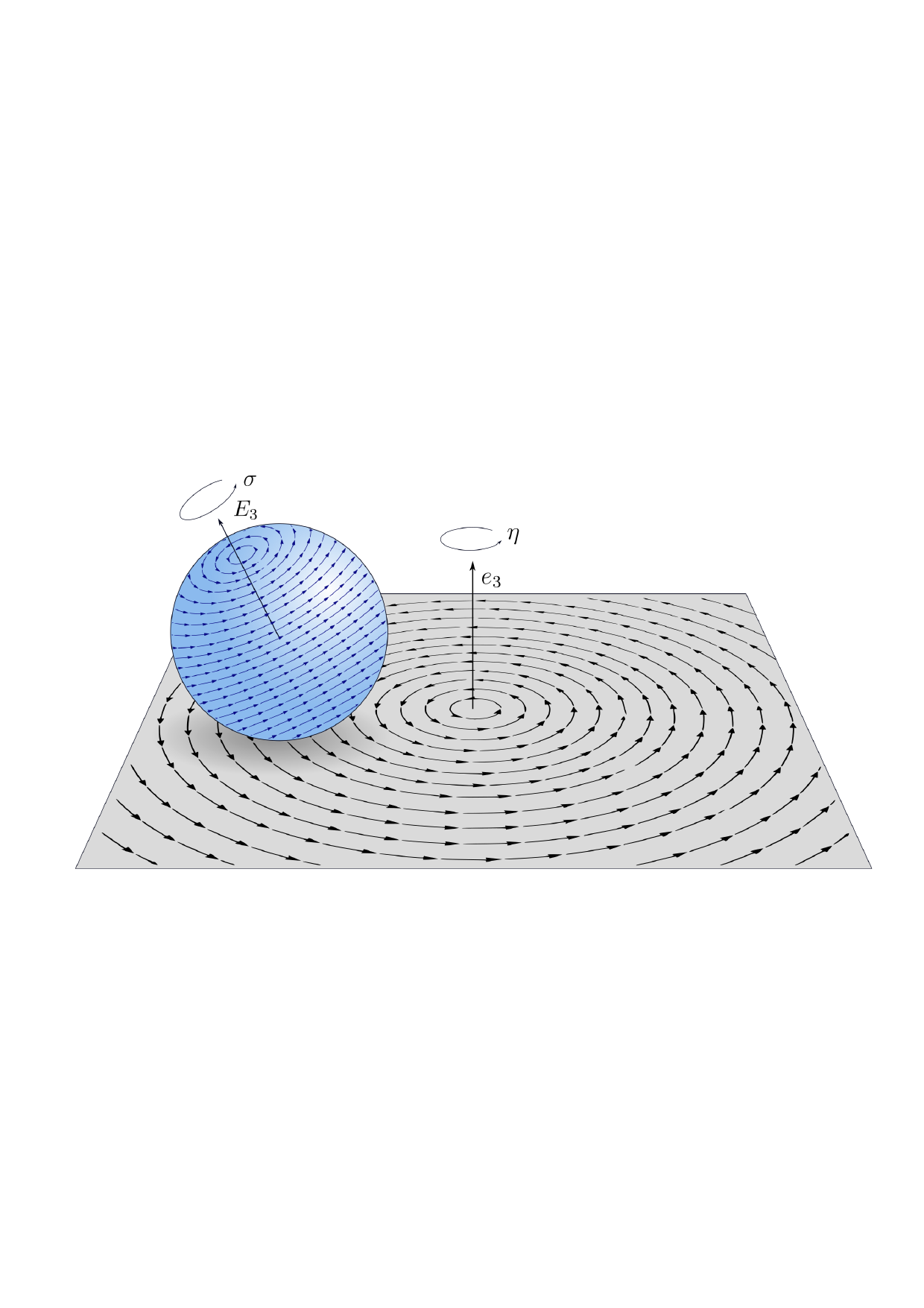}
 \caption{Homogeneous sphere with a cat's toy mechanism of angular speed $\sigma$ rolling without
 slipping on a uniformly rotating plane with angular speed $\eta$.}
 \label{F:rotplanrotshell}
\end{figure}
 
If $\eta=0$ the system admits the $\rm{SE}(2)$-symmetry described in 
section \ref{sect:V=0}  and the reduced system is integrable 
since it falls within the framework of section \ref{sect:bodyofrev}. On the other hand, if $\sigma=0$ we 
recover the classical problem of a homogeneous sphere rolling on a uniformly rotating plane which is
also well-known to be integrable. For the rest of the paper we analyze the dynamics for nonzero 
values of $\eta$  and $\sigma$. We will prove that is integrable if the generalized momentum $\MM$ is vertical
(i.e. parallel to $\Gg$)
and exhibit numerical evidence that it is chaotic otherwise.

\subsubsection{Symmetries, reduction and first integrals}
    
The system possesses two different, and commuting,  $\rm{SO}(2)$-symmetries corresponding to rotations
of the space frame $\Sigma_s$ about the $\bm{e}_3$  axis and rotations
of the body frame   $\Sigma_b$ about the $\bm{E}_3$ axis. The first of these symmetries
may be reduced by working with the body frame representation $\bm{U}$ of the vector $\overrightarrow{OC}$.
This vector satisfies  $\uu=B\bm{U}$ and, hence, the third equation in \eqref{homsphereeta1eta2} yields,
    \begin{equation*}
  \dot {\bm{U}}= -r(\Gg\times \OM)+\bm{U}\times\OM-r\sigma(\Gg\times \bm{E}_3)-\eta(\bm{U}\times\Gg).
\end{equation*}
Moreover, the expression \eqref{eq:Omhomsphrotshellrotplane2} implies that $\OM$ may be written
as a function of $(\MM, \Gg, \bm{U})$ in the form
\begin{equation}
\label{eq:OMhomSph}
\OM=\OM(\MM, \Gg, \bm{U})=\frac{1}{I+mr^2}\left(\MM+\frac{mr^2}{I}\langle \MM, \Gg\rangle \Gg
+mr \eta \,\Gg\times (\bm{U}\times \Gg)
+mr^2\sigma \Gg \times (\Gg \times \bm{E}_3) \right ).
\end{equation}
The expressions given above are independent of the row vectors $\Aa$ and $\Bb$ of the 
attitude matrix $B$. Therefore, we may extract from \eqref{homsphereeta1eta2} 
 the following closed system for $(\MM, \Gg, \bm{U})\in\R^3\times\R^3\times \R^3$,
\begin{equation}
\label{homsphereeta1eta2U}
\begin{split}
        \dot{\MM}&=\MM\times\OM,\\
        \dot{\Gg}&=\Gg \times \OM, \\
    \dot {\bm{U}}&= -r(\Gg\times \OM)+\bm{U}\times\OM-r\sigma(\Gg\times \bm{E}_3)-\eta(\bm{U}\times\Gg),
    \end{split}
\end{equation}
with $\OM$ given by \eqref{eq:OMhomSph}. The system possesses the geometric first integrals $\| \Gg \|$ and $\langle \bm{U},\Gg\rangle$ and its restriction to the 7-dimensional manifold
\begin{equation*}
\mathcal{M}_7=\{ (\MM, \Gg, \bm{U})\in\R^3\times\R^3\times \R^3 \, : \, \| \Gg \|=1, \quad \mbox{and} \quad \langle \bm{U},\Gg\rangle =r\}
\end{equation*}
 defines a flow  isomorphic 
to the reduced system on $\mathcal{A}/\rm{SO}(2)$. This flow has the following set of equilibrium points
\begin{equation*}
\mathcal{M}_7^{\mathrm{\small eq}}=\{ \MM=(0,0,M_3), \; \Gg=(0,0,\pm 1), \;  \bm{U}=(0,0,\pm r) \, :\, M_3\in \R\},
\end{equation*}
which correspond to   motions where the sphere is uniformly spinning without rolling, positioned at the origin $O$ of the plane
$\Pi$ and with the $\bm{E}_3$-axis of the cat's toy mechanism aligned vertically (at these configurations
the vector fields $\VV_s$ and $\WW_b$ vanish). In what follows we shall restrict our attention to the
complementary part of the phase space $\mathcal{M}_7$ which we denote by $\widetilde{ \mathcal{M}_7}$,
namely,
\begin{equation*}
\widetilde{ \mathcal{M}_7}=\mathcal{M}_7\setminus \mathcal{M}_7^{\mathrm{\small eq}}.
\end{equation*}
Obviously, $\widetilde{ \mathcal{M}_7}$ is an open dense subset of $\mathcal{M}_7$ which is invariant by the dynamics.

The additional $\rm{SO}(2)$-symmetry 
described above, corresponding to rotations of the body frame about the $\bm{E}_3$-axis, results in the 
invariance of the manifold $\widetilde{ \mathcal{M}_7}$ and the equations \eqref{homsphereeta1eta2U} under the action,
\begin{equation}
\label{eq:SO(2)symmetryhomSph}
\MM\mapsto R_\phi \MM,\qquad \Gg\mapsto R_\phi\Gg,\qquad \bm{U}\mapsto R_\phi\bm{U}\qquad\text{with}\qquad R_\phi=\begin{pmatrix}
\cos\phi & -\sin\phi & 0\\
\sin\phi & \cos\phi & 0\\
0 & 0 & 1
\end{pmatrix}.
\end{equation}
The invariance of \eqref{homsphereeta1eta2U} is readily verified since, as follows from \eqref{eq:OMhomSph},
the angular velocity $\OM$ also transforms as $\OM\mapsto R_\phi \OM$. Hence the system may be reduced 
to the  quotient space that we denote as
\begin{equation*}
\mathcal{R}_6=\widetilde{ \mathcal{M}_7}/\mathrm{SO}(2),
\end{equation*}
which is a smooth 6-dimensional manifold (since the action \eqref{eq:SO(2)symmetryhomSph} is free on 
$\widetilde{ \mathcal{M}_7}$).

Additionally, the system \eqref{homsphereeta1eta2U} possesses the  momentum first integrals
\begin{equation}
\label{eq:momintegralshomsphere}
 \Vert \MM\Vert ^2, \qquad  \langle \MM, \Gg\rangle, 
\end{equation}
   and the preserved moving energy  $E_{mov}$ given by \eqref{eq:moven-general}. In our case, using that 
   $\|\xx\|^2=\|\bm{U}-r\Gg\|^2=\|\bm{U}\|^2-r^2$ along $\mathcal{M}_7$,  this may be written as
         \begin{equation}\label{moven2homsph}
            E_{mov}=\frac{I}{2}\langle\OM+\sigma \bm{E}_3,\OM+\sigma \bm{E}_3\rangle+\frac{mr^2}{2}\Vert \Gg\times(\OM+\sigma \bm{E}_3)\Vert ^2-\frac{m}{2}\eta^2\Vert \bm{U} \Vert ^2.
     \end{equation}  
     
     It is easily seen that  the momentum fist integrals \eqref{eq:momintegralshomsphere}, as well as the preserved moving energy \eqref{moven2homsph},
     are invariant under the action \eqref{eq:SO(2)symmetryhomSph} and therefore descend as first integrals of the reduced
     system on     $\mathcal{R}_6$.

 Finally, since $\mathrm{div}_{S^2}\WW_b=0$ and $\mathrm{div}_{\R^2}\VV_s=0$, by proposition \ref{propinvmeasure}, the system \eqref{homsphereeta1eta2}  has invariant measure $d\MM d\uu d\Aa d\Bb d\Gg$. It can be checked that $d\MM d\Gg d\bm{U}$ is an invariant measure for the reduced system \eqref{homsphereeta1eta2U}.  By general results on free actions of compact groups (e.g. Lemma 3.4 in \cite{FedJov04}),  this invariant measure descends to a smooth invariant measure for the 
 reduced system on  $\mathcal{R}_6$.
 
 Summarizing, the reduced dynamics on the 6-dimensional reduced manifold $\mathcal{R}_6$  possesses 3 first integrals  \eqref{eq:momintegralshomsphere} and \eqref{moven2homsph},
and a smooth invariant measure. Below we argue that these invariants
lead to the integrability of the  dynamics
 for  initial conditions with $\MM$ and $\Gg$ parallel, and we  exhibit numerical evidence
indicating that the dynamics   is chaotic otherwise.

 \subsubsection{The case  $M$ parallel to  $\gamma$}
It is not hard to see that initial conditions on $\widetilde{ \mathcal{M}_7}$ 
with $\MM$ parallel to $\Gg$ are critical points of the momentum first integrals \eqref{eq:momintegralshomsphere}. By an argument similar to the one given
in subsection \ref{sect:M||gamma}, it is seen that their level sets are 4-dimensional
smooth submanifolds of  $\widetilde{ \mathcal{M}_7}$ and hence project to 
3-dimensional submanifolds of the orbit space $\mathcal{R}_6$.
 The reduced dynamics restricted to these 3-dimensional submanifolds possesses the moving energy integral
 \eqref{moven2homsph} and an invariant measure and hence it is integrable
 by Jacobi's last multiplier theorem \cite{Arnold}.
 
  \subsubsection{The general case ($M$ and $\gamma$ not parallel)}
Initial conditions where $\MM$ and $\Gg$ are not parallel are regular points of the
joint map  from  $\widetilde{ \mathcal{M}_7}$ to $\R^3$ whose components are the 
momentum first integrals \eqref{eq:momintegralshomsphere} and the 
moving energy \eqref{moven2homsph}. As a consequence, their 
level sets are 4-dimensional submanifolds of $\widetilde{ \mathcal{M}_7}$. These project 
to 3-dimensional invariant submanifolds of the orbit space $\mathcal{R}_6$ on which
the dynamics can be investigated using a 
 $2$-dimensional Poincar\'e map. 
 
 In order to construct the Poincar\'e map we borrow  ideas of Bizyaev, Borisov, Mamaev \cite{borisov2018} and introduce
 the following scalar functions on $\widetilde{ \mathcal{M}_7}$ which are invariant
 under the action \eqref{eq:SO(2)symmetryhomSph}:
 \begin{align*}
L&=M_3,    &   s_1&={U}_1\gamma_1+{U}_2\gamma_2, & s_2&={U}_1\gamma_2-{U}_2\gamma_1,\\
  G&=\| \MM\|, & f&=\langle  \MM, \Gg\rangle ,  &  
   g&=\arctan\left(\frac{G(M_2\gamma_1-M_1\gamma_2)}{fL- G^2 \Gg_3}\right).
\end{align*}
Then  $(L,s_1,s_2,G, f,g)$ are local coordinates on the reduced space $\mathcal{R}_6$
with the property that $G$ and $f$ are first integrals of the reduced dynamics.
The explicit form of the reduced system and the moving energy integral $E_{mov}$
 in these variables may be computed using the following formulae.
\begin{align*}
M_1&=\sqrt{G^2-L^2}\sin l,   &  M_2&=\sqrt{G^2-L^2}\cos l, &  M_3&=L,   \\
\gamma_1&= -\cos l\sin g+\frac{L\sin l\cos g}{G},     & \gamma_2&=\sin l\sin g+\frac{L\cos l\cos g}{G}, & \gamma_3&=- \frac{\sqrt{G^2-L^2}\cos g}{G},
\end{align*}
and
\begin{align*}
 U_1&=   \frac{G (L \cos g (s_2 \cos l + s_1 \sin l) + 
   G \sin g(-s_1 \cos l + s_2 \sin l))}{L^2 \cos^2 g+ G^2\sin^2g},  & 
     U_3&=  \frac{G s_1}{(G^2-L^2) \cos g},  \\ 
 U_2&=  \frac{G (L \cos g (s_1 \cos l - s_2 \sin l) + 
   G \sin g(s_2 \cos l + s_1 \sin l))}{L^2 \cos^2 g+ G^2\sin^2g}.
\end{align*}
The resulting expressions for the reduced system and the moving energy $E_{mov}$ 
are independent of the  angle $l$ in virtue of the 
$\mathrm{SO}(2)$-symmetry \eqref{eq:SO(2)symmetryhomSph} (actually, the action \eqref{eq:SO(2)symmetryhomSph} fixes $(L, s_1, s_2, G, f, g)$ and shifts $l\mapsto l+\phi$).

We constructed a family of Poincar\'e sections (for the parameter values indicated in the 
caption of Fig \ref{fig:poincare})
 by setting the values of the first integrals
$G=2$, $f=0$, fixing the value
of $g=\frac{\pi}{4}$ and 
considering different level sets of $E_{mov}$.  The resulting Poincar\'e map,
projected to the plane $s_2$-$L$, is illustrated in Fig \ref{fig:poincare}. We observe
a transition from integrable to chaotic motion typical of KAM theory for Hamiltonian systems
as the value of the moving energy is varied.

    \begin{figure}[h]
        \centering
        \subfloat[][$E_{mov}=-20$]
           {\includegraphics[width=.27\textwidth]{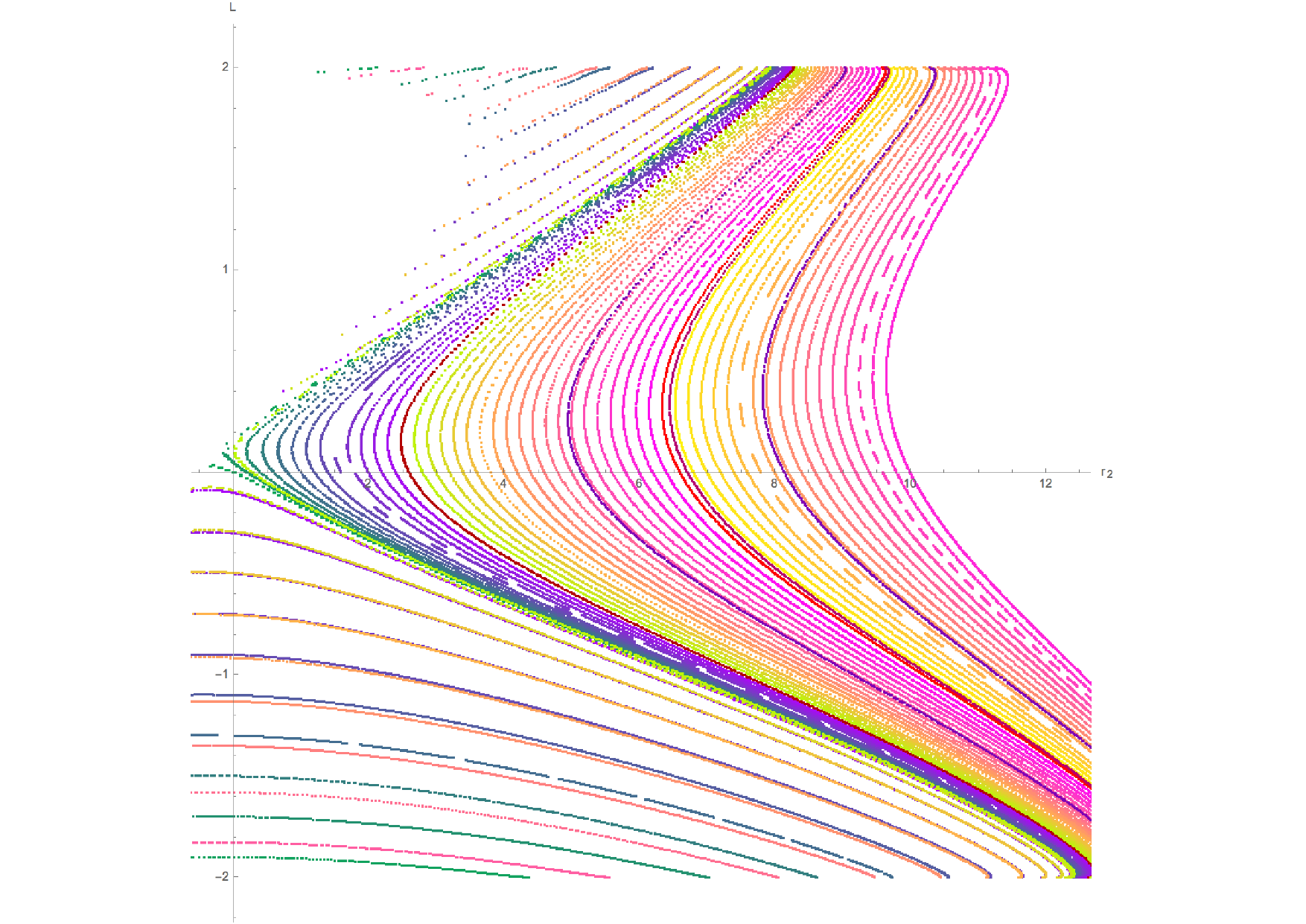}} \qquad
        \subfloat[][$E_{mov}=-10$]
           {\includegraphics[width=.27\textwidth]{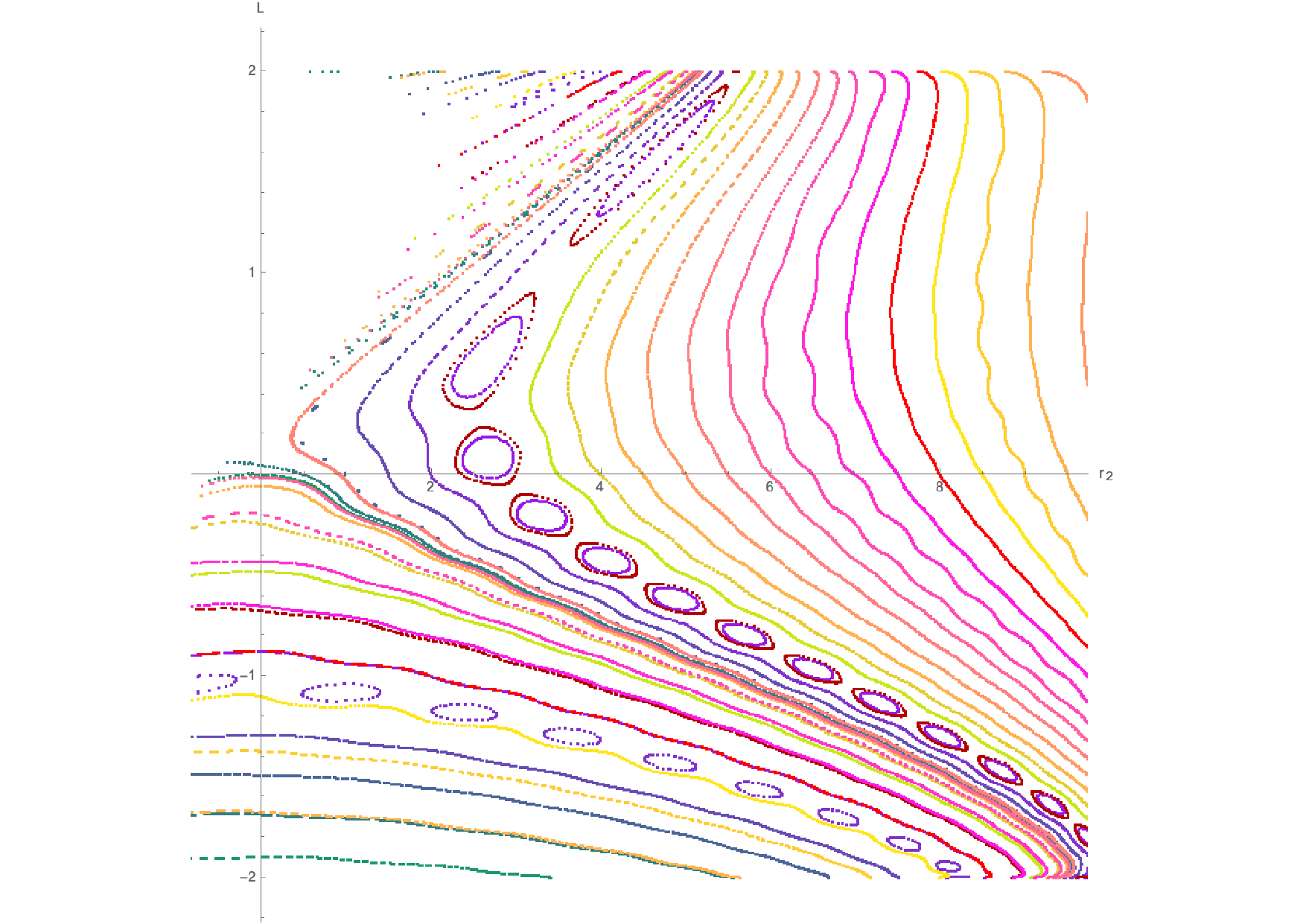}} \qquad
        \subfloat[][$E_{mov}=-8$]
           {\includegraphics[width=.27\textwidth]{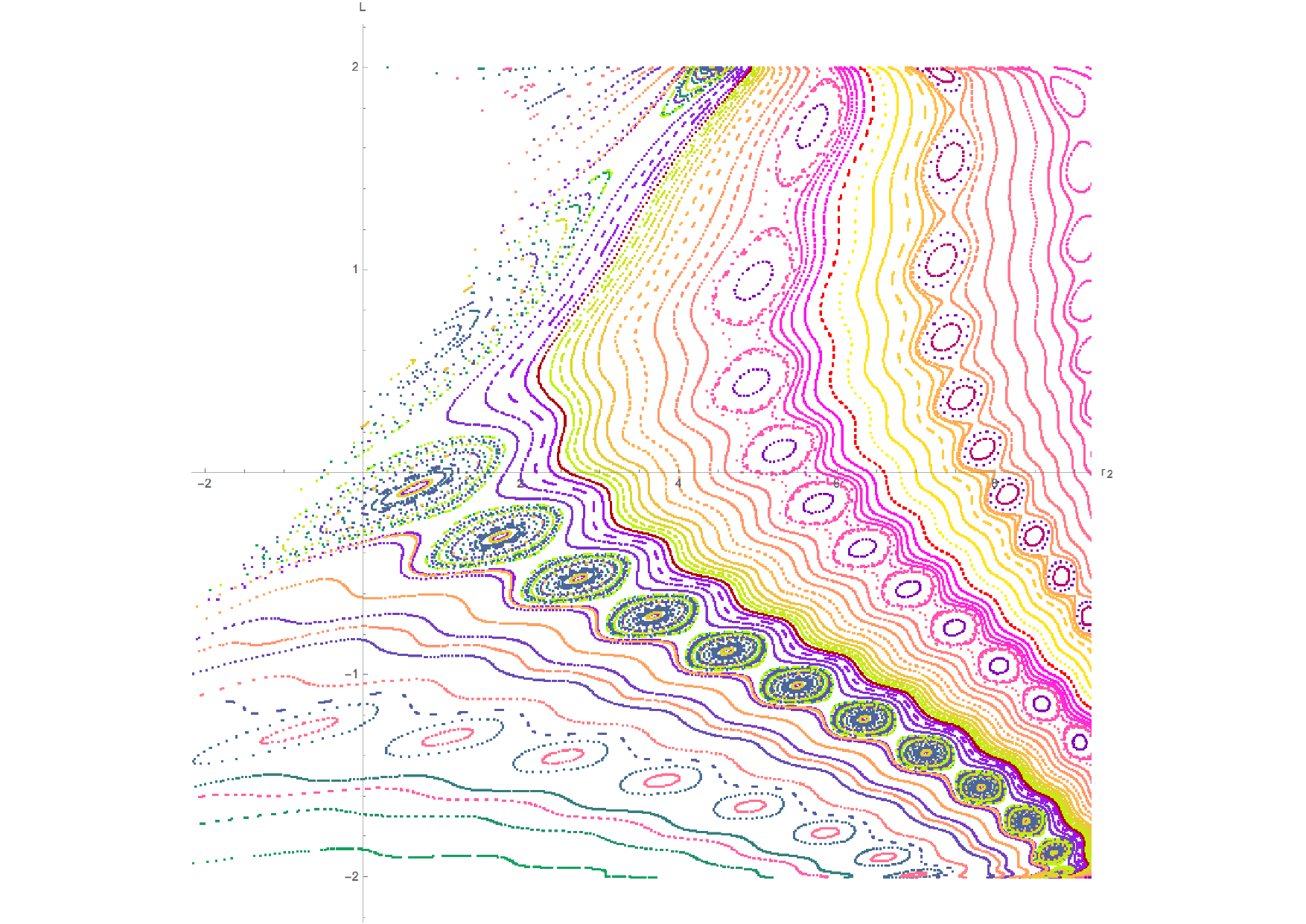}} 
             \put (-276,55) {\small{$s_2$}}
              \put (-376,100) {\small{$L$}}
              \put (-144,55) {\small{$s_2$}}
              \put (-236,100) {\small{$L$}}
               \put (-5,55) {\small{$s_2$}}
              \put (-90,100) {\small{$L$}}
             \\
        \subfloat[][$E_{mov}=-7$]
           {\includegraphics[width=.27\textwidth]{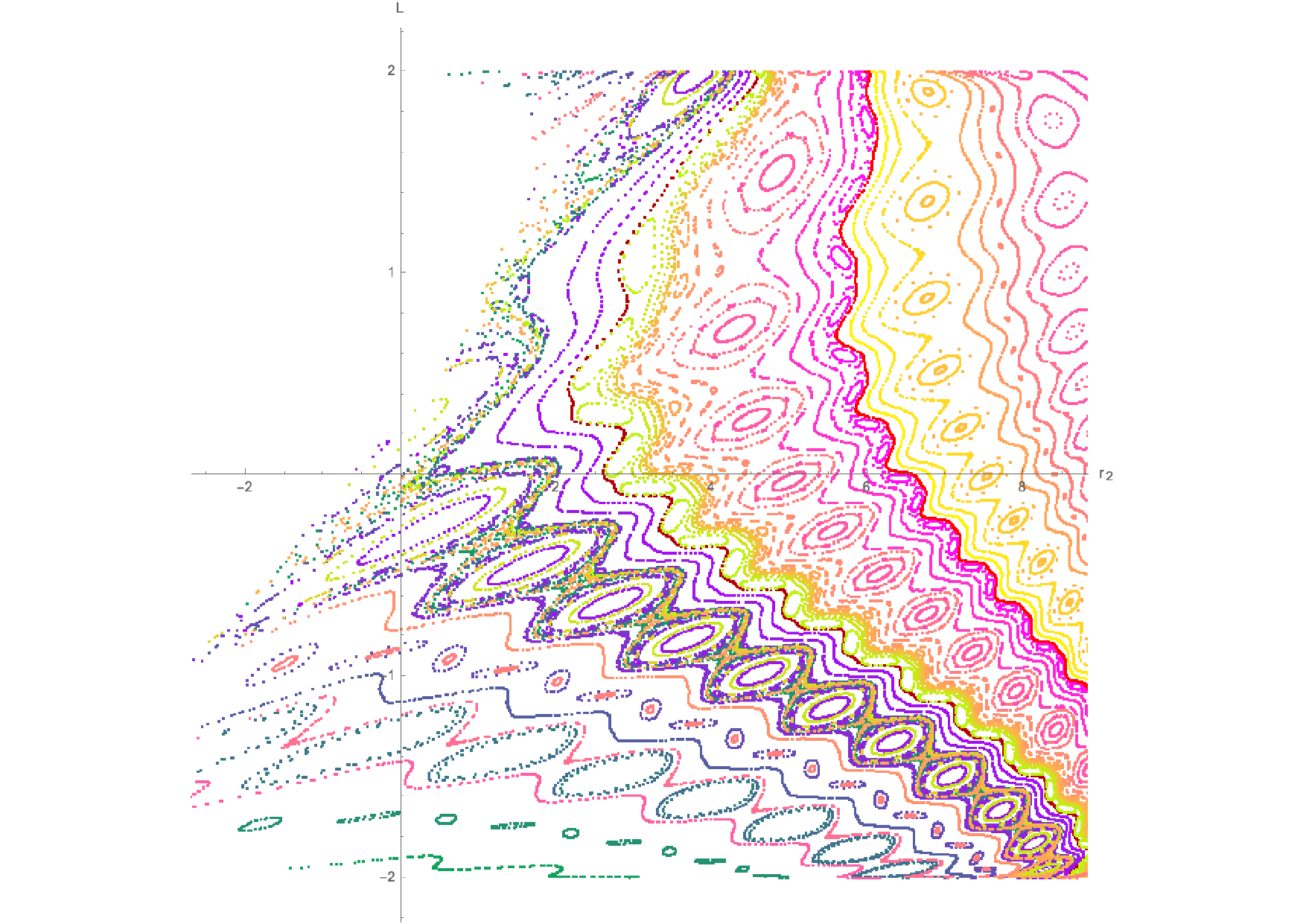}}\qquad
        \subfloat[][$E_{mov}=-6$]
           {\includegraphics[width=.27\textwidth]{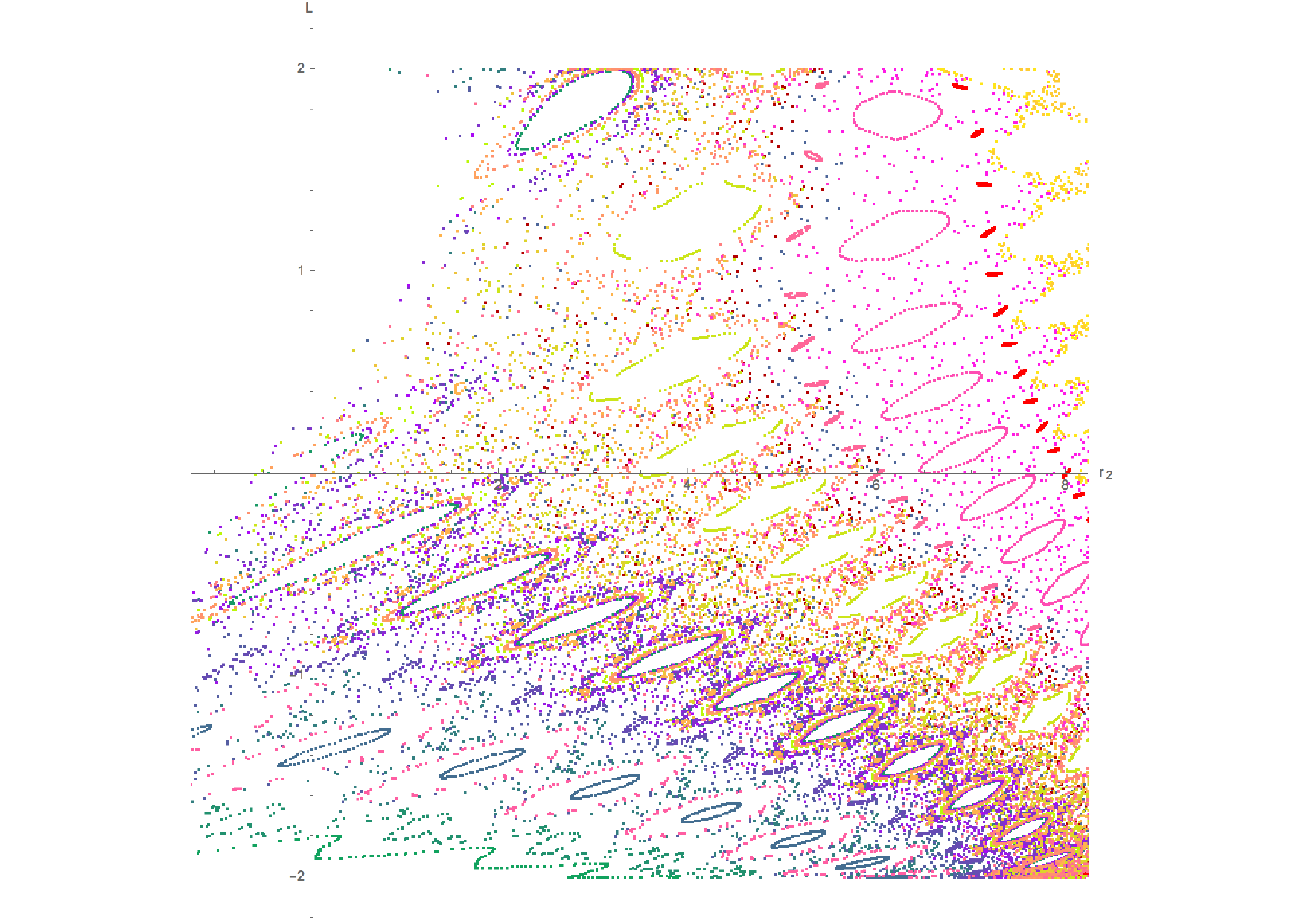}} \qquad
        \subfloat[][$E_{mov}=-5$]
           {\includegraphics[width=.27\textwidth]{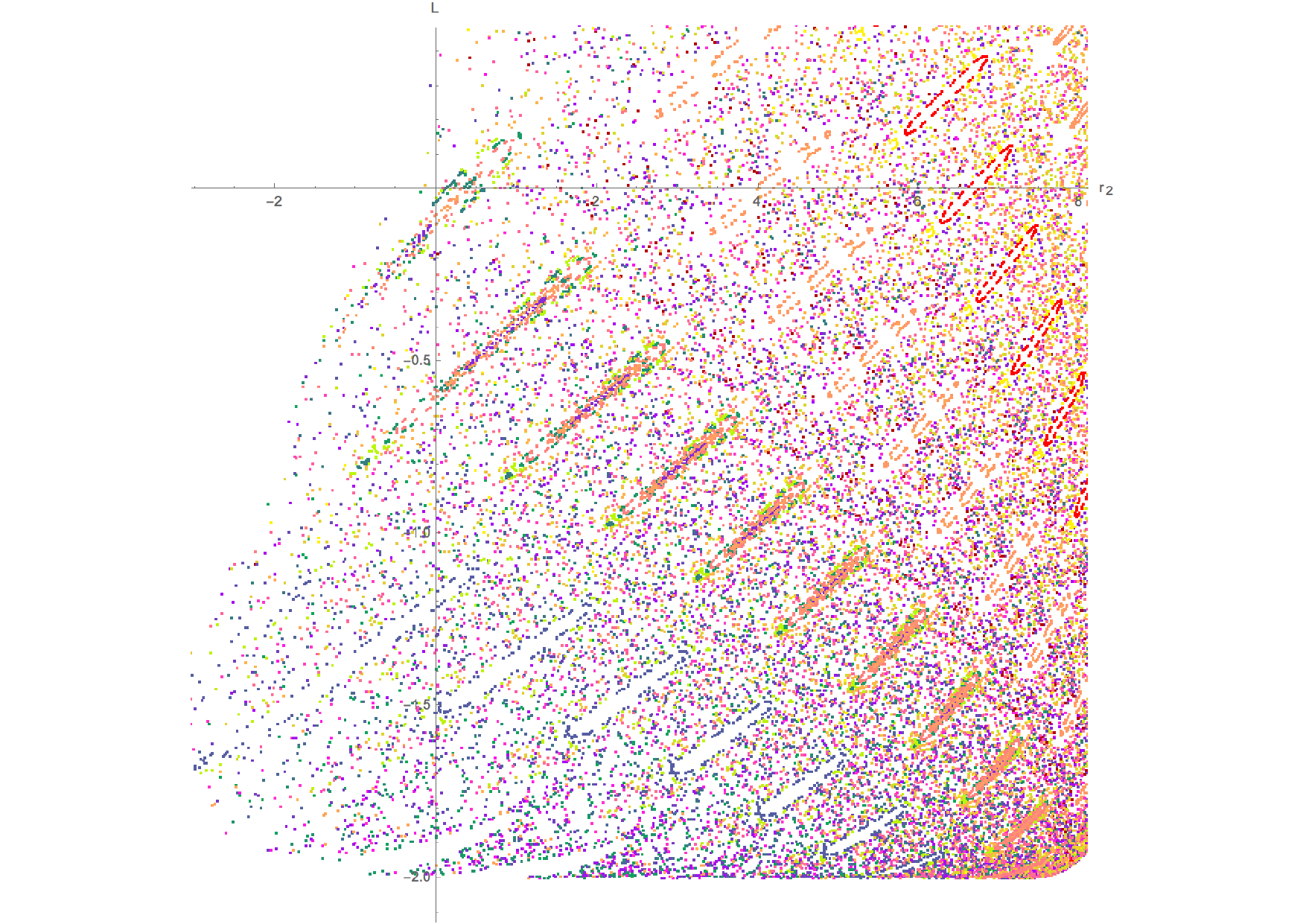}}
             \put (-276,55) {\small{$s_2$}}
              \put (-360,100) {\small{$L$}}
              \put (-144,55) {\small{$s_2$}}
              \put (-230,100) {\small{$L$}}
               \put (-5,90) {\small{$s_2$}}
              \put (-80,100) {\small{$L$}}
        \caption{Poincar\'e map of the homogeneous sphere with a cat's toy mechanism
         rolling on a uniformly rotating plane for different values of the moving energy $E_{mov}$. 
         The system parameters were taken as
        $I=1,\: \eta=1,\: \sigma=1,\: m=1,\: r=2$. The momentum first integrals
        were fixed as  $G=\|\MM\|=2$, $f=\langle \MM, \Gg \rangle =0$ and 
        the section was defined putting $g=\frac{\pi}{4}$. }
        \label{fig:poincare}
    \end{figure}

\appendix
\section{Appendix}
\label{appendix}
We provide a general framework explaining the existence of the first integrals in proposition \ref{prop:firstintechapshpere}. Throughout this appendix we assume that we are given a configuration manifold $Q$, a Lagrangian $L:TQ\rightarrow\mathbb{R}$ and a (linear) constraint distribution $\D\subset TQ$ specifying some nonholonomic constraints. We are interested in determining conditions for the existence of first integrals for the nonholonomic system with Lagrangian $L$ and affine nonholonomic constraints described by the affine distribution $\A=\D+Z$ where $Z$ is a given vector field on $Q$.

\subsection{ Affine nonholonomic Noether's theorem }
\label{App:Noether}
Let $\Psi:G\times Q\rightarrow Q$ be an action of the Lie group $G$ on $Q$ and $\Psi^{TQ}:G\times TQ\rightarrow TQ$ be the lifted action. Let $\xi_Q$ be the infinitesimal generator of the $\Psi$-action on $Q$ corresponding to an element $\xi$ of the Lie algebra $\mathfrak{g}$ of $G$ (i.e. $\xi_Q(q)=\left . \frac{d}{dt} \right |_{t=0}\mathrm{exp}(\xi t)\cdot q \in T_qQ$). Finally, let $J_{\xi}:TQ\rightarrow\R$ be the momentum component   in the direction of $\xi$, namely,
$$J_\xi(q,\dot{q})=\left \langle  \frac{\partial L}{\partial \dot{q}}(q,\dot{q}) \, , \, \xi_Q(q) \right \rangle.$$

The following  well-known result is sometimes referred to as  nonholonomic Noether's theorem \cite{Arnold, Bloch, Fasso2007}. 
\begin{proposition}
\label{prop:noether}
If the Lagrangian $L$ is invariant under the lifted action $\Psi^{TQ}$ and $\xi\in\mathfrak{g}$ is such that $\xi_Q(q)\in\mathcal{D}_q$ for all $q\in Q$ (i.e. $\xi$ is a {\em horizontal symmetry}), then 
$J_\xi|_{\mathcal{D}}$ is a first integral of the nonholonomic system $(L,Q,\D)$.
\end{proposition}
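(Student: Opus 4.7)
The plan is to compute the time derivative of $J_\xi$ along a trajectory of the nonholonomic system and show it vanishes using (i) the Lagrange--d'Alembert equations, (ii) the infinitesimal invariance of $L$, and (iii) the horizontality hypothesis $\xi_Q(q)\in \D_q$. Working in local coordinates $(q,\dot q)$, the equations of motion for the nonholonomic system $(L,Q,\D)$ read
\begin{equation*}
\frac{d}{dt}\frac{\partial L}{\partial \dot q}-\frac{\partial L}{\partial q}=\lambda,\qquad (q,\dot q)\in \D,
\end{equation*}
where the reaction force $\lambda\in T^*_qQ$ annihilates $\D_q$ at each point (Lagrange--d'Alembert principle).

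The next step is to differentiate $J_\xi(q,\dot q)=\langle \partial L/\partial \dot q,\xi_Q(q)\rangle$ along a trajectory via the product rule, obtaining
\begin{equation*}
\frac{dJ_\xi}{dt}=\left\langle \frac{d}{dt}\frac{\partial L}{\partial \dot q}\,,\,\xi_Q(q)\right\rangle+\left\langle \frac{\partial L}{\partial \dot q}\,,\,\frac{\partial \xi_Q}{\partial q}(q)\,\dot q\right\rangle.
\end{equation*}
Substituting the Lagrange--d'Alembert equations in the first term yields
\begin{equation*}
\frac{dJ_\xi}{dt}=\left\langle \frac{\partial L}{\partial q}\,,\,\xi_Q(q)\right\rangle+\left\langle \frac{\partial L}{\partial \dot q}\,,\,\frac{\partial \xi_Q}{\partial q}(q)\,\dot q\right\rangle+\langle \lambda,\xi_Q(q)\rangle.
\end{equation*}

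The third step is to invoke the infinitesimal invariance of $L$: differentiating $L\circ \Psi^{TQ}_{\exp(t\xi)}=L$ in $t$ at $t=0$ produces precisely the Noether identity
\begin{equation*}
\left\langle \frac{\partial L}{\partial q}\,,\,\xi_Q(q)\right\rangle+\left\langle \frac{\partial L}{\partial \dot q}\,,\,\frac{\partial \xi_Q}{\partial q}(q)\,\dot q\right\rangle=0,
\end{equation*}
so the first two terms cancel and only $\langle\lambda,\xi_Q(q)\rangle$ remains. The horizontality hypothesis then makes the conclusion immediate: since $\xi_Q(q)\in \D_q$ and $\lambda$ annihilates $\D_q$, this pairing vanishes and $\dot J_\xi=0$ along every trajectory.

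I expect no serious obstacle; the only delicate point is making sure the Noether identity is stated in its ``tangent--lifted'' form above, rather than the more common formulation for unconstrained holonomic Lagrangians, so that it applies pointwise on $TQ$ (and in particular on $\D$) without first invoking the Euler--Lagrange equations. Once that identity is in place, the argument is purely algebraic and does not require any compatibility between $Z$ and the symmetry, which is why the statement is genuinely about the underlying linear distribution $\D$ even though in our paper it will later be used in the affine setting $\A=\D+Z$.
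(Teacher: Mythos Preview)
Your proof is correct and follows the standard argument for the nonholonomic Noether theorem. The paper does not actually prove this proposition: it states it as a well-known result and cites references (Arnold--Kozlov--Neishtadt, Bloch, and Fass\`o--Ramos--Sansonetto), so there is no proof in the paper to compare yours against. Your closing remark about the affine setting slightly overreaches---Proposition~\ref{prop:noether} is purely about the linear system $(L,Q,\D)$, and the extension to $\A=\D+Z$ is the content of the separate Proposition~\ref{prop:affnoether}---but this does not affect the validity of what you wrote.
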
 

This result admits the following immediate generalization to the affine case and is a particular instance of proposition 2 in \cite{Fasso2015}.
\begin{proposition}
\label{prop:affnoether}
Let  $Z\in\mathfrak{X}(Q)$ be any vector field and consider the affine distribution $\mathcal{A}=\mathcal{D}+Z\subset TQ$. Under the same hypothesis of proposition \ref{prop:noether}, the function $J_{\xi}|_{\A}$ is a first integral of the nonholonomic system determined by $L$ and $\A$.
\end{proposition}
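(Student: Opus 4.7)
The plan is to proceed by a short direct computation that mimics the proof of the linear nonholonomic Noether's theorem (Proposition \ref{prop:noether}), exploiting the crucial observation that the Lagrange--d'Alembert reaction forces for an affine constraint distribution $\mathcal{A}=\mathcal{D}+Z$ still belong to the annihilator of the linear part $\mathcal{D}$, not to the annihilator of $\mathcal{A}$. This is because virtual displacements are defined using $\mathcal{D}$, and it is precisely what makes the horizontal symmetry hypothesis $\xi_Q(q)\in\mathcal{D}_q$ (rather than $\xi_Q(q)\in\mathcal{A}_q$) sufficient in the affine case.

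First I would write the equations of motion in the Lagrange--d'Alembert form: a curve $t\mapsto q(t)$ in $Q$ with $(q(t),\dot q(t))\in\mathcal{A}$ is a motion of the system if there exists a covector field $\lambda(t)\in\mathcal{D}^{\circ}_{q(t)}$ (annihilator of $\mathcal{D}$) such that
\begin{equation*}
\frac{d}{dt}\frac{\partial L}{\partial \dot q}-\frac{\partial L}{\partial q}=\lambda.
\end{equation*}
Then I would differentiate $J_\xi$ along a motion and apply the Leibniz rule to obtain
\begin{equation*}
\frac{d}{dt}J_\xi(q,\dot q)=\Big\langle \tfrac{d}{dt}\tfrac{\partial L}{\partial \dot q}\,,\,\xi_Q(q)\Big\rangle+\Big\langle \tfrac{\partial L}{\partial \dot q}\,,\,T\xi_Q(q)\cdot \dot q\Big\rangle=\Big\langle \tfrac{\partial L}{\partial q}+\lambda,\xi_Q(q)\Big\rangle+\Big\langle \tfrac{\partial L}{\partial \dot q}\,,\,T\xi_Q(q)\cdot \dot q\Big\rangle.
\end{equation*}

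Next I would use the assumed invariance of $L$ under the lifted action $\Psi^{TQ}$. Differentiating the identity $L\circ\Psi^{TQ}_{\exp(t\xi)}=L$ at $t=0$ yields the standard infinitesimal invariance identity
\begin{equation*}
\Big\langle \tfrac{\partial L}{\partial q}(q,\dot q),\xi_Q(q)\Big\rangle+\Big\langle \tfrac{\partial L}{\partial \dot q}(q,\dot q),T\xi_Q(q)\cdot \dot q\Big\rangle=0,
\end{equation*}
which, substituted into the previous display, reduces the computation to $\dot J_\xi=\langle \lambda,\xi_Q(q)\rangle$. The horizontal symmetry hypothesis $\xi_Q(q)\in\mathcal{D}_q$ combined with $\lambda\in\mathcal{D}^\circ$ then forces the right-hand side to vanish, so $J_\xi$ is constant along solutions contained in $\mathcal{A}$.

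The only real conceptual point is the first step: justifying that the reaction force lives in $\mathcal{D}^\circ$ and not in $\mathcal{A}^\circ$, since this is what drives the entire argument and is the single place where the affine case genuinely differs from the linear one. Once this is accepted as the formulation of the Lagrange--d'Alembert principle for affine constraints (as used in the derivation of Proposition \ref{prop:eqofmotion} earlier in the paper, and as in \cite{Fasso2015}), the remaining steps are straightforward and identical to those of the linear nonholonomic Noether theorem. There is no obstacle beyond stating this formulation clearly; I would devote one sentence to it and then let the short calculation above finish the proof.
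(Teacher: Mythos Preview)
Your proof is correct and rests on exactly the same key observation as the paper: since the Lagrange--d'Alembert reaction force for the affine system lies in $\mathcal{D}^\circ$ and the horizontal symmetry hypothesis gives $\xi_Q(q)\in\mathcal{D}_q$, the reaction force annihilates $\xi_Q$. The paper does not spell out the computation but simply notes this observation and refers to proposition~2 in \cite{Fasso2015}; your argument is the natural unfolding of that remark.
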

The key observation to connect this result to proposition 2 in \cite{Fasso2015} is that the vector field $\xi_Q$ is annihilated by the reaction force by the assumption that $\xi_Q(q)\in\D_q$ for all $q\in Q$.

\subsection{Affine LR systems }
\label{App:LR}
Now suppose that  $Q=G$ is a Lie group and the action $\Psi$ of the previous section is left multiplication. The invariance of $L$
under the lifted action $\Psi^{TQ}$ is usually called \textit{left invariance}. In addition, we assume that the distribution $\D$ is right invariant (i.e. $\mathcal{D}_{gh}=T_g R_h(\D_g)$ for all $g,h\in G$, where $R_h:G\rightarrow G$ is right multiplication by $h$). These systems where introduced by Veselov and Veselova \cite{Veselov} and are termed \textit{LR systems}.

By right invariance we have $\D_g=T_e(\mathfrak{d})$ for all $g\in G$ where $\mathfrak{d}$ is the value of $\D$ at the identity $e\in G$, namely $\mathfrak{d}=\D_e\subset \mathfrak{g}$.
Non-integrability of $\D$ is equivalent to the condition that $\mathfrak{d}$ is not a subalgebra of $\mathfrak{g}$.
A direct consequence of Proposition \ref{prop:affnoether} is the following.
\begin{proposition}
\label{prop:AppLR}
Let $\xi\in\mathfrak{d}$, then $J_{\xi}|_{\A}$ is a first integral of the nonholonomic system determined by $L$ and $\A=\D+Z$ where $Z$ is any vector field on $G$.
\end{proposition}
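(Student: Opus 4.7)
The plan is to reduce the statement directly to Proposition \ref{prop:affnoether}; essentially nothing new needs to be proved beyond reconciling the two notions of right-invariance of $\mathcal{D}$ and the generator of the left multiplication action. Concretely, I would check the two hypotheses of Proposition \ref{prop:affnoether}: that $L$ is invariant under the lifted left action, which is given by assumption (this is the ``L'' in LR), and that the chosen $\xi$ is a horizontal symmetry, i.e. $\xi_G(g)\in\mathcal{D}_g$ for all $g\in G$.

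The second hypothesis is the only thing that requires verification, and it is an immediate computation. Writing $\Psi(h,g)=hg$ for left multiplication, the infinitesimal generator at $g\in G$ is
\begin{equation*}
\xi_G(g)=\left.\frac{d}{dt}\right|_{t=0}\exp(t\xi)\,g = T_e R_g(\xi),
\end{equation*}
where $R_g:G\to G$ denotes right multiplication by $g$. On the other hand, the right-invariance hypothesis on $\mathcal{D}$ reads $\mathcal{D}_g = T_e R_g(\mathfrak{d})$. Thus, for any $\xi\in\mathfrak{d}$,
\begin{equation*}
\xi_G(g)=T_e R_g(\xi)\in T_e R_g(\mathfrak{d})=\mathcal{D}_g\qquad\text{for all }g\in G,
\end{equation*}
which is exactly the horizontal symmetry condition. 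Applying Proposition \ref{prop:affnoether} yields that $J_\xi|_\mathcal{A}$ is a first integral of the system determined by $L$ and $\mathcal{A}=\mathcal{D}+Z$, for any vector field $Z$ on $G$.

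There is no real obstacle here; the content of the proposition is the conceptual observation that the generators of left translations coincide with the ``right-invariant extension'' of elements of $\mathfrak{g}$, so that right-invariance of $\mathcal{D}$ is exactly what guarantees that every $\xi\in\mathfrak{d}$ is a horizontal symmetry of the left-invariant Lagrangian. The only mild subtlety worth mentioning is that, crucially, the vector field $Z$ plays no role in this argument: the reaction force annihilates $\xi_G$ because $\xi_G$ lies in the linear part $\mathcal{D}$ of the affine distribution $\mathcal{A}$, independently of the affine shift $Z$. This is precisely the observation highlighted after Proposition \ref{prop:affnoether}, and it is what allows the same conservation law to persist for arbitrary affine terms $Z$, which is the feature exploited in Proposition \ref{prop:firstintechapshpere} for the Chaplygin sphere with arbitrary $V$ and $W$.
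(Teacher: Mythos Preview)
Your proposal is correct and follows essentially the same approach as the paper: both verify that $\xi\in\mathfrak{d}$ is a horizontal symmetry by computing $\xi_G(g)=T_eR_g(\xi)\in T_eR_g(\mathfrak{d})=\mathcal{D}_g$ via right-invariance, and then invoke Proposition~\ref{prop:affnoether}. Your additional commentary on the irrelevance of $Z$ is accurate and consistent with the remark following Proposition~\ref{prop:affnoether}.
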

\begin{proof}
It is easily seen that such $\xi$ is a horizontal symmetry. Indeed, 
\begin{equation*}
\xi_G(g)= \left . \frac{d}{dt} \right  |_{t=0}L_{\exp(\xi t)}g= \left . \frac{d}{dt} \right |_{t=0}R_g(\exp\xi t)=T_e R_g(\xi)\in\mathcal{D}_g,
\end{equation*}
where $L_{\exp(\xi t)}$ is left multiplication by $\exp(\xi t)$ (there should be no risk of confusion with the Lagrangian function, also
denoted by $L$).
\end{proof}

\section*{Acknowledgements}

We are grateful to F. Fass\`o for discussions and help in the implementation of the Poincar\'e map illustrated in Fig \ref{fig:poincare}.

\section*{Funding} 

LGN acknowledges support from the project MIUR-PRIN 2022FPZEES {\em Stability in Hamiltonian dynamics and beyond}.


\begin{thebibliography}{9}

\bibitem{Arnold}
 Arnol’d, V.I., Kozlov, V.V., Neishtadt, A.: {\em Mathematical Aspects of Classical and Celestial Mechanics. Dynamical Systems, III. Encyclopaedia Math. Sci.}, vol. 3, Berlin: Springer, 1993.

\bibitem{Balseiro} 
Balseiro, P., Sansonetto, N.: 
First integrals and symmetries of nonholonomic systems
{\em Arch. Ration. Mech. Anal.} {\bf  244 }, 343–389 (2022).

\bibitem{BatesHJ}  Bates, L. M., Fass\`o, F. Sansonetto, N.:  The Hamilton-Jacobi equation, integrability, and nonholonomic systems
{\em J. Geom. Mech.} {\bf 6}, 441--449 (2014).


 \bibitem{Bizyaev2019} 
 Bizyaev, I.A., Borisov, A.V., Mamaev, I.S.:  
 Different models of rolling for a robot ball on a plane as a generalization of the Chaplygin ball problem
{\em Regul. Chaotic Dyn.} {\bf  24} , 560--582 (2019).

\bibitem{borisov2018}
Bizyaev, I.A., Borisov, A.V.,  Mamaev, I.S.: 
Dynamics of the Chaplygin ball on a rotating plane. Russ. J. 
{\em Math. Phys.} \textbf{25}, 423–433 (2018).

 \bibitem{BKMM} 
 Bloch, Anthony M., Krishnaprasad, P. S., Marsden, Jerrold E., Murray, Richard M.: Nonholonomic mechanical systems with symmetry
{\em Arch. Rational Mech. Anal.} {\bf 136} , 21--99(1996).

\bibitem {Bloch}
Bloch, A.M.: {\em Nonholonomic Mechanics and Controls.} Interdisciplinary Applied Mathematics, vol. 24, Systems and Control, New-York: Springer–Verlag, 2003.
 
\bibitem{borisov2002}
Borisov, A. V., Mamaev, I. S.: 
The rolling motion of a rigid body on a plane and a sphere. Hierarchy of dynamics, 
{\em Regul. Chaotic Dyn.}, \textbf{7}, 177–200 (2002).

\bibitem{borisov2002-2} 
Borisov, A. V., Mamaev, I. S., Kilin, A. A.
Rolling of a ball on a surface. New integrals and hierarchy of dynamics
{\em Regul. Chaotic Dyn.}{\bf  7}, 201–219 (2002).

\bibitem{BoMaBi13}  
Borisov, A.~V.,  Mamaev,~I.S. and Bizyaev,~I.~A.,  
The hierarchy of dynamics of a rigid body rolling without slipping and spinning on a plane and a sphere.
{\em  Regul. Chaotic Dyn.} {\bf 18} 277--328 (2013).
    
\bibitem {borisov2015}
Borisov, A.V., Mamaev, I.S., Bizyaev, I.A.: 
The Jacobi integral in nonholonomic mechanics.
{\em  Regul. Chaot. Dyn.} \textbf{20}, 383–400 (2015). 

\bibitem {borisov2015-2}
Borisov, A.V., Mamaev, I.S.:
Symmetries and reduction in nonholonomic mechanics.
{\em Regul. Chaotic Dyn.} {\bf 20},553–604 (2015).

 \bibitem{Blackall}  
 Blackall,  C.~J., 
 On volume integral invariants of non-holonomic dynamical systems.  
 {\em Am. J. Math.}{\bf 63} 155--68 (1941).

\bibitem{Cantrijn} 
Cantrijn, F., de León, M., Marrero, J.C., de Diego, D.M.:
Reduction of nonholonomic mechanical systems with symmetries
{\em Rep. Math. Phys.}{\bf 42},25–45 (1998).

\bibitem{CaCoLeMa02}  
Cantrijn, F., Cort\'es, J., de Le\'on, M., de Diego, D.M.:  
On the geometry of generalized 
Chaplygin systems. {\em Math. Proc. Camb. Phil. Soc.} {\bf 132} 323--51 (2002).

\bibitem{Chaplygin}
Chaplygin, S.A.: On a motion of a heavy body of revolution on a horizontal plane. {\em Reg. Chaotic
Dyn.} \textbf{7} 131–148 (2002)  [original paper in Mathematical Collection of the Moscow Mathematical
Society, \textbf{24}, 139–168 (1903)].
	
\bibitem{Cushman}
Cushman, R., Duistermaat, J.J., Snyaticki, J.: 2010 \textit{Geometry of Nonholonomically Constrained
Systems (Advanced Series in Nonlinear Dynamics vol 26)} (Singapore: World Scientific).

\bibitem{thesis}
Costa Villegas, M. {\em Ph.D. Thesis.} University of Padua [in preparation].

\bibitem{fasso2022}
Dalla Via, M., Fassò, F.  Sansonetto, N.: On the Dynamics of a Heavy Symmetric Ball that Rolls Without Sliding on a Uniformly Rotating Surface of Revolution. {\em J Nonlinear Sci} \textbf{32}, 84 (2022).  

\bibitem{Earnshaw}
 Earnshaw, S.: Dynamics, or an Elementary Treatise on Motion, 3d edn. Deighton, Cambridge (1844).

 \bibitem{FassoGNMontaldi} 
 Fass\`o, F., Garc\'ia-Naranjo, L. C., Montaldi, J.: Integrability and dynamics of the $n$-dimensional symmetric Veselova top.
{\em J. Nonlinear Sci.} {\bf 29} (2019),  1205--1246.

\bibitem{Fasso2018} 
Fass\`o, F., Garc\'ia-Naranjo, L. C., Sansonetto, N.: Moving energies as first integrals of nonholonomic systems with affine constraints
{\em Nonlinearity} {\bf 31}, 755--782 (2018).

\bibitem{Fasso2008}
Fass\`o, F., Giacobbe, A., Sansonetto, N.:
Gauge conservation laws and the momentum equation in nonholonomic mechanics
{\em Rep. Math. Phys.} {\bf 62}, 345–367 (2008).

\bibitem{Fasso2016} 
Fass\`o, F.,  Sansonetto, N.:  Conservation of `moving' energy in nonholonomic systems with affine constraints and integrability of spheres on rotating surfaces. {\em J. Nonlinear Sci.} {\bf 26},  519--544 (2016).

\bibitem{Fasso2009}
Fass\`o, F., Sansonetto, N.:
An elemental overview of the nonholonomic Noether theorem
{\em Int. J. Geom. Methods Mod. Phys.} {\bf 6} , 1343–1355 (2009).

\bibitem{Fasso2015}
Fass\`o, F., Sansonetto, N.: Conservation of energy and momenta in nonholonomic systems with
affine constraints. {\em Regul. Chaotic Dyn}. {\bf 20}, 449-462 (2015).

\bibitem{Fasso2007}
Fass\`o, F., Ramos, A., Sansonetto, N.: The reaction-annihilator distribution and the nonholonomic
Noether theorem for lifted actions. {\em Regul. Chaotic Dyn.} {\bf 12}, 449--458 (2007).

\bibitem{FedGNMa2015} Fedorov, Y.~N.,   Garc\'ia-Naranjo, L.~C., Marrero, J.~C.:
Unimodularity and preservation of volumes in nonholonomic mechanics.
{\em J. Nonlinear Sci.} {\bf 25}  203--246 (2015).

\bibitem{FedJov04} Fedorov, Y.~N., Jovanovi\'c, B., Nonholonomic LR systems as generalized Chaplygin systems with an invariant measure and flows on homogeneous spaces.
{\em J. Nonlinear Sci.} {\bf 14}, 341--381 (2004).


\bibitem{LGNMontaldi}
García-Naranjo L.C., Montaldi J.: Gauge Momenta as Casimir functions of nonholonomic systems. {\em  Arch. Rat. Mech. Anal.} \textbf{228}, 563-602 (2018).

\bibitem{LGN2024}
García-Naranjo L.C., Ortega R.,Ure\~na A.J.: Invariant Measures as Obstructions to Attractors in Dynamical Systems and Their Role in Nonholonomic Mechanics.
{\em Regul. Chaotic Dyn}. (2024) https://doi.org/10.1134/S156035472456003X


\bibitem{JovaChap} Jovanovi\'c, B.: Hamiltonization and integrability of the Chaplygin sphere in $\R^n$. {\em J. Nonlinear Sci.} {\bf 20}  569--593, (2010).


\bibitem{kilin}
 Kilin, A.A., Pivovarova, E.N.: A Particular Integrable Case in the Nonautonomous Problem of a Chaplygin Sphere Rolling on a Vibrating Plane.{\em Regul. Chaot. Dyn.} \textbf{26}, 775–786 (2021).
 
\bibitem{Koiller} 
Koiller, J.: Reduction of some classical nonholonomic systems with symmetry {\em Arch. Rational Mech. Anal.} {\bf 118},113–148 (1992).

\bibitem{Koon}
Koon, W.S., Marsden, J.E.:
Poisson reduction for nonholonomic mechanical systems with symmetry
{\em Rep. Math. Phys.} {\bf 42}, 101–134 (1998).

\bibitem{Kozlov87}  
Kozlov, V.~V., On the existence of an integral invariant of a smooth dynamic system. 
{\em Prikl. Mat. Mekh.} {\bf 51} 538--545 (1987), 
{\em  J. Appl. Math. Mech.} {\bf  51} 420--426 (1987).
    
\bibitem{anais} 
Levy-Leblond, J.M.: The ANAIS billiard table. {\em Eur. J.Phys. }\textbf{7}, 252 (1986).

\bibitem{LewisMurray} 
Lewis, A. D., Murray, R. M., Variational principles for constrained systems: theory and experiment
{\em Internat. J. Non-Linear Mech.}  {\bf 30},  793--815 (1995).

\bibitem{MarsdenRatiuBook}  Marsden J. E.,  Ratiu T. S. {\em Introduction to Mechanics and Symmetry.} 
Texts in Applied Mathematics vol 17, New York: Springer, (1999). 


\bibitem{NeFu} 
Neimark, J. I., Fufaev, N.: A 1972 Dynamics of Nonholonomic Systems (Translations of Mathematical Monographs vol 33) (Providence, RI: American Mathematical Society)

\bibitem{Pars} 
Pars, L.A.: A Treatise on Analytical Dynamics. Heinemann, London (1965)

\bibitem{Stanchenko89} 
Stanchenko S., Nonholonomic Chaplygin systems  {\em Prikl. Mat. Mekh.} {\bf 53} 16--23
(see also English transl. in {\em  J. Appl. Math. Mech.} {\bf 53} 11--7 (1989)).

\bibitem{tokieda}  
Tokieda, T.: Roll Models. {\em The American Mathematical Monthly.} \textbf{120}, 3, 265-282  (2013).

\bibitem{Veselov}
Veselov, A. P., Veselova, L. E.: Flows on Lie groups with nonholonomic constraint and integrable non-Hamiltonian systems. {\em Funkt. Anal. Prilozh.} {\bf 20}, 65–66 (Russian); English trans.: {\em Funct. Anal. Appl.} {\bf 20}, 308–309 (1986).

\bibitem{ZeBloch03} 
Zenkov, D.~V. and Bloch, A.~M., 
Invariant measures of nonholonomic flows with internal degrees of freedom.
{\em Nonlinearity} {\bf 16} 1793--1807 (2003).
	
\end{thebibliography}
\end{document}